  \providecommand\BibTeX{{%
    \normalfont B\kern-0.5em{\scshape i\kern-0.25em b}\kern-0.8em\TeX}}}
  \providecommand\BibTeX{{%
    \normalfont B\kern-0.5em{\scshape i\kern-0.25em b}\kern-0.8em\TeX}}}
\pgfplotsset{compat=1.18}
\newtheorem{problem}{Problem}
\newtheorem{mydef}{Definition}[section]
\newcommand{\defeq}{\overset{\text{\tiny def}}{=}}
\newcommand{\GSI}{\texttt{GSI}}
\renewcommand\footnotetextcopyrightpermission[1]{}
\definecolor{add}{rgb}{0.2,0.4,0.6}
\begin{document}

\title{Efficient Historical Butterfly Counting in Large Temporal Bipartite Networks via Graph Structure-aware Index}




\textfloatsep 1mm plus 1mm \intextsep 1mm plus 1mm

\author{Qiuyang Mang}
\authornotemark[1]
\affiliation{%
  \institution{CUHK-Shenzhen}
\country{}
}
\email{qiuyangmang@link.cuhk.edu.cn}

\author{Jingbang Chen}
\authornotemark[1]
\affiliation{%
\country{}
\institution{University of Waterloo}
}
\email{j293chen@uwaterloo.ca}

\author{Hangrui Zhou}
\authornote{The first three authors contributed equally to this research.}
\affiliation{%
\country{}
\institution{Tsinghua University}
}
\email{zhouhr23@mails.tsinghua.edu.cn}

\author{Yu Gao}
\affiliation{%
  \institution{Independent}
\country{}
}
\email{ygao2606@gmail.com}

\author{Yingli Zhou}
\affiliation{%
  \institution{CUHK-Shenzhen}
\country{}
}
\email{yinglizhou@link.cuhk.edu.cn}

\author{Qingyu Shi}
\affiliation{%
  \institution{Independent}
\country{}
}
\email{qingyuqwq@gmail.com}

\author{Richard Peng}
\affiliation{%
  \institution{Carnegie Mellon University}
\country{}
}
\email{yangp@cs.cmu.edu}

\author{Yixiang Fang}
\affiliation{%
  \institution{CUHK-Shenzhen}
\country{}
}
\email{fangyixiang@cuhk.edu.cn}

\author{Chenhao Ma}
\affiliation{%
  \institution{CUHK-Shenzhen}
\country{}
}
\email{machenhao@cuhk.edu.cn}




\captionsetup{font=small}
\setlength{\abovecaptionskip}{2.5pt}   
\setlength{\belowcaptionskip}{3pt}   
\setlength{\textfloatsep}{2pt}
\setlength{\floatsep}{2pt}
\everymath{\small}       
\everydisplay{\small}   

\newcommand{\todo}[1]{\textcolor{red}{[todo: #1]}}
\newcommand{\jb}[1]{\textcolor{blue}{[cjb: #1]}}
\newcommand{\yu}[1]{\textcolor{green}{[gy: #1]}}
\newcommand{\mqy}[1]{\textcolor{violet}{[mqy: #1]}}
\newcommand{\mqytext}[1]{\textcolor{violet}{#1}}
\newcommand{\hhz}[1]{\textcolor{teal}{[hhz: #1]}}
\newcommand{\mch}[1]{\textcolor{purple}{[mch: #1]}}
\newcommand{\zhou}[1]{\textcolor{magenta}{[zhou: #1]}}
\newcommand{\pr}[2]{\mathbb{P}_{#1}\left[#2\right]}
\newcommand{\expec}[2]{\mathbb{E}_{#1}\left[#2\right]}
\newcommand{\Otil}{\widetilde{O}}
\newcommand{\modify}[1]{{#1}}
\begin{abstract}
Bipartite graphs are ubiquitous in many domains, e.g., e-commerce platforms, social networks, and academia, by modeling interactions between distinct entity sets. Within these graphs, the butterfly motif, a complete 2$\times$2 biclique, represents the simplest yet significant subgraph structure, crucial for analyzing complex network patterns. Counting the butterflies offers significant benefits across various applications, including community analysis and recommender systems. 
Additionally, the temporal dimension of bipartite graphs, where edges activate within specific time frames, introduces the concept of historical butterfly counting, i.e., counting butterflies within a given time interval. This temporal analysis sheds light on the dynamics and evolution of network interactions, offering new insights into their mechanisms. 
Despite its importance, no existing algorithm can efficiently solve the historical butterfly counting task. To address this, we design two novel indices whose memory footprints are dependent on \#butterflies and \#wedges, respectively. Combining these indices, we propose a graph structure-aware indexing approach that significantly reduces memory usage while preserving exceptional query speed.
To further reduce the index size and boost the query efficiency, we design an index compression strategy, enabling the fast, high-quality, and unbiased approximation of historical butterfly counts.
We theoretically prove that our approach is particularly advantageous on power-law graphs, a common characteristic of real-world bipartite graphs, by surpassing traditional complexity barriers for general graphs.
Extensive experiments reveal that our query algorithms outperform existing methods by up to five magnitudes, effectively balancing speed with manageable memory requirements.
\end{abstract}

\maketitle

\section{INTRODUCTION}


Due to its ability to model relationships between two distinct sets of entities, the bipartite graph, or network, holds significant importance across various fields, including disease control on people-location networks \cite{eubank2004modelling,pavlopoulos2018bipartite}, fraud detection on user-page networks \cite{wang2019vertex,liu2019efficient} and recommendation on customer-product networks \cite{wang2006unifying,he2017neural,he2020lightgcn,wang2020efficient,wu2022graph}.
To analyze the structure and dynamics of the network, counting motifs is one of the most popular methods \cite{milo2002network,wang2020efficient,fang2020survey,jha2013space,eubank2004modellingm} since motifs are considered the basic construction block of the network. The \textit{butterfly} motif ($2 \times 2$ biclique) represents fundamental interaction patterns within the graph. Counting it has wide applications ranging from biological ecosystems \cite{wang2019vertex,sanei2018butterfly,wang2014rectangle}, where it helps in identifying mutual relationships between species, to social networks \cite{zou2016bitruss,wang2020efficient}, where it uncovers patterns of collaborations and affiliations.

Temporal bipartite graphs, in which edges typically carry timestamps, are often considered \cite{cai2024efficient, chen2021efficiently, eubank2004modellingm, pavlopoulos2018bipartite} since real-world interactions (modeled as edges) usually occur at specific timestamps.
Recently, Cai et al. \cite{cai2024efficient} first considered counting butterflies on temporal bipartite graphs, which extends the analytical depth of traditional bipartite graph analyses by incorporating the dimension of time, making it a powerful tool for uncovering dynamic patterns in complex systems.

\begin{figure}[t!]
    \centering
    \includegraphics[width=0.8\linewidth]{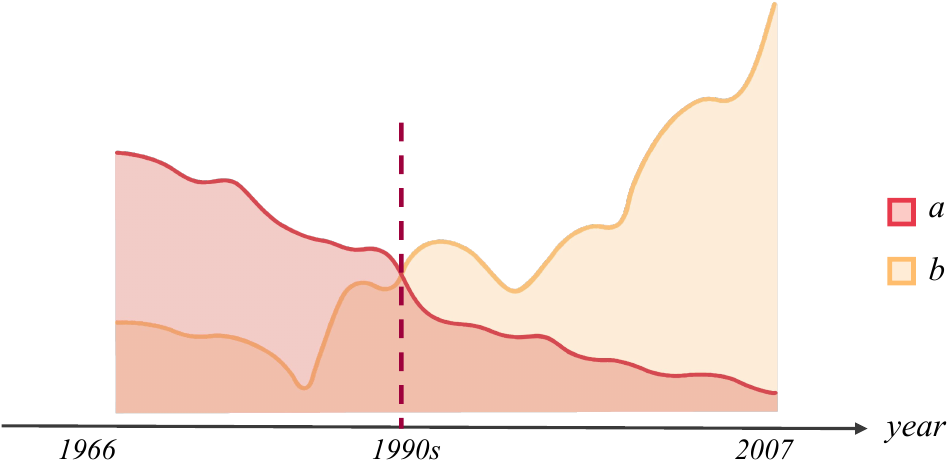}
    \caption{Jim Gray's activeness in database community (a) and astronomy (b) community.}
    \label{fig:intro_plot}
\end{figure}

However, merely counting butterflies across the entire timeline, as suggested by \cite{cai2024efficient}, may not accurately reflect the dynamic nature of relationships, failing to capture evolving trends. To address this, it is essential to consider the temporal dimension of interactions. By focusing on the occurrence of motifs within specified time frames, we introduce the concept of \textit{historical butterfly counting}. This approach, which involves analyzing butterflies within discrete time windows on a temporal bipartite graph, offers enhanced insights into the timing and progression of interactions.
It provides an in-depth understanding of network dynamics, uncovering the mechanisms behind network evolution and revealing opportunities for precise interventions across various domains.

\modify{Consider the example shown in Figure \ref{fig:intro_plot}. By applying historical butterfly counting, we capture the emerging and diminishing trends in Jim Gray’s research interests and his level of engagement within the database and astronomy communities. If we were to ignore the time dimension and simply count butterflies over the entire period from 1966 to 2007, we would obtain only two cumulative butterfly counts for each community. This aggregate approach would indicate Jim’s activity across both fields but fail to reveal his gradual shift in focus over time.}
\modify{In contrast, historical butterfly counting provides a more detailed view, allowing us to pinpoint specific periods of change, such as the gradual shift from databases to astronomy around the 1990s.}
\modify{Hence, the historical butterfly counts offer a more granular view of evolving communities and reveal periods of gradual or sudden changes.}

\modify{Regarding the approximate solution, we brief that in many applications of butterfly counting—such as graph kernel analysis~\cite{sheshbolouki2022sgrapp} and network measurement tasks like detecting dense substructures in recommendation systems~\cite{sanei2018butterfly, wang2014rectangle}—an approximate count is often sufficient. Here, we expand some discussions:}

\modify{In {\bf recommendation systems}, butterfly counting can reveal dense substructures among users or items, such as groups of users with overlapping interests or items frequently co-purchased. For example, approximate butterfly counts can be used to identify clusters of products that are likely to be bought together, enhancing cross-selling recommendations without needing perfectly exact counts. Given the massive scale of data in e-commerce or streaming platforms, allowing a small error margin can lead to substantial memory savings, enabling real-time processing and improved system scalability.}

\modify{Approximate butterfly counts are also valuable in {\bf graph kernel methods}, where motifs are used to create similarity measures between graphs for tasks such as classification or anomaly detection. Since graph kernels rely on relative similarities rather than exact counts, approximate butterfly counts are sufficient to maintain kernel performance while reducing memory usage and computational cost. This approach enables efficient similarity computations in fields like bioinformatics (e.g., comparing protein interaction networks) or cybersecurity (e.g., identifying similar attack patterns across networks).}

\modify{\hspace{-1em}\textbf{\textit{Applications}}
We now discuss some other interesting applications of historical butterfly counting:}

$\bullet$ \textbf{Bipartite Clustering Coefficient (BCC) Computation.} The \emph{bipartite clustering coefficient}~\cite{lind2005cycles,aksoy2017measuring, opsahl2013triadic} is a traditional cohesiveness measure for bipartite graphs whose computing bottleneck is counting butterflies.
Specifically, considering the scientific collaboration network (modeled as a temporal graph) in the given time window, a higher BCC suggests a stronger trend of cohesive collaboration within the research community. This coefficient positively correlates with the number of author pairs publishing multiple publications within a given time-window and inversely correlates with the number of author pairs collaborating only once.
For example, in academia, a scientist may change their frequent collaborators or research interests over time. 
We also provide a case study on the global research collaboration trend and close collaboration time windows in \cref{sec:casestudy} with BCC computation.

$\bullet$ \textbf{Identifying Close Communication Time-windows.}
Directly counting butterflies can identify time windows during close communication within a specific community.
For instance, we are interested in close collaboration.
Among Kaiming He's 2-hop ego networks w.r.t. all possible time-windows, we find that the two-year time-window with the highest butterfly count is 2010 - 2011, with the corresponding research records detailed in \Cref{fig:case-study-2}. During this period, Kaiming He, Jian Sun, Xiaoou Tang, Carsten Rother, and Christoph Rhemann established a series of cohesive research collaborations, resulting in the publication of three papers. In other words, \Cref{fig:case-study-2} depicts Kaiming's 2-hop ego network during 2010 - 2011, the two-year time interval with the highest butterfly count. Notably, 2010 and 2011, the last two years of Kaiming's Ph.D. under supervisor Xiaoou, marked his most cohesive collaborations.

\begin{figure}[t!]
    \centering
    \includegraphics[width=0.8\linewidth]{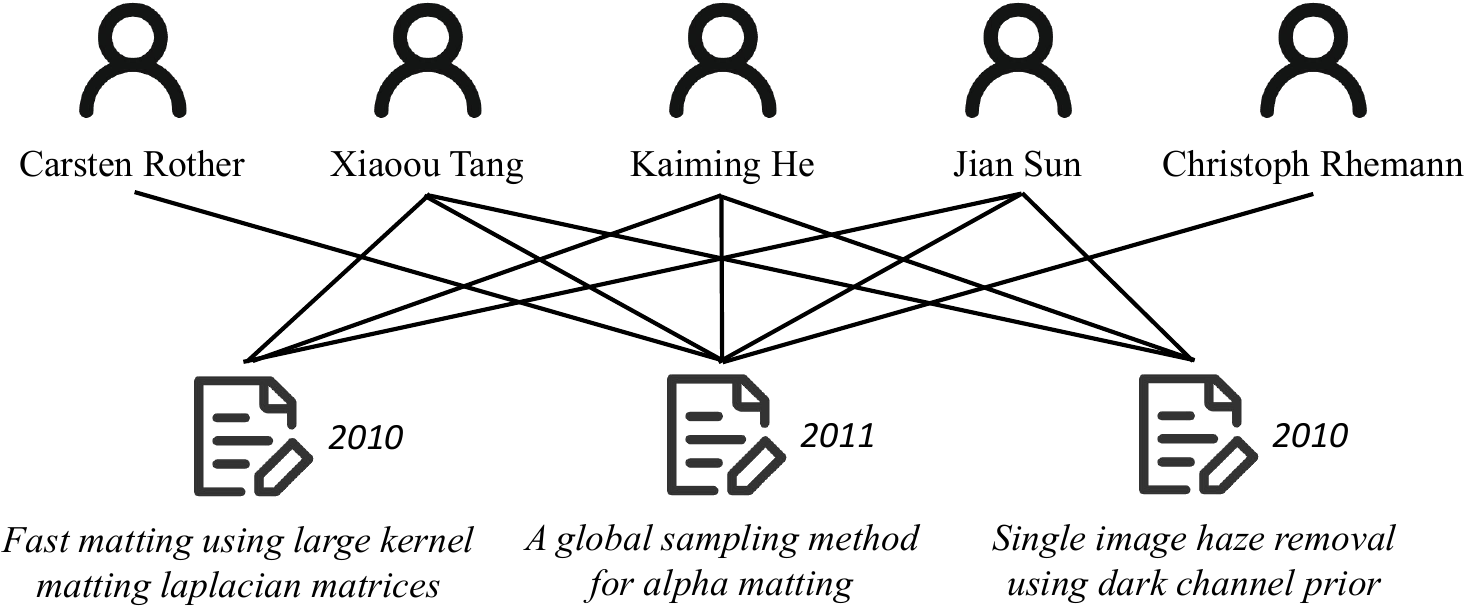}
    \caption{Finding the time-window of the closest collaboration.}
    \label{fig:case-study-2}
\end{figure}



\hspace{-1em}\textbf{\textit{Challenges and Contributions}}
Counting butterflies in such a historical setting is challenging. One main reason is that the algorithm should be able to answer historical queries multiple times to analyze the changing trend.
When the graph is large, it is inefficient to run existing butterfly counting algorithms from scratch for each query. Thus, we need to design algorithms that answer each query more efficiently after preprocessing. No algorithm can effectively solve this problem in existing works. This paper fills this hole by proposing a new index algorithm with consistently high performance in large-scale graphs. The proposed algorithm \texttt{GSI} (graph structure-aware index) can take advantage of graph structures and balance between query time and memory usage, enabling it to outperform previous butterfly counting algorithms on both real-world data, and synthetic data with certain distributions. \texttt{GSI} can also be parallelized, providing faster query efficiency. It can also be compressed, providing smaller memory usage. To summarize, we have made the following contributions.
\begin{itemize}[leftmargin=*]
    \item We introduce the historical butterfly counting problem and prove its hardness. This enables in-depth trend and dynamics analysis over temporal graphs, and helps understand how temporal variations influence network structures over time.
    \item The graph structure-aware index (\texttt{GSI}) is designed to support efficient counting query with a controllable balance between query time and memory usage. Theoretically, we prove that the \texttt{GSI} approach transcends conventional computational complexity barriers associated with general graphs when applied to power-law graphs, a common characteristic of many real-world graphs.
    \item When \modify{memory becomes the bottleneck and} exact counting is not required, we propose an index compression strategy to provide fast, high-quality, and unbiased approximations of the counts based on the compressed index.
    \item Extensive experiments demonstrate that our query algorithm achieves up to five orders of magnitude speedup over the state-of-the-art solutions with manageable memory footprints.
\end{itemize}






\vspace{-1.6em}
\section{RELATED WORK}
\label{sec:related}

In this section, we review the related works, including the butterfly counting on static bipartite graphs, motif counting on temporal graphs, and other historical queries on temporal graphs.

$\bullet$ \textbf{Butterfly Counting on Static Bipartite Graphs.} Butterfly is the most fundamental sub-structure in bipartite graphs.
Significant research efforts have been dedicated to the study of counting and enumerating butterflies on static bipartite graphs \cite{wang2019vertex,sanei2018butterfly, wang2014rectangle}.
Wang et al. \cite{wang2014rectangle} first proposed the butterfly counting problem and designed an algorithm by enumerating wedges from a randomly selected layer.
Sanei-Mehri et al. \cite{sanei2018butterfly} further developed a strategy for choosing the layer to obtain better performance.
Recently, Wang et al. \cite{wang2019vertex} utilized the vertex priority and cache optimization to achieve state-of-the-art efficiency.
Additionally, the parallel algorithms \cite{sanei2018butterfly,shi2022parallel}, I/O efficient algorithm \cite{wang2023efficient}, sampling-based algorithms \cite{sanei2018butterfly,li2021approximately,sheshbolouki2022sgrapp}, GPU-based algorithm \cite{sheshbolouki2022sgrapp,xia2024gpu}, and batch update algorithm \cite{wang2023accelerated} have also been developed for the butterfly counting problem.
In addition, the butterfly counting problems in steam, uncertain, and temporal bipartite graphs have also been studied~\cite{cai2024efficient,sheshbolouki2022sgrapp,sanei2019fleet, zhou2021butterfly}.
%

$\bullet$ \textbf{Motif Counting on Temporal Graphs.} The problem of temporal motif counting has been extensively studied recently \cite{boekhout2019efficiently,gurukar2015commit,li2018temporal,liu2021temporal,kovanen2011temporal}.
Kovanen et al. \cite{kovanen2011temporal} introduced the concept of $\Delta$-adjacency, which pertains to two temporal edges sharing a vertex and having a timestamp
difference of at most  $\Delta$, and consider the temporal ordering aspect. 
The $\Delta$-temporal motif counting with \cite{pashanasangi2021faster,redmond2013temporal} and without temporal ordering \cite{paranjape2017motifs} are also been studied. 
Furthermore, there are numerous approximation algorithms available for solving counting problems \cite{liu2019sampling,sarpe2021oden}. When it comes to enumeration problems, isomorphism-based algorithms are
the most commonly used \cite{li2019time,locicero2021temporalri}.

$\bullet$ \textbf{Other Historical Queries on Temporal Graphs.} The historical queries on the temporal graphs aim to compute the specific structure in the snapshot of an arbitrary
time window.
The historical reachability~\cite{wen2022span}, $k$-core~\cite{yu2021querying}, structural diversity~\cite{chenquerying}, and connected components~\cite{xie2023querying} of temporal graphs have been defined, and index-based solutions have also been proposed.
Note that our work is significantly different with \cite{cai2024efficient}, as our work is focused on more generalized scenarios of temporal graph mining, i.e., analyzing the relationship between vertices in a time interval and without any temporal ordering limitations.

\section{Preliminaries}
\label{sec:prelim}

Throughout the paper, we use the big-O notation to express the upper bounds of the \textit{absolute value}s of functions. For example, $O(\frac{1}{n})$ includes functions between $-\frac{c}{n}$ and $\frac{c}{n}$ for some constant $c$ and sufficiently large $n$. We say $f(n) \ge g(n)+O(h(n))$ if $f(n)\ge g(n)-c|h(n)|$ for some constant $c$ and sufficiently large $n$. 
We use $\widetilde{O}(f(n))$ to denote all functions bounded by $O(f(n)\log^cn)$ for some constant $c.$ 
We use $[n]$ to denote the set $\{1, \ldots, n\}.$
For an undirected graph, we denote every vertex $u$'s degree as $deg_u$.


\subsection{Problem Definitions}
\begin{figure}[t!]
    \centering
    \includegraphics[width=1\linewidth]{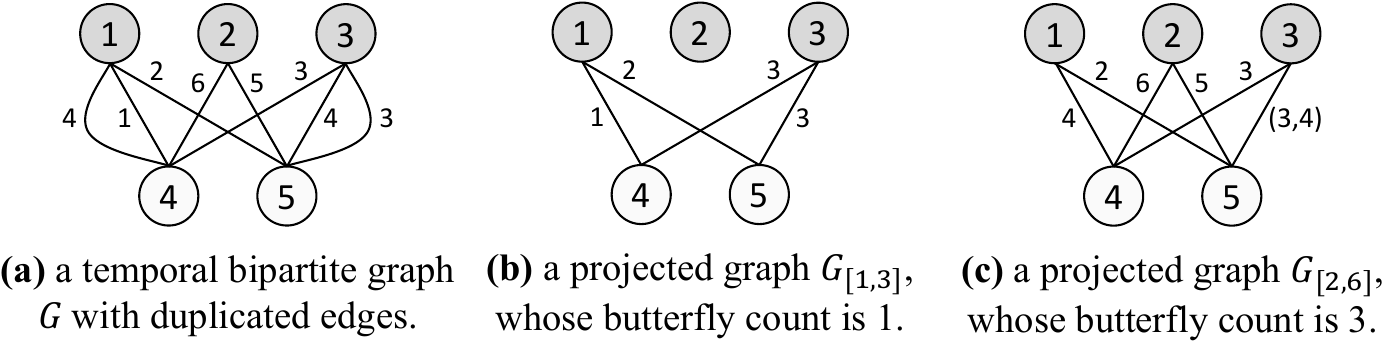}
    \caption{A temporal bipartite graph and its projected graphs in two time-windows, associated with their butterfly counts.}
    \label{fig:TG-PG-BC}
\end{figure}

Two common motifs are being widely studied on bipartite graphs: wedges and butterflies. We give their formal definition as follows:

\begin{mydef}[Wedge~\cite{wang2019vertex}]
Given a bipartite graph $G = (V = (U, L), E)$, a wedge $\left \langle x \leadsto y \leadsto z \right \rangle$ is a 2-hop path consisting of edges $(x, y)$ and $(y, z)$.
\end{mydef}

\begin{mydef}[Butterfly~\cite{wang2019vertex}]

Given a bipartite graph $G = (V = (U, L), E)$, and the four vertices $x, y, z, w \in V$ where $x, z \in U$ and $y, w \in L$, $b: \left \langle x, y, z, w \right \rangle$ is a butterfly iff the subgraph induced by $x, y, z, w$ is a $(2,2)$-biclique of $G$; that is, $x$ and $z$ are all connected to $y$ and $w$, respectively.
\end{mydef}

Throughout the paper, we study the temporal bipartite graphs. A temporal bipartite graph is an undirected graph $G = (V = (U, L), E)$, where each edge $e \in E$ is a triple $(u, v, t)$ with two vertices $u \in U, v \in L$ and a timestamp $t$. Our major focus is the historical type query on temporal bipartite graphs; that is, we query on an extracted graph from $G$ with respect to a certain time window. We denote it as the \textit{projected graph} and its formal definition is as follows:

\begin{mydef}[Projected graph~\cite{fang2020survey}]\label{def:projected-graph}
Given a temporal bipartite graph $G = (V, E)$ and a time-window $[t_s, t_e] (t_s \leq t_e)$, the projected graph $G_{[t_s, t_e]}$ of $G$ is an undirected bipartite graph without timestamps, where its vertex set $V_{[t_s, t_e]}$ is $V$ and the edge set $E_{[t_s,t_e]}$ is $\{(u, v)\ |\ \exists (u,v,t)\in E\  \land\ t \in [t_s, t_e]\}$.
\end{mydef}

Now, we are ready to state the major problem formally:

\begin{problem}[Historical butterfly counting]
\label{problem:main}
Given a temporal bipartite graph $G$ and a time-window $[t_s, t_e]$, find the number of butterflies in the projected graph $G_{[t_s, t_e]}$.
\end{problem}

In \Cref{fig:TG-PG-BC}, we are given a temporal bipartite graph $G$ in (a), in which the number represents the timestamp of each edge. We consider two time-windows $[1,3]$ and $[2,6]$. The corresponding projected graphs $G_{[1,3]}$ and $G_{[2,6]}$ are shown in (b) and (c), respectively. The historical butterfly counting query associated with these two-time windows is, in fact, counting on these two projected graphs. Therefore, the answer is $1$ ($\left \langle 1, 3, 4, 5 \right \rangle$) for $[1,3]$ and $3$ ($\left \langle 1, 2, 4, 5 \right \rangle$, $\left \langle 1, 3, 4, 5 \right \rangle$, $\left \langle 2, 3, 4, 5 \right \rangle$) for $[2,6]$.

\subsection{Some Key Techniques in Motif Counting}
In the previous works on motif counting~\cite{wang2019vertex,wang2023efficient}, the value of $\frac{1}{|E|} \sum_{(u, v) \in E} \min(\text{deg}_u, \text{deg}_v)$ is widely used for complexity analysis for the given input graph $G = (V, E)$.
By \cite{chiba1985arboricity}, this can be simplified as $O(\delta)$, where $\delta$ is defined as the arboricity of the given graph.
We will adapt the $\delta$ notion in our paper for brevity.

Our proposed algorithms widely use two important techniques: the \textit{Vertex Priority} method and the \textit{Chazelle’s structure}. 

\paragraph{Vertex Priority} The vertex priority method reduces the number of wedges we need to consider. To begin with, we define the vertex priority as follows:

\begin{mydef}[Vertex priority~\cite{wang2019vertex}] \label{def:priority}
For any pair vertices $x, y$ in a temporal bipartite graph $G$, we define $x$ is prior to $y$ ($pr(x) \prec pr(y)$) if and only if: $\overline{deg}_x$ > $\overline{deg}_y$, or $\overline{deg}_x$ = $\overline{deg}_y$ and $id(x) < id(y)$,
where $\overline{deg}_u$ denotes the degree of $u$ when only considering unique edges in $G$, and $id(u)$ denotes the unique ID of $u$.
\end{mydef}

By ~\cite{wang2019vertex}, we only need to consider the wedges $\left \langle x \leadsto y \leadsto z \right \rangle$ that satisfy $pr(x) \prec pr(y) \land pr(x) \prec pr(z)$ in order to count the butterflies without repetition or missing. Each butterfly $\left \langle x, y, z, w\right \rangle$ is constructed from two such wedges $\left \langle x \leadsto y \leadsto z \right \rangle$ and $\left \langle x \leadsto w \leadsto z \right \rangle.$

\paragraph{2D-range Counting} Our algorithm will transform counting butterflies into counting points on a 2-dimensional plane, known as the \textit{2D-range counting} problem. Specifically, let $P$ be a set of $n$ points in 2-d space $\mathbb{R}^{2}$. The \textit{2D-range counting} problem is: Given an orthogonal rectangle $Q$ of the form $[x_1, x_2] \times [y_1, y_2]$, find the size of $|Q \cap P|$.
For an instance of the 2D-range counting problem, we use a classic data structure known as Chazelle’s structure~\cite{chazelle1988functional} to handle all 2D-range counting queries after preprocessing:

\begin{theorem}[Chazelle’s structure~\cite{chazelle1988functional}]
\label{theo:cs}
A Chazelle’s structure $\mathcal{CS}$ is a data structure that can answer each 2D-range counting in $O(\log n)$ time and $O(\frac{n\log n}{\omega})$ memory usage, where $\omega$ is the word size. The preprocessing time is $O(n \log n)$.
\end{theorem}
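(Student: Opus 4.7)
The plan is to build a compressed range tree augmented with constant-time rank queries on bit vectors. First I would sort the input points by $x$-coordinate and build a balanced binary tree $\mathcal{T}$ over the sorted order, so that each internal node $v$ corresponds to a canonical $x$-range and is associated with the set $P_v$ of input points in its subtree arranged in $y$-sorted order (conceptually, not explicitly stored). Any query rectangle $[x_1,x_2]\times[y_1,y_2]$ decomposes into $O(\log n)$ canonical nodes whose $x$-ranges exactly tile $[x_1,x_2]$, and the answer is the sum over those canonical nodes $v$ of $|\{p\in P_v : y(p)\in[y_1,y_2]\}|$. This is the standard range-tree skeleton, which on its own uses $\Theta(n\log n)$ words of space and $O(\log^2 n)$ query time.

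Next I would shrink both the space and the query time simultaneously. For each internal node $v$ with children $L, R$, I attach a bit vector $B_v$ of length $|P_v|$ whose $i$-th bit records whether the $i$-th $y$-sorted point of $v$ lies in $L$ or in $R$. Equipped with constant-time \textbf{rank} on $B_v$, the position within $P_L$ and $P_R$ of any prefix of $P_v$ can be recovered in $O(1)$ time. Thus, given the two $y$-ranks $\ell,r$ at the root (obtained from one initial binary search on the root's $y$-list, which I can afford once), I can move down one edge of $\mathcal{T}$ in $O(1)$ time and report the count as $r-\ell$ on arrival at a canonical node. Summing $|P_v|$ over all levels of $\mathcal{T}$ gives $O(n\log n)$ bits, i.e.\ $O(n\log n/\omega)$ words, and the cost of visiting all $O(\log n)$ canonical nodes is $O(\log n)$.

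For preprocessing, I would sort by $x$ in $O(n\log n)$ time, then build $\mathcal{T}$ bottom-up, producing each $P_v$ and $B_v$ by merging the $y$-sorted lists of the two children in $O(|P_v|)$ per node and $O(n\log n)$ in total; computing the rank indexes is linear in the bit-vector lengths and so preserves the bound. The single initial binary search to locate $y_1$ and $y_2$ at the root contributes an additive $O(\log n)$ per query, matching the claimed bound.

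The main obstacle I expect is gluing the $O(\log n)$ query time to the compressed space: a naive approach would perform a fresh binary search of cost $O(\log n)$ at each canonical node, inflating the query to $O(\log^2 n)$. The fix --- propagating $y$-ranks down $\mathcal{T}$ via constant-time rank on $B_v$, in the spirit of fractional cascading --- is the delicate part, and verifying that the auxiliary rank indexes fit within the $O(n\log n/\omega)$ word budget (rather than leaking a $\log n$ factor from pointer overhead) is where I would spend the most care.
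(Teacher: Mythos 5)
Your construction is correct: it is the standard realization of Chazelle's ``functional'' range-counting structure (what is now usually called a wavelet tree), with rank-propagated $y$-ranks replacing per-node binary searches to get $O(\log n)$ query time and level-wise concatenated bit vectors to keep the space at $O(n\log n)$ bits, i.e.\ $O(n\log n/\omega)$ words. The paper does not prove \Cref{theo:cs} at all --- it imports it by citation to Chazelle (1988) --- so there is nothing to diverge from; your sketch matches the argument in the cited source, and the one place you flag as delicate (avoiding a $\log n$-factor leak from pointers and rank indexes) is handled in the standard way by storing one concatenated bit vector per level with $o(n)$-bit rank directories, which stays within the stated budget.
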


In practice, $2^{\omega}$ significantly exceeds the number of points involved in our method's 2D-range counting task. Therefore, we assume that $\log n$ is $O(\omega)$ for Chazelle’s structures used and simplify the memory usage into $O(n)$ for brevity. In later sections, we use $\mathcal{CS}$ to denote such a data structure.


\subsection{Baselines}
Our baseline solutions are built upon the \textit{states-of-the-arts} methods for exact butterfly counting (i.e., \texttt{BFC-VP++}~\cite{wang2019vertex}) and approximate butterfly counting (i.e., Weighted Pair Sampling (\texttt{WPS})~\cite{zhang2023scalable}) on the static graphs.
We need to extract the static graph from the temporal bipartite graph for a given time-windows first, and then run the following solutions:
\begin{itemize}[leftmargin=*]
    \item \texttt{BFC-VP++}: The \texttt{BFC-VP++} algorithm sorts vertices by their proposed vertex priority (\Cref{def:priority}) and efficiently identifies almost all minimally redundant wedges that can form a butterfly, offering a practical, efficient, and cache-friendly solution.
    In addition, \texttt{BFC-VP++} can be highly parallelized. We will also compare our methods with its parallel version in the following.
    \item \texttt{WPS}: The basic idea behind \texttt{WPS} is to estimate the total butterfly count with the number of butterflies containing two randomly sampled vertices from the same side of the two vertex sets. The method has been proven as unbiased and theoretically efficient in power-law graph models.
\end{itemize}
\section{INDEX-BASED ALGORITHMS}
\label{sec:algo}
In this section,  we consider solving \Cref{problem:main} exactly by index-based solutions. To begin with, in \Cref{sec:hardness}, we give a hardness result showing that $\widetilde{O}(m^{2}/\lambda^2)$ space is needed to answer queries exactly in $\widetilde{O}(\lambda)$ time, where $m$ denotes the number of edges in the given temporal bipartite graph $G$. 
Correspondingly, we provide an algorithm that meets such bound in \Cref{sec:ebi}, named as \textit{Enumeration-based Index (\texttt{EBI})}.
\texttt{EBI} achieves $\Otil(1)$ query time but it needs expensive memory usage for large graph data.
In \Cref{sec:cbi}, we introduce a different algorithm named \textit{Combination-based Index (\texttt{CBI})} that can be constructed under practical memory constraints.

To bridge the gap between theory and practice, we propose a new algorithm in \Cref{sec:gsi} that effectively combines \texttt{EBI} and CBI, demonstrating strong performance on real-world graph data. This algorithm, named \textit{Graph Structure-aware Index (\texttt{GSI})}, intelligently allocates graph data to the two indices which allows it to take advantage of the underlying graph structure.
In addition, without compromising the performance, \texttt{GSI} can also handle duplicate edges with proper modification, which is discussed in \modify{the technical report}.



\subsection{Problem Hardness}

Our result is motivated by the hardness result in \cite{deng2023space} where they reduce the \textsc{Set Disjointness} problem (\Cref{def:setdisjoint}) into \Cref{problem:main}. While \cite{deng2023space} considers timestamps on vertices, we prove for the setting where timestamps are on edges separately.
\begin{mydef} [Set Disjointness]
\label{def:setdisjoint}
Given a collection of $s\ge 2$ sets $S_1$, $S_2$, $\ldots$,$S_s$,  a query $(a, b)\in [s]^2$ asks for whether $S_a\cap S_b$ is empty.
\end{mydef}

The strong set disjointness conjecture~\cite{Goldstein17,Goldstein19} is as follows:
\begin{theorem}
    For the set disjointness problem, any data structure with query time $\lambda$ must use $\widetilde{\Omega} (n^2/\lambda^2)$ space where $n$ is the sum of the sizes of $S_1,\ldots, S_s$.
\end{theorem}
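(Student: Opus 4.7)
The plan is to work in the cell-probe model with word size $w = \Theta(\log n)$ and to show that any data structure with query time $\lambda$ (measured in cell probes) must use $\widetilde{\Omega}(n^2/\lambda^2)$ cells. The natural route is via asymmetric communication complexity: given a hypothetical small data structure, simulate a two-party protocol for set disjointness on a hard input distribution, then invoke a matching communication lower bound to derive a contradiction.

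Concretely, I would fix a hard distribution with $s$ sets, each of size roughly $n/s$, drawn from a universe of size $\mathrm{poly}(n)$, so that the total description length of the input is $\Theta(n \log n)$. Given a purported data structure $\mathcal{D}$ of space $S$ and query time $\lambda$, simulate the following protocol: Alice holds the query pair $(a,b)\in[s]^2$, Bob holds $\mathcal{D}$, and Alice runs the query algorithm, transmitting each of the $\lambda$ probe addresses (of $\Theta(\log S)$ bits each) and receiving the corresponding cell contents (of $w$ bits each). The total communication is $O(\lambda(\log S + w))$. If $S$ could be substantially smaller than $n^2/\lambda^2$, this protocol would beat the known asymmetric (lopsided) communication lower bounds for set disjointness developed by Patrascu and his successors, producing the desired contradiction. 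A delicate step is choosing $s$ (hence the split between number of sets and set sizes) so that the lopsided bound matches the $n^2/\lambda^2$ target up to polylog factors, and arranging the reduction so that a single query call suffices rather than a batch.

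The main obstacle is that the result as stated is still an open \emph{conjecture}: the paper itself calls it the strong set disjointness conjecture and cites Goldstein et al., and no unconditional proof in the cell-probe model (let alone the word-RAM model) is currently known. The difficulty is in converting communication lower bounds into \emph{tight} space-time tradeoffs: the best available unconditional bounds for multi-set disjointness lose polylogarithmic factors through the standard simulation, and accommodating adaptive probes introduces further slack that is roughly what the $\widetilde{\Omega}$ is trying to hide. A fully rigorous derivation would therefore require either a new cell-probe lower-bound technique, a reduction from another plausible hardness assumption (e.g.\ 3SUM or OMv), or restriction to a more structured computational model. Absent such a breakthrough, the statement should be treated as a widely-believed hypothesis rather than a proved theorem, and the surrounding paper uses it in that spirit.
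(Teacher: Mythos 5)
Your assessment is exactly right: the paper offers no proof of this statement because it is the strong set disjointness conjecture, quoted from Goldstein et al.\ and used as a hardness assumption underpinning \Cref{thm:hardness}, not as a result established in the paper. Your candid conclusion that the statement should be treated as a widely-believed hypothesis rather than a proved theorem matches the paper's own treatment, and your cell-probe/communication-complexity sketch is a fair account of both why the bound is plausible and why current techniques (which lose polylogarithmic or larger factors in the simulation) fall short of proving it unconditionally.
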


Specifically, we prove the following result for \Cref{problem:main}:

\begin{theorem}
\label{thm:hardness}
Consider \Cref{problem:main}. Let $m$ denote the number of edges in $G$. Fix any $\lambda \in [1,m]$ and $\delta>0$. Suppose that we have a data structure for \Cref{problem:main} using $\widetilde{O}(m^{2-\delta}/\lambda^2)$ space and exactly answers each query in $\widetilde{O}(\lambda)$ time. Then, for any set disjointness problem with $\sum_{i=1}^s |S_s|=N$, we have a data structure that uses $\widetilde{O}(N^{2-\delta}/\lambda^2)$ time and that +answers each query in $\widetilde{O}(\lambda)$ time.
\end{theorem}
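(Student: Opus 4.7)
The plan is to reduce \textsc{Set Disjointness} to \Cref{problem:main} by constructing a temporal bipartite graph $G$ of size $m = O(N)$ whose historical butterfly counts encode pairwise set-intersection information. Given sets $S_1,\ldots,S_s$ with $\sum_i |S_i|=N$, I place one left vertex $u_i$ per set, one right vertex $b_x$ per universe element $x$, and one auxiliary ``hub'' right vertex $b^{*}$. For each $i$ and each $x\in S_i$ I insert the timestamped edge $(u_i,b_x,i)$, and I additionally insert $(u_i,b^{*},i)$ for every $i$. The total edge count is $N+s=O(N)$, so any HBC data structure using $\widetilde{O}(m^{2-\delta}/\lambda^{2})$ space uses $\widetilde{O}(N^{2-\delta}/\lambda^{2})$ space on this graph.

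The key structural observation is that every butterfly uses exactly two left vertices, and for any pair $i\neq j$ we have $|N(u_i)\cap N(u_j)| = |S_i\cap S_j|+1$, where the $+1$ comes from the hub $b^{*}$. Hence the pair $(i,j)$ contributes $\binom{|S_i\cap S_j|+1}{2}$ butterflies, a quantity which is zero if and only if $S_i\cap S_j=\emptyset$. The hub $b^{*}$ is essential to the reduction: without it, a pair with intersection of size exactly $1$ would still contribute zero butterflies and be indistinguishable from the disjoint case.

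To isolate the pair $\{a,b\}$ using only interval queries to the HBC oracle, I use inclusion--exclusion on the four intervals obtained by optionally excluding the endpoints $a$ and $b$:
\[
B(a,b) \;=\; f([a,b]) - f([a+1,b]) - f([a,b-1]) + f([a+1,b-1]),
\]
where $f(I)$ denotes the exact butterfly count of $G_I$. A routine case analysis splitting an ordered pair $(i,j)\subseteq[a,b]$ by whether $i=a$ and whether $j=b$ shows that every pair-contribution cancels except that of $\{a,b\}$, giving $B(a,b)=\binom{|S_a\cap S_b|+1}{2}$. Answering $(a,b)$ therefore takes four HBC queries plus $O(1)$ arithmetic, i.e.\ total query time $\widetilde{O}(\lambda)$, matching the theorem's bound.

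The main conceptual obstacle is the one-dimensional nature of a time window: a pair query $(a,b)$ is intrinsically two-dimensional, but a projection window is a single contiguous range that always activates the whole block $u_a, u_{a+1},\ldots,u_b$ rather than just the two endpoints. The four-term inclusion--exclusion resolves this, and the only remaining subtlety is handling the boundary cases $a=b$ (trivially non-disjoint) and $a+1=b$ (where some of the four intervals degenerate to the empty or a single-vertex set and therefore contribute $0$), both of which can be resolved without invoking the HBC oracle. Combining everything yields the claimed space/time trade-off for set disjointness and hence, via the strong set disjointness conjecture, the desired hardness for \Cref{problem:main}.
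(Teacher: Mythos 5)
Your proposal is correct and follows essentially the same route as the paper's proof: your hub vertex $b^{*}$ is the paper's vertex $t+1$, the timestamp assignment is identical, and your four-term inclusion--exclusion is exactly the paper's two-dimensional prefix-sum identity $\binom{|S_a\cap S_b|+1}{2}=Q(a,b)-Q(a+1,b)-Q(a,b-1)+Q(a+1,b-1)$. Your explicit remarks on why the hub is needed (to distinguish $|S_a\cap S_b|=1$ from $0$) and on the degenerate boundary cases are correct elaborations of points the paper leaves implicit.
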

\begin{proof}
    Suppose that we have a data structure for \Cref{problem:main} using $\widetilde{O}(m^{2-\delta}/\lambda^2)$ space and exactly answers each query in $\widetilde{O}(\lambda)$ time. We can utilize the data structure for a set disjointness instance as follows: 
    
    Let there be $s$ sets $S_1,\ldots, S_s$. Without loss of generality, we let $1,\ldots, t$ denote all distinct elements in $S_1,\ldots, S_s$, \textit{i.e.,} $\bigcup_{i=1}^s S_i=[t]$. We construct a graph $G=(V=(U, L), E)$ such that $U=[s]$ and $L=[t+1]$. An edge $(i, j)$ exists if $j\in S_i$ or $j=t+1$. We assign timestamp $i$ to the edge $(i, j)$.

    We build the data structure for the graph $G$ defined above. For two integers $a \le b \in [s]$, we can query the data structure for the number of butterflies in the time window $[a, b]$. Since the timestamp on edge $(i, j)$ is equal to $i$, such butterflies are exactly those with the form $\left< u, v, w, x\right>$ such that $u,w\in [a,b]$. The number of such butterflies is 
    \[
    Q(a, b)\defeq \sum_{i=a}^b \sum_{j=i+1}^b \left(\begin{array}{c}|S_i\cap S_j|+1 \\ 2\end{array}\right).
    \]
    Notice that $Q(a, b)$ is the two-dimensional prefix sum of $\binom{|S_a\cap S_b|+1}{2}$. In other words, we can calculate $\binom{|S_a\cap S_b|+1}{2}$ by 
    $Q(a, b)-Q(a+1, b)-Q(a, b-1)+Q(a+1, b-1)$,
    where we define $Q(x, y)=0$ for $x\ge y$.
    We can answer whether $|S_a\cap S_b|=0$ by checking whether this value is equal to zero.
    
\end{proof}

\label{sec:hardness}
\begin{figure}[t!]
    \centering
    \includegraphics[width=0.9\linewidth]{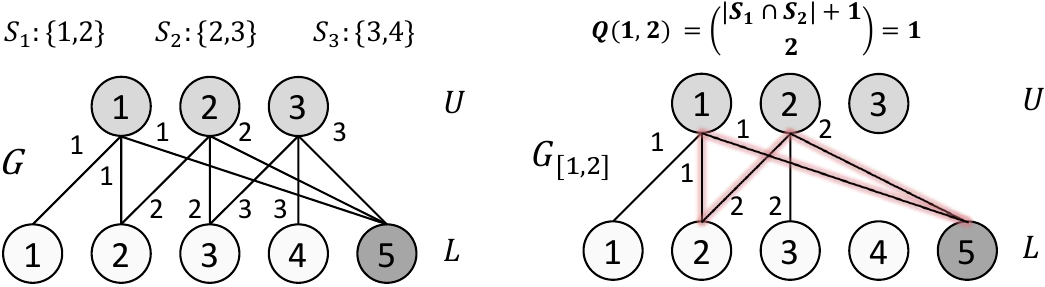}
    \caption{An example for reducing the set disjointness problem into exact historical butterfly counting in temporal bipartite graphs. }
    \label{fig:PH}
\end{figure}

For example, in the \Cref{fig:PH}, we reduce the set disjointness problem for three sets $\{1, 2\}$, $\{2, 3\}$ and $\{3, 4\}$ to a historical butterfly counting problem on the graph we constructed.
For determining whether $S_1\cap S_2$ is empty, we count the number of butterflies of the form $\left<1, v, 2, x\right>$, $Q(1, 2),$ which is equal to $\binom{\left|S_1\cap S_2\right| + 1}{2}$. 
For determining whether $S_1\cap S_3$ is empty, we calculate $\binom{|S_1\cap S_3|+1}{2}=Q(1, 3) - Q(2. 3) - Q(1, 2) + Q(2, 2)=2 - 1 - 1 + 0 = 0$.

This hardness result shows that to achieve efficient query runtime (e.g., $\widetilde{O}(\lambda)$), large memory space is necessary (e.g., $\widetilde{O}(m^2/\lambda^2)$). Even though we can design theoretically optimal algorithms that reach the lower bound, they might not be practical when the input graph is large. One possible way to overcome such a challenge from a practical perspective is to develop algorithms that take advantage of real-world graphs' properties.

\subsection{Enumeration-based Index}
\label{sec:ebi}
Since the structure of a temporal graph changes over time, each subgraph has its own life cycle, i.e., the only interval of time in which it exists. 
We denote such an interval as the active timestamp of this subgraph. We give a formal definition as follows:
\begin{mydef}[Active Timestamp]
\label{def:original-timestamp}
For any subgraph $P$ of a bipartite temporal graph $G$, we define the active timestamp $\mathcal{T}(P)$ as a pair of ordered timestamp $[l, r] (l \leq r)$, such that $P$ exist in the projected graph $G_{[t_s, t_e]}$ if and only if $t_s \leq l$ and $r \leq t_e$.
\end{mydef}

\begin{figure}[t!]
    \centering
    \includegraphics[width=1\linewidth]{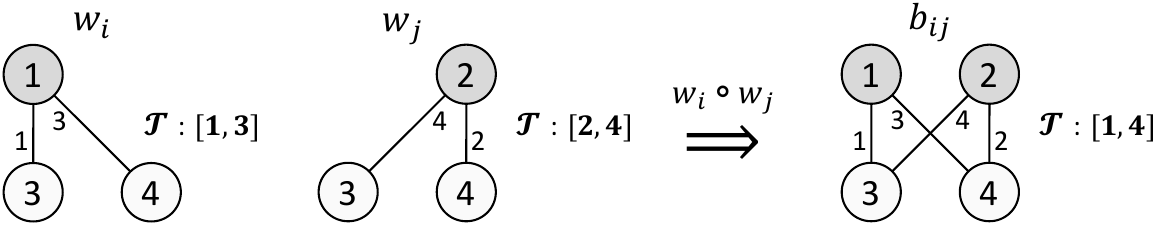}
    \caption{An illustrative example for active timestamps of wedges and the butterfly constructed from them.}
    \label{fig:timestamp}
\end{figure}

In \Cref{fig:timestamp}, there are two wedges $w_i$ and $w_j$. For $w_i$, since it contains two edges $(1,3,1)$ and $(1,4,3)$, it only exists in a projected graph $G_{[t_s, t_e]}$ satisfying $t_s \leq 1$ and $t_e \geq 3$. The active timestamp $\mathcal{T}(w_i)$ is $[1,3]$. Similarly, for $w_j$, its active timestamp is $[2,4]$.

The proposed \textit{Enumeration-based Index (\texttt{EBI})} algorithm is shown in \Cref{alg:EBI-Construction} and \Cref{alg:EBI-Querying}. For brevity, we assume that no duplicate edge exists. We discuss how to handle duplicate edges without compromising the performance in \modify{the technical report}.


$\bullet$ \textbf{Construction} The goal is to enumerate every butterfly and compute its active timestamp. After initializing a 2D-range counting data structure $T_E$ by all butterflies' timestamps, we can answer every historical butterfly counting query through a single query on $T_E$. We initialize it to be empty in the beginning (\Cref{ebi-c:initds}).

To find all the butterflies in $G$, we enumerate all the wedges. We consider two wedges $w_i, w_j$ of $G$ with shared endpoints but different middle vertices, denoted as $w_i: \left \langle x \leadsto y_i \leadsto z \right \rangle$ and $w_j : \left \langle x \leadsto y_j \leadsto z \right \rangle$. Let $\mathcal{T}(w_i)=[l_i,r_i]$ and $\mathcal{T}(w_j)=[l_j,r_j]$. The temporal butterfly constructed from them is denoted as $b_{ij} = w_i \circ w_j$, where $\mathcal{T}(b_{ij})=[\min(l_i, l_j), \max(r_i, r_j)]$. 

In \Cref{fig:timestamp}, we demonstrate how to construct a butterfly $b_{ij}$ with two wedges $w_i$ and $w_j$. Since $b_{ij}$ contains edges $(1,3,1)$, $(1,4,3)$, $(2,3,4)$, and $(2,4,2)$, its active timestamp can be computed by $l = \min \{t |\  \exists e = (u, v, t) \in edges(b_{ij})\}=1,$ and $r = \max \{t |\  \exists e = (u, v, t) \in b_{ij}\}=4$. By the definition of the $\circ$ operation, it can also be computed from $\mathcal{T}(w_i)$ and $\mathcal{T}(w_j)$. This also indicates that a butterfly $b_{ij} = w_{j} \circ w_{j}$ exists in a given time-window, if and only if the active timestamp of $w_i$ and $w_j$ are both included in the window. Such observation helps us develop an alternative index approach in the later section.

We follow the method in \cite{wang2019vertex} to enumerate every wedge $w_i : \left \langle x \leadsto y \leadsto z \right \rangle [l_i, r_i]$ in $G$ without repetition or missing (\Cref{ebi-c:enumerate1}). The complexity is bounded by the total number of wedges in $G$. Let $W[(x, z)]$ be a set of wedges whose endpoints are $x, z$. $W$ is the family of sets $W[(x, z)]$. $W$ is initialized to be empty (\Cref{ebi-c:initw}). For the current wedge $w_i$, we first check if its endpoints $(x, z)$ exists in $W$ (\Cref{ebi-c:if}) (by a hash table, for example). If not, we initialize the corresponding set $W[(x, z)]$ to be empty (\Cref{ebi-c:initbucket}).

Since any two different wedges with the same pair of endpoints $(x,z)$ construct a unique butterfly, we enumerate every $w_j$ in \\ $W[(x, z)]$ (\Cref{ebi-c:enumerate2}) and construct a butterfly $b_{ij}=w_i \circ w_j$ with $\mathcal{T}(b_{ij})=(t_l,t_r)$ (\Cref{ebi-c:butterfly}). Correspondingly, we insert a single 2D point $(t_l,t_r)$ into $T_E$. After enumeration, we need to update $W[(x, z)]$ by inserting $w_i$ in it (\Cref{ebi-c:update}). After finding all the butterflies, we run the preprocessing for $T_E$ and return it (\Cref{ebi-c:return}).

In \Cref{fig:ebi}, we provide an example of our construction process. There are six wedges $w_1: \left \langle 1 \leadsto 4 \leadsto 2 \right \rangle$, $w_2: \left \langle 1 \leadsto 5 \leadsto 2 \right \rangle$, $w_3: \left \langle 1 \leadsto 4 \leadsto 3 \right \rangle$, $w_4: \left \langle 1 \leadsto 5 \leadsto 3 \right \rangle$, $w_5: \left \langle 2 \leadsto 4 \leadsto 3 \right \rangle$, $w_6: \left \langle 2 \leadsto 5 \leadsto 3 \right \rangle$ where $\mathcal{T}(w_1)=[4,6]$, $\mathcal{T}(w_2)=[2,5]$, $\mathcal{T}(w_3)=[4,4]$, $\mathcal{T}(w_4)=[2,4]$, $\mathcal{T}(w_5)=[4,6]$, $\mathcal{T}(w_6)=[4,5]$. We group them by $(x,z)$. There will be three groups (sets) of wedges: $W[(1,2)]=\{w_1,w_2\}$, $W[(1,3)]=\{w_3,w_4\}$, $W[(2,3)]=\{w_5,w_6\}$. For each group, we construct a butterfly $b_{ij}$ from each pair of different wedges $w_i,w_j$ from the group. We have $b_1=w_1 \circ w_2$, $b_2=w_3 \circ w_4$, $b_3=w_5 \circ w_6$. We compute their active timestamp through simple calculation: $\mathcal{T}(b_1)=[2,4]$, $\mathcal{T}(b_2)=[2,6]$, $\mathcal{T}(b_3)=[4,6]$. In the end, we insert $(2,4)$, $(2,6)$, $(4,6)$ into $T_E$. Note that in this showcase, for demonstrating the group dividing, we do not follow the vertex priority in \Cref{def:priority}. Otherwise, we only consider three wedges and they are all in the group $W[(4,5)]$.


$\bullet$ \textbf{Answering Query} Since each butterfly with active timestamp $[l,r]$ is represented by a point $(l, r)$ in $T_E$, a historical butterfly counting of time-window $[t_s,t_e]$ can be interpreted as a 2D-range query counting the number of points in the rectangle $[t_s, \infty] \times [-\infty, t_e]$. We query $T_E$ for it and return the answer directly (\Cref{ebi-q:return}). 

In \Cref{fig:ebi}, after the preprocessing of $T_E$, we can answer any 2D-range counting query concerning those three points, which represent the three butterflies. When we want to query the number of butterflies in time-window $[1,5]$, we are asking how many butterflies' active timestamps are in the range of $[1,5]$. We interpret this as querying the number of points in the rectangle $[1,\infty] \times [-\infty,5]$. There is only one point $(2,4)$ in the rectangle. Therefore, there is only one butterfly in the projected graph of the queried time-window. The calculation is done by a single query on $T_E$. It is a typical 2D-range counting query.

$\bullet$ \textbf{Time and Space Complexity} The time and space complexity of \texttt{EBI} is summarized as follows: 
\begin{theorem}
\label{theo:EBI}
    Denote $B$ as the number of butterflies of the input graph $G$ and $\delta$ as the arboricity, \texttt{EBI}'s construction (\Cref{alg:EBI-Construction}) runs in $O(m\delta + B \log m)$ and takes $O(B)$ space. After the construction, each query (\Cref{alg:EBI-Querying}) takes $O(\log m)$ to answer. 
\end{theorem}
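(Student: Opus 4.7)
}

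The plan is to analyze the three quantities (construction time, index space, query time) separately, leveraging the vertex-priority wedge bound from~\cite{wang2019vertex} and \Cref{theo:cs}. For the construction time, I would first recall that with the vertex-priority restriction $pr(x)\prec pr(y)\land pr(x)\prec pr(z)$, the total number of wedges to be enumerated in \Cref{ebi-c:enumerate1} is $O(m\delta)$; this is exactly the degeneracy-based bound established for bipartite wedge enumeration in the analysis of \texttt{BFC-VP++}. Enumerating each such wedge, together with the associated hash-table lookup and the possible initialization of $W[(x,z)]$, costs $O(1)$ amortized, contributing $O(m\delta)$ overall.

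Next I would charge the inner loop of \Cref{ebi-c:enumerate2}. For every fixed endpoint pair $(x,z)$, when the $k$-th wedge with endpoints $(x,z)$ arrives, the loop performs $k-1$ iterations and creates $k-1$ butterflies; summing over all arrivals gives $\binom{|W[(x,z)]|}{2}$ butterflies and the same number of inner-loop steps. Summing over all pairs $(x,z)$, the total inner-loop work equals the total number of butterflies $B$, since each butterfly is produced exactly once by the pair of its two wedges sharing the two priority-minimum endpoints. Finally, after all $B$ points $(t_l,t_r)$ are collected, the Chazelle preprocessing of \Cref{theo:cs} on a point set of size $B$ costs $O(B\log B)=O(B\log m)$ (using $B\le m^2$). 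Adding the three contributions yields the claimed $O(m\delta+B\log m)$ construction time.

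For the index space, I would observe that once $T_E$ has been built the auxiliary wedge buckets $W$ are discarded, so the index retained is just $T_E$, a Chazelle structure on $B$ points; by \Cref{theo:cs} (under the assumption $\log n=O(\omega)$ adopted in the preliminaries), this occupies $O(B)$ words. The query bound is immediate: \Cref{alg:EBI-Querying} issues a single 2D-range counting query on $T_E$, which by \Cref{theo:cs} runs in $O(\log B)=O(\log m)$ time.

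The main obstacle, as I see it, is not any single step but rather making the amortization argument for the inner loop airtight, i.e., ensuring that the $O(B)$ charge absorbs \emph{all} work done in processing the buckets $W[(x,z)]$ (both the iteration and the insertions on \Cref{ebi-c:update}) and that each butterfly is counted exactly once. This requires explicitly invoking the vertex-priority correctness result from~\cite{wang2019vertex}, which guarantees a bijection between butterflies and pairs of wedges sharing their two priority-minimum endpoints; with that bijection in hand the rest of the accounting is routine.
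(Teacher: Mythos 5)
Your proposal is correct and follows essentially the same route as the paper's proof: bound wedge enumeration by $O(m\delta)$ via the vertex-priority result of \texttt{BFC-VP++}, charge the inner loop to the $B$ butterflies using the bijection between butterflies and wedge pairs sharing their priority-minimum endpoints, and invoke \Cref{theo:cs} for the $O(B\log m)$ index-building cost, $O(B)$ space, and $O(\log m)$ query time. The only (cosmetic) divergence is in the space accounting: the paper's own proof tallies $O(B+m\delta)$ by keeping the wedge buckets $W$ in the total, whereas you discard $W$ after construction and count only the retained index $T_E$, which actually matches the theorem's stated $O(B)$ bound more cleanly.
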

\begin{proof}
The construction process requires enumerating all wedges. By \cite{wang2019vertex}, this takes $O(m\delta)$. For each wedge $w_i : \left \langle x \leadsto y \leadsto z \right \rangle$, it is inserted in $W[(x,z)]$. By enumerating every existing $w_j$ in $W[(x,z)]$, every butterfly $b_{ij}=w_i \circ w_j$ is enumerated exactly once. Since the $\circ$ operation takes $O(1)$ and the insertion to $T_E$ takes $O(\log m)$ by \Cref{theo:cs}, the total preprocess time for all butterflies takes $O(B\log m)$. Therefore, the total runtime for \Cref{alg:EBI-Construction} is $O(m\delta + B\log m)$ as desired. Every wedge will be stored in $W[(x, z)]$ where $x, z$ are its endpoints. All the wedges take $O(m\delta)$ space in total. $T_E$ contains $O(B)$ points in total. By \Cref{theo:cs}, it takes $O(B)$ space. The total memory usage sum up to $O(B+m\delta)$. Each query on \texttt{EBI} is a range query on $T_E$. By \Cref{theo:cs}, this takes $O(\log m)$ time and no additional space.
\end{proof}
The \texttt{EBI} algorithm reaches the bound in \Cref{sec:hardness}: 
Since $\delta$ is bounded by $O(\sqrt{m})$ and $B$ is bounded by $O(m^2)$, the memory usage is bounded by $\Otil(m^2)$.
Therefore, \texttt{EBI} answers each query in $\widetilde{O}(1)$ time and take $\Otil(m^2)$ space. 
This indicates that \texttt{EBI}'s asymptotic memory usage cannot be further improved without affecting the $\widetilde{O}(1)$ query time. On the other hand, though reaching theoretical optimality, \texttt{EBI}'s performance on large-scale data is not ideal because enumerating and storing all butterflies is challenging for these graphs. For example, in the \texttt{Wiktionary} dataset\footnote{http://konect.cc/}, the butterfly count exceeds $10^{16}$ where the graph contains about $4 \times 10^{7}$ edges\footnote{Here, \#butterflies is larger than the square of \#edges due to duplicate edges, which will be handled in the technical report.}. 
\begin{figure}[t!]
    \centering
    \includegraphics[width=1\linewidth]{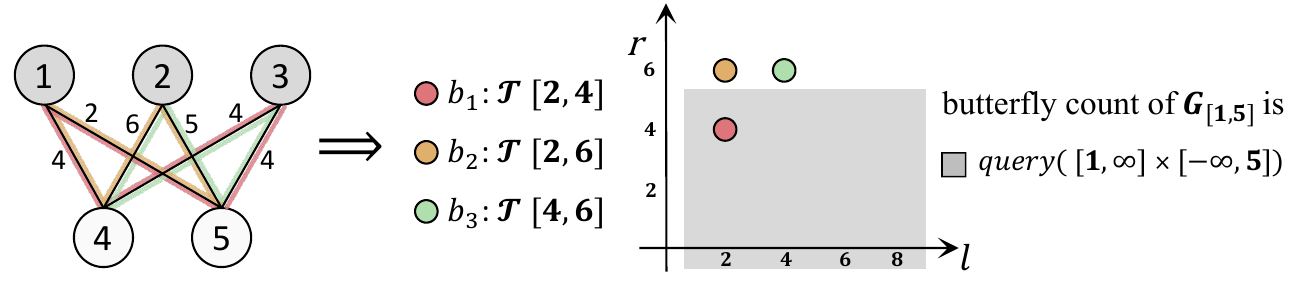}
    \caption{An example for transforming historical butterfly counting into 2D-range counting.}
    \label{fig:ebi}
\end{figure}



\begin{algorithm}[t]
\small
\caption{\small \textsc{EBI-Construction}}\label{alg:EBI-Construction}
\KwIn{a bipartite temporal graph: $G = (V, E)$; }
\KwOut{a $\mathcal{CS}$ (\Cref{theo:cs}) for indexing (EB-index): $T_{E}$;}
Initialize $T_{E}$ with an empty $\mathcal{CS}$; \\ \label{ebi-c:initds}
$W \leftarrow$ an empty hashmap mapping pairs of vertices to sets; \\ \label{ebi-c:initw}
\For {each wedge $w_i : \left \langle x \leadsto y \leadsto z \right \rangle [l_i, r_i] \in G$}{ \label{ebi-c:enumerate1}
            \lIf{$(x, z) \notin W.keys()$}{ \label{ebi-c:if}
                $W[(x, z)] \leftarrow \emptyset$ \label{ebi-c:initbucket}
            }
            \For{$w_j \in W[(x, z)]$}{ \label{ebi-c:enumerate2}
                $b_{ij} \leftarrow w_i \circ w_j$; \\ \label{ebi-c:butterfly}
                $T_E.insert(t_l, t_r)$\tcp*{$(t_{l}, t_{r})=\mathcal{T}(b_{ij})$} \label{ebi-c:insert}
                
            }
            $W[(x, z)] \leftarrow W[(x, z)] \cup w_i$; \label{ebi-c:update}
}
Preprocess $T_E$ (\Cref{theo:cs}); \\
\Return $T_E$; \label{ebi-c:return}
\end{algorithm}

\begin{algorithm}[t]
\small
\caption{\small \textsc{EBI-Querying}}\label{alg:EBI-Querying}
\KwIn{a time-window: $[t_s, t_e]$; a EB-index: $T_{E}$;}
\KwOut{temporal butterfly counting: $num_{[t_s, t_e]}$;}
\Return $num_{[t_s, t_e]} \leftarrow T_{E}.query([t_s, \infty] \times [-\infty, t_e])$; \label{ebi-q:return}
\end{algorithm}

\subsection{Combination-based Index}
If the graph is large, it is not realistic to pre-compute and store all butterflies to handle historical queries, as this requires too much space. On the other hand, we are only interested in the number of butterflies, so it is not necessary to construct them explicitly. 

Recall that a butterfly exists in a given time window if and only if the active timestamp of the two wedges are both included in the window. Consider a \textit{group of wedges} $S=\{\left \langle x \leadsto y \leadsto z \right \rangle\}$ where $x,z$ are fixed. Any two wedges from $S$ form a butterfly. Therefore, the total number of butterflies that comes from $S$ is $\binom{|S|}{2}$. This idea leads us to a different index algorithm, named \textit{Combination-based Index (\texttt{CBI})}. The implementation details are provided in \Cref{alg:CBI-Construction} and \Cref{alg:CBI-Querying}. Similarly, we assume no duplicate edge for brevity. Resolving them is deferred to \modify{the technical report}.


$\bullet$ \textbf{Construction} In \texttt{EBI}, we use one 2D-counting data structure to store butterflies. Instead, in \texttt{CBI}, we map every type of wedge to a separate data structure. We initialize a hashmap $T_C$ in the beginning (\Cref{cbi-c:initds}). We still need to enumerate every wedge $w_i : \left \langle x \leadsto y \leadsto z \right \rangle [l_i, r_i]$ in $G$ (\Cref{cbi-c:enumerate}). A wedge with endpoints $x, z$ is maintained by the 2D-counting data structure $T_C[(x, z)]$. If $T_C[(x, z)]$ does not exist(\Cref{cbi-c:if}), we initialize an empty $\mathcal{CS}$ (\Cref{theo:cs}) for $T_C[(x,z)]$ (\Cref{cbi-c:init}). Unlike in \texttt{EBI}, where we compute every butterfly that is constructed from the new wedges $w_i$ and a previous wedge $w_j \in W[(x, z)]$, we directly insert a point $[l_i,r_i]$ ($w_i$'s active timestamp) into the corresponding $T_C[(x,z)]$ (\Cref{cbi-c:insert}). After the enumeration, we preprocess all data structures $T_C[(x, z)]$ and return $T_C$ (\Cref{cbi-c:return}).

$\bullet$ \textbf{Answering Query} To answer a historical butterfly counting of time-window $[t_s,t_e]$, for every type of wedges, we first need to compute the number of wedges that exist in the time-window, then compute the number of butterflies by a simple binomial coefficient. We set a counter $num_{[t_s, t_e]}$ to be $0$ initially (\Cref{cbi-q:init}). We enumerate every existing data structure in $T_C$ (\Cref{cbi-q:enumerate}). Similar to what we do in \texttt{EBI}, we query on the corresponding 2D-counting data structure $T_C[(x,z)]$ with a query $[t_s, \infty] \times [-\infty, t_e]$ to get the number of wedges that are active in the time-window (\Cref{cbi-q:query}). Denoting the query answer as $count$, we add $\binom{count}{2}$ to $num_{[t_s, t_e]}$, which is the number of butterflies that are constructed from these wedges. After enumerating all the data structures, we return $num_{[t_s, t_e]}$ as the total number of butterflies (\Cref{cbi-c:return}).

$\bullet$ \textbf{Time and Space Complexity} The time and space complexity of \texttt{CBI} is summarized as follows:
\begin{theorem}
\label{theo:CBI}
Denote $\widetilde{w}$ as the number of wedge groups and $\delta$ as the arboricity. 
\texttt{CBI}'s construction (\Cref{alg:CBI-Construction}) runs in $O(m\delta\log n)$ and takes $O(m\delta)$ space. After the construction, \texttt{CBI} answers each query (\Cref{alg:CBI-Querying}) in $O(\widetilde{w} \log n)$ time.     
\end{theorem}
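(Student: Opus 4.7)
The plan is to mirror the structure of the proof of \Cref{theo:EBI}: bound the construction cost by the wedge-enumeration cost plus the aggregate Chazelle preprocessing cost; bound the space by the total number of stored points; and bound the query cost by one 2D-range count per wedge group.

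For construction time, I would first invoke the wedge-enumeration routine of \cite{wang2019vertex}, which (as already used in the proof of \Cref{theo:EBI}) produces every wedge $\langle x \leadsto y \leadsto z\rangle$ exactly once in $O(m\delta)$ time. Each such wedge contributes one point $[l_i, r_i]$ to the structure $T_C[(x,z)]$, so the total number of points stored across all buckets is $O(m\delta)$. After the enumeration, each bucket's Chazelle structure must be preprocessed; by \Cref{theo:cs}, a structure on $n_i$ points costs $O(n_i \log n_i)$ to prepare. Summing over all buckets and using $\sum_i n_i = O(m\delta)$ and $\log n_i \le \log n$, the aggregate preprocessing cost is $O(m\delta \log n)$. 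Adding this to the $O(m\delta)$ enumeration time gives the claimed $O(m\delta \log n)$ bound.

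For space, I would argue per-bucket and then sum. Under our assumption that $\log n = O(\omega)$, \Cref{theo:cs} guarantees that a Chazelle structure on $n_i$ points occupies $O(n_i)$ words. The per-bucket hashmap entries and bucket keys each contribute $O(1)$, so the total footprint is dominated by the points themselves, yielding $O(\sum_i n_i) = O(m\delta)$ space as claimed.

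For query time, the procedure iterates once through every nonempty bucket in $T_C$, of which there are exactly $\widetilde{w}$ by definition. Inside each iteration, it issues a single 2D-range counting query $[t_s,\infty]\times[-\infty,t_e]$ to $T_C[(x,z)]$, costing $O(\log n)$ per \Cref{theo:cs}, and then evaluates one binomial coefficient in $O(1)$. Summing over all $\widetilde{w}$ groups gives the stated $O(\widetilde{w}\log n)$ query time.

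The only mildly delicate point is verifying that the correctness of the count carries over: the proof must invoke the observation (established earlier in the \texttt{EBI} discussion) that a butterfly $b_{ij} = w_i \circ w_j$ lies in the projected graph $G_{[t_s,t_e]}$ iff both wedges $w_i$ and $w_j$ are active within $[t_s,t_e]$. Given this, the wedges active in a given bucket $(x,z)$ pairwise form distinct butterflies with endpoints $(x,z)$, so $\binom{\textit{count}}{2}$ is exactly the contribution of that bucket, and the sum across groups gives the total butterfly count without repetition. The complexity claims then follow as above; no deeper obstacle arises, the main care being to assemble the per-bucket preprocessing costs into the aggregate $O(m\delta \log n)$ bound.
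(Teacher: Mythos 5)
Your proposal is correct and follows essentially the same route as the paper's proof: charge $O(m\delta)$ to wedge enumeration via \cite{wang2019vertex}, bound the aggregate Chazelle preprocessing by $O(m\delta\log n)$ using the fact that each group has at most $O(n)$ wedges, bound space by the $O(m\delta)$ total stored points, and charge one $O(\log n)$ range query per wedge group for the $O(\widetilde{w}\log n)$ query time. The only difference is that you also spell out the correctness of the $\binom{count}{2}$ accounting, which the paper treats as already established in the \texttt{EBI} discussion rather than re-proving here.
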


\begin{proof}
    The construction process (\Cref{alg:CBI-Construction}) also requires enumerating all wedges, which takes $O(m\delta)$ by \cite{wang2019vertex}. For each group of wedges with the same endpoints, a $\mathcal{CS}$ needs to be initialized with all wedges from this group. Since the number of wedges is bounded by $O(m\delta)$ and each group contains at most $O(n)$ wedges, the total time for preprocessing all the data structures is $O(m\delta\log n)$ by \Cref{theo:cs}. Similar to \Cref{theo:EBI}, it takes $O(m\delta)$ memory in total. When answering a query (\Cref{alg:CBI-Querying}), all groups of wedges need to be enumerated, and their corresponding $\mathcal{CS}$ will be queried exactly once. The total query time will be $O(\widetilde{w}\log n)=\widetilde{w} \times O(\log n)$ by \Cref{theo:cs}. Memory usage is dominated by 2D-range counting data structures for each group of wedges. 
\end{proof}

As a result, \texttt{CBI} manages to resolve the memory issue of \texttt{EBI} when there are too many butterflies. Although $\widetilde{w}$ is bounded by the number of wedges (\textit{i.e.,} $m\delta$), in practice, it is significantly smaller (\textit{i.e.,} $\widetilde{w} \ll m\delta$). For example, in the \texttt{Wiktionary} dataset, there is about $m \delta \approx 1.3 \times 10^{9}$ wedges but only $\widetilde{w} \approx 5 \times 10^{7}$ wedge groups with different $(x, z)$.  

By further observing the distribution of wedges on real-world data, we see a concentration phenomenon that our algorithm has not addressed: Most wedges are grouped in a few groups, while the rest contain only a very small number of wedges. Therefore, building a separate data structure for every wedge group might be too expensive and unnecessary, especially for groups that only contribute a small amount of butterflies. For example, in the \texttt{Wiktionary (WT)} dataset, more than $81.3$\% wedge groups contain less than 10 wedges.

\label{sec:cbi}
\begin{algorithm}[t]
\small
\caption{\small \textsc{CBI-Construction}}\label{alg:CBI-Construction}
\KwIn{a bipartite temporal graph: $G = (V, E)$; }
\KwOut{a hashmap mapping pairs of vertices to multiple $\mathcal{CS}$s: $T_{C}$;}

$T_C \leftarrow$ an empty hashmap mapping pairs of vertices to $\mathcal{CS}$s; \\\label{cbi-c:initds}
\For{each wedge $w_i : \left \langle x \leadsto y \leadsto z \right \rangle [l_i, r_i] \in G$}{\label{cbi-c:enumerate}
    \If{$(x, z) \notin T_{C}.keys()$}{ \label{cbi-c:if}
        Initialize $T_{C}[(x, z)]$ with an empty $\mathcal{CS}$; \label{cbi-c:init}
    }
    $T_{C}[(x, z)].insert(l_i, r_i)$; \\ \label{cbi-c:insert}
}
Preprocess each data structure in $T_C$ (\Cref{theo:cs}); \\
\Return $T_C$; \label{cbi-c:return}
\end{algorithm}

\begin{algorithm}[t]
\small
\caption{\small \textsc{CBI-Querying}}\label{alg:CBI-Querying}
\KwIn{a time-window: $[t_s, t_e]$; a CB-index: $T_{C}$;}
\KwOut{temporal butterfly counting: $num_{[t_s, t_e]}$;}
$num_{[t_s, t_e]} \leftarrow 0$; \\ \label{cbi-q:init}
\For{$(x, z) \in T_{C}.keys()$}{ \label{cbi-q:enumerate}
    $count \leftarrow T_{C}[(x,z)].query([t_s, \infty] \times [-\infty, t_e])$; \\ \label{cbi-q:query}
    $num_{[t_s, t_e]} \leftarrow num_{[t_s, t_e]} + \binom{count}{2}$ \label{cbi-q:sum}
}
\Return $num_{[t_s, t_e]}$; \label{cbi-q:return}
\end{algorithm}

\subsection{Graph Structure-aware Index}

Recall that the number of butterflies dominates the memory usage of \texttt{EBI}, and the number of wedge groups greatly affects the query time of \texttt{CBI}. They fail to take into account the actual structure of the input graph.
If we assume that most wedge groups only contain a very small amount of wedges, we might be able to take advantage of both \texttt{EBI} and \texttt{CBI} by distributing each wedge group into one of $\texttt{EBI}$ and $\texttt{CBI}$ by its size. As the main result of our paper, we propose a new index algorithm named \textit{Graph Structure-aware Index (\texttt{GSI})} with such an idea. Besides its promising performance on real-world data, it can also be parallelized to reach an even better performance. The implementation details are provided in \Cref{alg:GSI-Construction} and \Cref{alg:GSI-Querying}.

\begin{figure}[t!]
    \centering
    \includegraphics[width=1\linewidth]{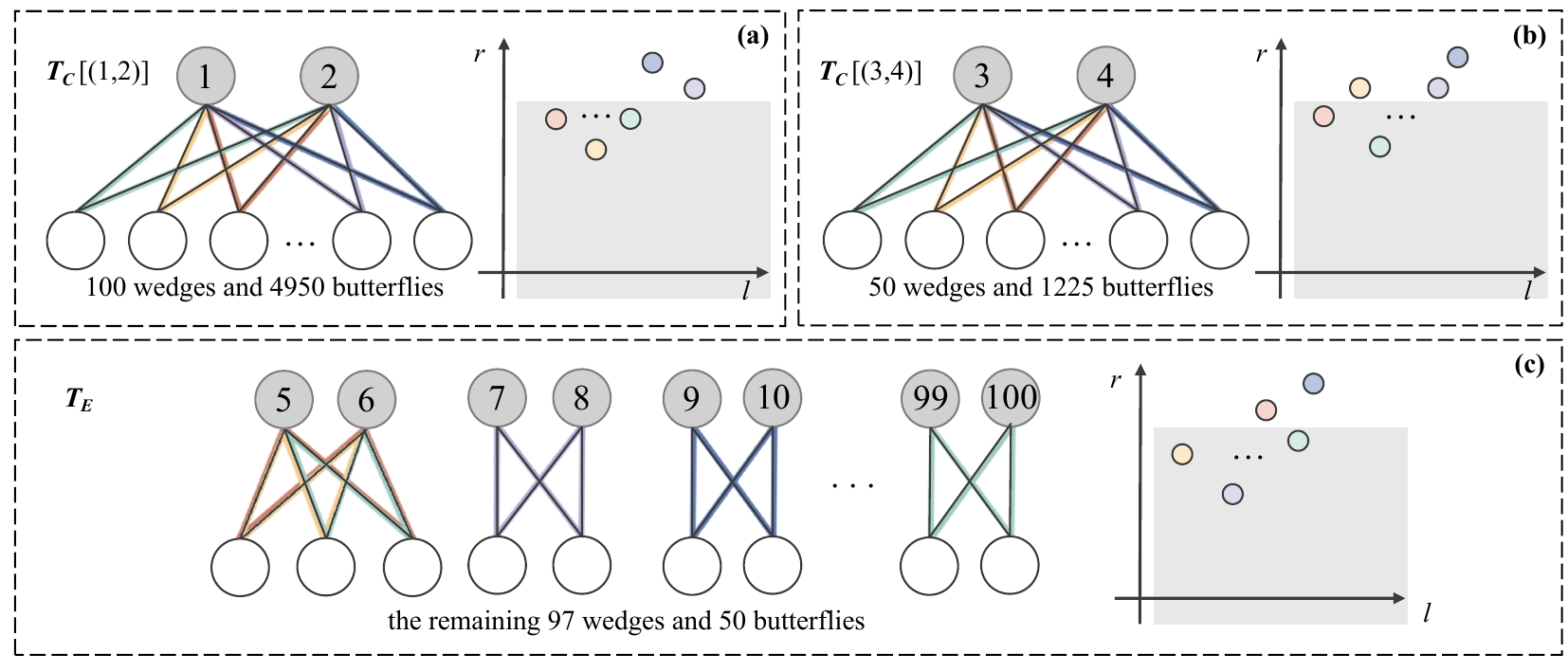}
    \caption{An illustrative example for demonstrating \texttt{GSI}.}
    \label{fig:GIS}
\end{figure}

$\bullet$ \textbf{Construction} The important assumption we make here is that a small number of groups contain most of the wedges. Our key idea is to store them in the same way as \texttt{CBI}, building a 2D-range counting data structure for each index, maintained by a hashmap $T_C$. For the rest of the wedges, we precompute all butterflies constructed from them and store them in one data structure $T_E$ like \texttt{EBI}.
In the beginning, in addition to $T_E$ (\Cref{gsi-c:initte}) and $T_C$ (\Cref{gsi-c:inittc}), we also initialize a hashmap $W$ to group wedges directly (\Cref{gsi-c:initw}).
We enumerate all $w_i : \left \langle x \leadsto y \leadsto z \right \rangle [l_i, r_i]$ in $G$ (\Cref{gsi-c:enumeratew}) and store them into the corresponding set mapped by $W[(x,z)]$ (\Cref{gsi-c:wupdate}) . If the key does not exist in $W$, we need to initialize the set to be empty (\Cref{gsi-c:neww}).
After grouping wedges, we sort the sets in $W$ with decreasing order of sizes. We also need to compute the total number of butterflies and store it in $num$ and $numB$ (\Cref{gsi-c:setnum} to \Cref{gsi-c:setnumb}).

We introduce a parameter $\alpha$ to control the cutoff between the two methods. More specifically, we build a data structure for some sets in $W$ such that the total number of butterflies constructed from them does not exceed $\alpha$ fraction of all the butterflies in $G$. The butterflies constructed from the other sets in $W$ are maintained by one data structure per set.

We enumerate the sets in $W$ by decreasing the order of sizes (\Cref{gsi-c:enumeratewkey}). $num$ represents the number of butterflies we processed. If we have not processed the majority of butterflies, i.e., when $num$ is no less than $\alpha \cdot numB$ (\Cref{gsi-c:ifbig}), we build a $\mathcal{CS}$ for every wedge in $W[(x,z)]$, inserting points representing their active timestamps (\Cref{gsi-c:newtc} to \Cref{gsi-c:inserttc}). $num$ is subtracted by $\binom{|W[(x, z)]|}{2}$, indicating that these of butterflies are recorded by $T_C[(x,z)]$ (\Cref{gsi-c:numupdate}).

When $num$ is smaller than $\alpha \cdot numB$, we have already handled the majority of butterflies. Now, we need to process butterflies built from the rest of the wedges groups. Though each of these groups only contains a relatively small amount of wedges, the number of groups might be large. Here we apply the idea of \texttt{EBI}. We enumerate every possible wedge pair in the current enumerated $W[(x,z)]$ (\Cref{gsi-c:forwi} to \Cref{gsi-c:forwj}), compute the butterfly $b_{ij}[t_l,t_r] \leftarrow w_i \circ w_j$ (\Cref{gsi-c:butterfly}) and insert a point $(t_l,t_r)$ into the global data structure $T_E$ (\Cref{gsi-c:insertte}).
In the end, after preprocessing every $\mathcal{CS}$ in $T_C$ and $T_E$, we return them and finish the construction (\Cref{gsi-c:return}).

An example of construction is shown in \Cref{fig:GIS}. In the input graph, there are 100 wedges with endpoints $(1,2)$ and 50 wedges with endpoints $(3,4)$, while there are only 97 wedges in other groups. We treat these two groups with the \texttt{CBI} method and build $T_C[(1,2)]$ (a), $T_C[(3,4)]$ (b) correspondingly. We compute all 50 butterflies for the remaining wedges and store them in $T_E$ (c).

$\bullet$ \textbf{Answering Query} For a historical butterfly counting of time-window $ [t_s, t_e] $, we need to accumulate answers from both $ T_C$ and $ T_E$. As in \texttt{EBI}, we directly query the 2D range $ [t_s, \infty] \times [-\infty, t_e] $ on $ T_E$ (\Cref{gsi-q:queryte}). Then, as in \texttt{CBI}, we enumerate all groups of wedges from $T_C$ (\Cref{gsi-q:enumerate}). We query $T_C[(x, z)]$ with the same 2D range (\Cref{gsi-q:querytc}) and add up the number of butterflies constructed from each group of wedges (\Cref{gsi-q:sum}). We return the answer as the sum of these two parts (\Cref{gsi-q:return}). In the \Cref{fig:GIS} showcase, we query from all three $\mathcal{CS}$s: $T_C[(1,2)]$, $T_C[(3,4)]$, $T_E$ and sum up the answer. 

\label{sec:gsi}
\begin{algorithm}[t]
\small
\caption{\small \textsc{GSI-Construction}}\label{alg:GSI-Construction}
\KwIn{a bipartite temporal graph: $G = (V, E)$; a parameter for space using: $\alpha$; }
\KwOut{a $\mathcal{CS}$ for indexing (EB-index): $T_{E}$; a hashmap of $\mathcal{CS}$ for indexing: $T_{C}$;}

Initialize $T_{E}$ with an empty $\mathcal{CS}$; \\\label{gsi-c:initte}
$T_C \leftarrow$ an empty hashmap mapping pairs of vertices to $\mathcal{CS}$s; \\ \label{gsi-c:inittc}
$W \leftarrow$ an empty hashmap mapping pairs of vertices to sets; \\ \label{gsi-c:initw}
\For{each wedge $w_i : \left \langle x \leadsto y \leadsto z \right \rangle [l_i, r_i] \in G$}{ \label{gsi-c:enumeratew}
    \lIf{$(x, z) \notin W.keys()$}{ \label{gsi-c:if}
        $W[(x, z)] \leftarrow \emptyset$ \label{gsi-c:neww}
    }
    $W[(x, z)] \leftarrow W[(x, z)] \cup w_i$; \\ \label{gsi-c:wupdate}
}
Sort sets in $W$ with decreasing order of sizes; \\ \label{gsi-c:sort}
$num \leftarrow 0$; \\ \label{gsi-c:setnum}
\lFor{$(x, z) \in W.keys()$}{ \label{gsi-c:enumeratekey}
    $num \leftarrow num + \binom{|W[(x, z)]|}{2}$ \label{gsi-c:computebutterfly}
}
$numB \leftarrow num$; \\ \label{gsi-c:setnumb}
\For{$(x, z) \in W.keys()$}{ \label{gsi-c:enumeratewkey}
    \If{$num \geq \alpha \cdot numB$}{ \label{gsi-c:ifbig}
        Initialize $T_{C}[(x, z)]$ with an empty $\mathcal{CS}$; \\ \label{gsi-c:newtc}
        \For{$w_i : [l_i, r_i] \in W[(x, z)]$}{ \label{gsi-c:forwedge}
            $T_{C}[(x, z)].insert(l_i, r_i)$; \label{gsi-c:inserttc}
        }
    }
    \Else{
        \For{$w_i : [l_i, r_i] \in W[(x, z)]$}{ \label{gsi-c:forwi}
            \For{$w_j : [l_j, r_j] \in W[(x, z)] \land (j > i)$} { \label{gsi-c:forwj}
                $b_{ij} \leftarrow w_i \circ w_j$; \\ \label{gsi-c:butterfly}
                $T_E.insert(t_l, t_r)$\tcp*{$(t_{l}, t_{r})=\mathcal{T}(b_{ij})$} \label{gsi-c:insertte}
            }
        }
    }
    $num \leftarrow num - \binom{|W[(x, z)]|}{2}$; \\ \label{gsi-c:numupdate}
}
Preprocess $T_E$ and each data structure in $T_C$ (\Cref{theo:cs}); \\
\Return $T_E, T_C$; \label{gsi-c:return}
\end{algorithm}

\begin{algorithm}[t]
\small
\caption{\small \textsc{GSI-Querying}}\label{alg:GSI-Querying}
\KwIn{a time-window: $[t_s, t_e]$; GSI-indexes: $T_{E}, T_{C}$;}
\KwOut{temporal butterfly counting: $num_{[t_s, t_e]}$;}
$num_{[t_s, t_e]} \leftarrow T_{E}.query([t_s, \infty] \times [-\infty, t_e])$; \\ \label{gsi-q:queryte}
\For{$(x, z) \in T_{C}.keys()$}{ \label{gsi-q:enumerate}
    $count \leftarrow T_{C}[(x,z)].query([t_s, \infty] \times [-\infty, t_e])$; \\ \label{gsi-q:querytc}
    $num_{[t_s, t_e]} \leftarrow num_{[t_s, t_e]} + \binom{count}{2}$ \label{gsi-q:sum}
}
\Return $num_{[t_s, t_e]}$; \label{gsi-q:return}
\end{algorithm}

$\bullet$ \textbf{Time and Space Complexity} The time and space complexity of \texttt{GSI} is summarized as follows:
\begin{theorem}
\label{thm:complexity}
Denote $\beta$ as the number of $\mathcal{CS}$ building for wedge groups of the input graph $G$ and $\delta$ as the arboricity.
\texttt{GSI}'s construction (\Cref{alg:GSI-Construction}) runs in $O(m\delta\log n + \alpha B \log m)$ and takes $O(m\delta + \alpha B)$ space. After the construction, each query (\Cref{alg:GSI-Querying}) takes $O(\beta \log n + \log m)$ to answer.     
\end{theorem}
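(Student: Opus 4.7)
The plan is to decompose \texttt{GSI}'s accounting into the preprocessing work shared with \texttt{EBI} and \texttt{CBI}, together with two disjoint contributions corresponding to the \texttt{if}-branch (CBI-style) and the \texttt{else}-branch (EBI-style), and then recycle the analyses already established for \Cref{theo:EBI} and \Cref{theo:CBI}. First I would handle the shared preprocessing: enumerating every wedge and bucketing it into $W[(x,z)]$ takes $O(m\delta)$ time and $O(m\delta)$ space by the vertex-priority wedge enumeration of~\cite{wang2019vertex}, and sorting the groups plus computing $numB$ adds at most $O(m\delta \log m) = O(m\delta \log n)$ since the number of groups is bounded by the wedge count and $m \le n^2$.

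For the \emph{CBI portion}, I would observe that the total number of wedges delegated to $T_C$ is at most the total number of wedges, i.e.\ $O(m\delta)$, while a single group holds at most $O(n)$ wedges. Applying \Cref{theo:cs} group by group, all $\mathcal{CS}$'s in $T_C$ together cost $O(m\delta \log n)$ time to preprocess and $O(m\delta)$ space to store, exactly mirroring the \texttt{CBI} bound from \Cref{theo:CBI}.

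For the \emph{EBI portion}, the key invariant is that after the first $k$ iterations of the main loop the value of $num$ equals the number of butterflies still owned by groups not yet processed. Since groups are scanned in decreasing order of size and the \texttt{else}-branch is entered only once $num < \alpha \cdot numB$, the total number of butterflies ever materialized and inserted into $T_E$ is strictly less than $\alpha B$. Each insertion costs $O(1)$ for evaluating the $\circ$ operator plus $O(\log m)$ to insert into $T_E$ by \Cref{theo:cs}, giving $O(\alpha B \log m)$ time and $O(\alpha B)$ space. Summing the two portions with the shared preprocessing yields the claimed construction bounds $O(m\delta \log n + \alpha B \log m)$ time and $O(m\delta + \alpha B)$ space. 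For the query, one range query on $T_E$ costs $O(\log m)$ by \Cref{theo:cs}, and each of the $\beta$ groups retained in $T_C$ contributes one $O(\log n)$ query, for a total of $O(\beta \log n + \log m)$.

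The main obstacle is establishing the cutoff invariant cleanly: one has to argue that the greedy rule on $num$ really does cap the number of butterflies delegated to $T_E$ at $\alpha B$ regardless of how the wedge mass is distributed across groups, and that processing the biggest groups first is what makes the $O(m\delta)$ wedge-storage bound for $T_C$ tight rather than dominated by a pathological case. Once this invariant is formalized, the rest of the proof is a routine concatenation of the wedge-enumeration bound of~\cite{wang2019vertex} with the per-structure bounds of \Cref{theo:cs}.
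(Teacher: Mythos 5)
Your proposal is correct and follows essentially the same route as the paper: split \texttt{GSI} into the shared wedge-enumeration cost, a \texttt{CBI}-style part for the groups kept in $T_C$, and an \texttt{EBI}-style part containing at most $\alpha B$ butterflies in $T_E$, then apply the bounds of \Cref{theo:EBI}, \Cref{theo:CBI}, and \Cref{theo:cs}. The only difference is that you make explicit the cutoff invariant (that $num$ tracks the butterflies in the not-yet-processed groups, so the \texttt{else}-branch materializes fewer than $\alpha B$ butterflies), which the paper leaves implicit; this is a welcome added detail rather than a divergence.
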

\begin{proof}
    The construction process (\Cref{alg:GSI-Construction}) requires enumerating all wedges, which takes $O(m\delta)$ by \cite{wang2019vertex}. The rest of the algorithm can be divided into two parts: a \texttt{CBI} instance of $\beta$ wedges group containing $(1-\alpha)B$ butterflies and an \texttt{EBI} instance containing $\alpha B$ butterflies. By \Cref{theo:EBI} and \Cref{theo:CBI}, the total time complexity is $O(m\delta\log n + \alpha B \log m)$ and the corresponding query complexity (\Cref{alg:GSI-Querying}) is $O(\beta \log n + \log m)$. The memory usage is $O(m\delta + \alpha B)$.
\end{proof}
Note that $\beta$ is determined by $\alpha$ and bounded by the number of wedge groups $\widetilde{w}$.
In real-world datasets with practical memory constraints, $\beta$ is significantly smaller than $\widetilde{w}$ (\textit{i.e.,} $\beta \ll \widetilde{w}$). 
For example, in the \texttt{WT} dataset and under a 500 GB memory constraint, the $\beta$ of \texttt{GSI} is $6986$ when setting the optimal $\alpha$ that maximize the memory utilization, while the $\widetilde{w}$ is $3,925,064$.

As a result, \texttt{GSI} manages to address the concentration phenomenon of the wedges distribution on real-world graphs, taking advantage of both \texttt{EBI} (for scattered wedges) and \texttt{CBI} (for concentrated wedges). It also has flexibility regarding the different limitations of memory. In addition, under the well-known power-law model \cite{aiello2000random}, \texttt{GSI} has a non-trivial complexity, which is analyzed in \Cref{sec:power-law}.

\subsubsection{Automatically Determining $\alpha$}\label{sec:alpha}
The parameter $\alpha$ reflects a trade-off between query time and memory usage. Therefore, selecting a proper $\alpha$ based on the input graph and the memory limit is important. We introduce an easy method to automatically select an efficient $\alpha$ before executing \texttt{GSI}. In the beginning, we set $\alpha=0$, which means that all wedge groups will be maintained separately with a $\mathcal{CS}$ data structure. We sort wedge groups by size from small to large and enumerate them one by one. For each enumerated wedge group $i$ with $w_i$ wedges, we increase $\alpha$ by $\binom{w_i}{2}/B$, where $B$ is the number of butterflies, precomputed in \Cref{gsi-c:setnum} to \Cref{gsi-c:computebutterfly}. We can estimate the memory usage by simple calculation as Chazelle’s structure's memory usage can be precomputed based on the number of inserted points in $O(1)$. We stop and determine $\alpha$ if the estimated memory exceeds the limit with the next wedge group. This process takes $O(m\delta)$ time and avoids actually building indexes.

\subsubsection{Parallelized Querying}
After the construction process, when answering a query (\Cref{alg:GSI-Querying}), we are, in fact, summing up the answer from several independent $\mathcal{CS}$ data structures, which motivates us to utilize parallelization to speed up.
Assume we have multiple threads, and these threads can handle different $\mathcal{CS}s$ simultaneously. 
Since no conflict occurs when these threads read the indexes simultaneously, we consider \Cref{alg:GSI-Querying} is highly parallelizable when the number of $\mathcal{CS}$s greater than the number of threads.
To demonstrate, we implement and evaluate a parallel version of \texttt{GSI} in \Cref{sec:eval-parallel}.

\subsection{Index Compression}
\label{sec:compression}
\texttt{EBI} has already reached the memory lower-bound in \Cref{thm:hardness}. Meanwhile, \texttt{GSI} provides a lower and adjustable memory usage in practice, with a better theoretical guarantee in the power-law graph.
Both our algorithms are designed for exact solutions. 
However, in some applications of butterfly counting, such as graph kernel analysis \cite{sheshbolouki2022sgrapp} and network measurement~\cite{sanei2018butterfly, wang2014rectangle}, using approximation counts is sufficient.
In this section, we show that if we allow a small error ratio (e.g., $\leq 10^{-6}$\% ) of the butterfly counting, we can reach a much lower memory usage on real-world graphs. Two index compression methods for approximate historical butterfly counting with strong theoretical guarantees are introduced.

\paragraph{Single-sided Compression (SGSI)}
Intuitively, most butterflies in \Cref{alg:GSI-Querying} come from $T_C$ (\Cref{gsi-q:enumerate} to \Cref{gsi-q:querytc}) while the actual memory usage of $T_E$ is much larger in general.
This inspires us only to maintain a small portion (e.g., $\frac{1}{1000}$) of all $T_E$'s butterflies.
We denote $\lambda_1$ as the single-sided compression ratio,  which indicates we only preserve $\lambda_1 \cdot \alpha B$ butterflies, where $\alpha B$ is the total number of butterflies in $T_E$. Each butterfly is preserved with probability $\lambda_1$ independently. In other words, we modify the \texttt{GSI} algorithm so that when \texttt{GSI} inserts a 2D point into $T_E,$ the insertion is actually performed with probability $\lambda_1.$ 
We denote the compressed $T_E$ by $\widetilde{T_E}.$ To answer queries, we return $\widetilde{T_E}.query([t_s, \infty]\times [-\infty, t_e]) / \lambda_1$ (\Cref{gsi-q:queryte} in \Cref{alg:GSI-Querying}).
This results in a compressed memory usage of $O(m\delta\log n + \lambda_1 \cdot \alpha B \log m)$.

\paragraph{Double-sided Compression (DGSI)} \texttt{SGSI} only compresses on $T_E$, which does not affect the query efficiency. If we further compress $T_C$, we can reach a faster query time and even lower memory usage. Specifically, in \texttt{DGSI}, besides compressing $T_E$ as in \texttt{SGSI}, we only maintain $\lambda_2$ portion of wedges for each group in $T_C$'s $\mathcal{CS}$s. Each wedge is preserved with probability $\lambda_2$ independently.
We denote the compressed $T_C$ by $\widetilde{T_C}.$ To answer queries, each $\widetilde{\mathcal{CS}}$ in $\widetilde{T_C}$ contributes $\binom{count}{2}/\lambda_2^2$ (\Cref{gsi-q:sum} in \Cref{alg:GSI-Querying}) to the total answer where $count$ is the number of wedges in $\widetilde{\mathcal{CS}}$ in the projected graph.
This results in an $O(\beta \log (\lambda_2 n) + \log (\lambda_1 m))$ query time, which is more efficient than our exact algorithm, \texttt{GSI}.

In our technical report Part \ref{append:compression}, we prove that both compression methods are unbiased and bound their errors by the following theorems.
\begin{theorem}[Compressing $T_E$]
\label{thm:SGSI}
    For any historical butterfly counting query $[t_s, t_e]$ on a bipartite graph $G$, let $B$ denote the correct number of butterflies in $G_{[t_s, t_e]}$ that is maintained by $T_E,$ i.e., $B=T_E.query([t_s, \infty]\times [-\infty, t_e]).$ Let $B'$ denote the number of butterflies computed by the compressed $T_E$ in \texttt{SGSI} with compression ratio $\lambda_1$, i.e., $B'=\widetilde{T_E}.query([t_s, \infty]\times [-\infty, t_e]) / \lambda_1.$
    $B'$ is unbiased, i.e., the expectation of $B'$ is equal to $B.$ The absolute error $|B-B'|$ is $O({B}^{0.5}\log^{0.5}(n)\lambda_1^{-0.5})$ with high probability, where $n$ is the number of vertices.
\end{theorem}

\begin{theorem}[Compressing $T_C$]
\label{thm:DGSI}
    For any historical butterfly counting query $[t_s, t_e]$ on a bipartite graph $G$, let $S$ denote the correct number of butterflies in $G_{[t_s, t_e]}$ that is maintained by $T_C,$ i.e., the total increment of $num_{[t_s, t_e]}$ in \Cref{gsi-q:sum} of \Cref{alg:GSI-Querying}. Let $S'$ denote the number of butterflies computed by the compressed $\widetilde{T_C}$ in \texttt{DGSI} with compression ratio $\lambda_2$. $S'$ is unbiased, i.e., the expectation of $S'$ is equal to $S.$ The absolute error $|S-S'|$ is $O({S}^{0.75}\ln(n)^{0.75}\lambda_2^{-1.75})$ with high probability, where $n$ is the number of vertices.
\end{theorem}

\Cref{thm:SGSI} directly bounds the error of \texttt{SGSI}. To bound the error of \texttt{DGSI}, we may add the error bounds for compressing the $T_E$ part (\Cref{thm:SGSI}) and compressing the $T_C$ part (\Cref{thm:DGSI}).

\section{ANALYSIS ON POWER-LAW GRAPHS}
\label{sec:power-law}

\hspace{1em} $\bullet$ \textbf{Power-law Bipartite Graphs} Previous research~\cite {guillaume2006bipartite, vasques2018degree} shows that many real-world bipartite graphs follow the \textit{power-law} distribution with $\gamma$ generally between $2$ to $3$, similar to the \textit{scale-free} graphs model for unipartite graphs.
By leveraging the properties of the power-law degree distribution, there are many studies on analyzing algorithms based on such settings~\cite{zhang2023scalable, yang2021efficient, latapy2008main, brach2016algorithmic}. 
In this section, we prove that \texttt{GSI} runs in $\Otil(\lambda)$ time and $O(m^{2-\delta}/\lambda^2)$ memory for some $\delta>0$ with high probability on power-law bipartite graphs~\cite{aiello2000random} with $\gamma \in (2, 3)$, in contrast to the hardness result in \Cref{sec:hardness} for the general case. 
We use the following model for $G$.

\begin{mydef}
\label{def:power-law}
Let $n_1$ be the number of vertices in $U$. Let $n_2$ be the number of vertices in $L$. Let $\gamma_i$, $\Delta_i$ ($i=1,2$) be the exponents and max degrees of two power-law distributions. A power-law bipartite graph is obtained by the following process: \textbf{(1)} For each vertex $x$ in $U$, sample $d_x\in [1,\Delta_1]$ such that $\pr{}{d_x=k}\propto k^{-\gamma_1}$. \textbf{(2)} For each vertex $y$ in $L$, sample $d_y\in [1, \Delta_2]$ such that $\pr{}{d_y=k}\propto k^{-\gamma_2}$. \textbf{(3)} Let $m=\expec{}{\sum_{x\in U}d_x}=\expec{}{\sum_{y\in L}d_y}$ be the expected number of edges in $G$. \textbf{(4)} For each pair of vertices $x \in U$, $y\in L$, sample the existence of the edge $(x, y)$ such that $\pr{}{(x, y)\text{exists}}=\frac{d_xd_y}{m}$.
\end{mydef}
For \Cref{def:power-law} to be well-defined, we need to choose parameters such that $\expec{}{\sum_{x\in U}d_x}= \expec{}{\sum_{y\in L}d_y}$, i.e., the expected sums of degrees for vertices in $U$ and $L$ are equal.
We note that the sampled $d_x$ and $d_y$ values are intermediate variables of the sampling process. They are not necessarily the same as the degrees $\deg_x$ and $\deg_y$ of vertex $x$ and $y$. We can interpret $d_x$ as the expected degree of $x$.
Based on \Cref{def:power-law}, there are two following two types of power-law bipartite graphs to model the real-world networks comprehensively.

\begin{mydef} [Double-sided Power-law Bipartite Graph]
\label{def:double-power-law}
    A double-sided power-law bipartite graph is a power-law bipartite graph (\Cref{def:power-law}) satisfying $\gamma_1\in (2, 3)$, and $\gamma_2\in (2, 3).$
\end{mydef}

\begin{mydef} [Single-sided Power-law Bipartite Graph]
\label{def:single-power-law}
    A single-sided power-law bipartite graph is a power-law bipartite graph (\Cref{def:power-law}) satisfying $\gamma_1\in (2, 3)$, $\gamma_2=0,$ and $\Delta_1 > \Delta_2.$
\end{mydef}

Twitter exemplifies a single-sided power-law bipartite graph, where the distribution of followers per user varies significantly, often displaying a vast disparity. In contrast, the number of accounts each user follows tends to be more uniform and comparatively narrow in range. 
In contrast, movie-actor networks exhibit a double-sided power-law distribution. Most actors appear in just a few films, and the majority of films feature only a small number of actors. However, a select group of highly connected actors mirrors the highly-followed users on social media platforms, while films with large casts resemble ``super-connected'' nodes.

$\bullet$ \textbf{Time and Space Complexity} We calculate the expected time and space of \texttt{GSI} for each type of power-law bipartite graph.
\begin{theorem}[\GSI~on double-sided power-law bipartite\\ graphs]
\label{thm:double_power_law}
    Let $G$ be a single-sided power-law bipartite graph.
    We can set $\alpha$ in \Cref{alg:GSI-Construction}, such that the expected space usage of \GSI~is $O(\Delta^{6-2\gamma})$, and that the query time is $\Otil(1).$ We can also set $\alpha$ such that the expected space usage is $O\left(n\Delta^{3-\gamma}\right)$, and that the expected query time is $\Otil(n^2\min(\Delta_1, \Delta_2)^{2-2\gamma}).$
\end{theorem}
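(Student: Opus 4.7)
The plan is to specialize the generic complexity bound of \Cref{thm:complexity} (space $O(m\delta+\alpha B)$, query $O(\beta\log n+\log m)$) to the random graph model of \Cref{def:power-law} and then tune $\alpha$ to hit each of the two claimed operating points. The first step is to compute the moments of the truncated power-law distribution on $[1,\Delta]$ with exponent $\gamma\in(2,3)$: by standard integration, $E[d]=\Theta(1)$, $E[d^2]=\Theta(\Delta^{3-\gamma})$, and $E[d^k]=\Theta(\Delta^{k+1-\gamma})$ for $k\ge 2$. The ``edge-end'' distribution $\Pr[\text{endpoint of a random edge has }d=k]\propto k\Pr[d=k]$ has its own closed-form moments, which drive the degeneracy calculation.

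The second step is to translate these moments into expectations for the four quantities appearing in the generic bound. I obtain $E[m]=\Theta(n)$, and for $m\delta=\sum_e\min(\deg_u,\deg_v)$ I evaluate $E[\min]$ over the edge-end distribution to get $E[m\delta]=O(n\Delta^{3-\gamma})$. For the butterfly count, expanding $B=\sum_{y_1<y_2}\binom{|N(y_1)\cap N(y_2)|}{2}$ and using $E[|N(y_1)\cap N(y_2)|]=d_{y_1}d_{y_2}nE[d^2]/m^2$ yields a closed-form estimate with the mass concentrated on high-degree pairs. For the number of nonempty wedge groups $\widetilde w$, I use $\Pr[(x,z)\text{ share a neighbor}]\le\min(1,d_xd_z\Delta^{3-\gamma}/n)$ and sum asymmetrically against the two sides' degree marginals, which is where the $\min(\Delta_1,\Delta_2)^{2-2\gamma}$ factor in the second bound arises.

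The third step is to tune $\alpha$ for each operating point. For the second bound, set $\alpha=0$: \texttt{GSI} degenerates to \texttt{CBI}, so the space becomes $O(m\delta)=O(n\Delta^{3-\gamma})$ and the query time $O(\beta\log n+\log m)=\widetilde{O}(\widetilde w)=\widetilde{O}(n^2\min(\Delta_1,\Delta_2)^{2-2\gamma})$, matching the statement. For the first bound, I choose $\alpha$ so that the \texttt{EBI} portion $\alpha B$ is $O(\Delta^{6-2\gamma})$ while only $O(\mathrm{polylog}\,n)$ wedge groups remain in $T_C$; the heavy tail of the power-law concentrates the butterfly mass onto the top few wedge groups, so such a choice leaves $m\delta$ absorbed in the regime under consideration, gives total space $O(\Delta^{6-2\gamma})$, and produces query time $\widetilde{O}(1)$. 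Chernoff and Efron--Stein concentration, applied separately to the two layers of randomness (the $d_v$ samples and the edge Bernoullis), lift the expectation bounds to the high-probability bounds needed.

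The main obstacle will be the quantitative heavy-tail claim supporting the first operating point: a plain expectation calculation on $B$ is not enough, and I must show that with high probability the top $O(\mathrm{polylog}\,n)$ wedge groups capture all but an $O(\Delta^{6-2\gamma})$ share of the butterfly count. The cleanest path is a dyadic bucketing of vertex pairs $(x,z)$ by the product $d_xd_z$, computing the expected butterfly contribution per bucket from the moment formula above, applying a bounded-differences concentration inside each bucket, and then union-bounding over the $O(\log^2\Delta)$ buckets; the heavy tail forces the top bucket to dominate, which is what makes $\beta=O(\mathrm{polylog}\,n)$ achievable. Each individual concentration step is standard, but the dyadic bookkeeping and the matching of leading terms across the two bounds is the real combinatorial work.
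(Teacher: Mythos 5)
Your first operating point is over-engineered but salvageable; your second operating point has a genuine gap. For the first bound, no heavy-tail or dyadic-bucketing argument is needed at all: a direct moment computation (the paper's \Cref{lem:butterfly_ge_1}) shows that the \emph{total} expected butterfly count is already $\expec{}{B}=\Theta\bigl(\frac{n^4}{m^4}\Delta_1^{3-\gamma_1}\Delta_2^{3-\gamma_2}\bigr)=O(\Delta^{6-2\gamma})$ since $m=\Theta(n)$, so one simply sets $\alpha\ge 1$ and stores every butterfly in $T_E$ (pure \texttt{EBI}). The ``main obstacle'' you identify --- showing that $O(\mathrm{polylog}\,n)$ wedge groups capture almost all butterflies --- does not arise. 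Moreover the theorem only claims \emph{expected} space usage, so the Chernoff and Efron--Stein machinery is unnecessary; the paper proves only expectation bounds.

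The real problem is the second operating point. With $\alpha=0$ the query time is governed by the number of \emph{nonempty} wedge groups $\widetilde w$, and your claimed bound $\widetilde w=O(n^2\min(\Delta_1,\Delta_2)^{2-2\gamma})$ is false: the dominant contribution to $\expec{}{\widetilde w}$ comes from the $\Theta(n^2)$ typical pairs with $d_x,d_z=O(1)$, each of which shares a neighbor with probability $\Theta(\Delta^{3-\gamma}/n)$, giving $\expec{}{\widetilde w}=\Omega(n\Delta^{3-\gamma})$; for $\gamma=2.5$ and $\Delta=\sqrt n$ this is $n^{1.25}$ against the claimed $n^{0.5}$. The quantity that does satisfy the $O(n^2k^{2-2\gamma})$ bound is $|P_{\ge k}|$, the number of pairs with \emph{both} $d_x,d_z\ge k$ (\Cref{lem:expec_p_ge_k}), which is far more restrictive than ``shares a neighbor.'' The paper's proof therefore cannot avoid a nonzero \texttt{EBI} portion: it places only the pairs in $P_{\ge k}$ with $k=\min(\Delta_1,\Delta_2)$ into $T_C$, and the key lemma missing from your plan is that the butterflies containing no such heavy pair number only $\expec{}{\#\Join_{\ge 1}-\#\Join_{\ge k}}=O(\Delta^{3-\gamma})$ (\Cref{lem:butterfly_ge_k}), so they can be absorbed into $T_E$ within the $O(n\Delta^{3-\gamma})$ space budget while the query touches only $\Otil(|P_{\ge k}|)$ structures. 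Without that decomposition --- heavy pairs to \texttt{CBI}, the provably few leftover butterflies to \texttt{EBI} --- the claimed query time is unreachable.
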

We note that for both settings of \Cref{thm:double_power_law}, the time and space complexities surpass the lower bound in \Cref{thm:hardness}. This shows that the historical butterfly counting problem on power-law graphs are not as hard as the general case, and that \GSI~successfully exploits the features of the power-law graphs to reduce the computation cost. The first setting has $\Otil(1)$ time complexity and $o(n^2)$ space complexity since $\Delta\le n$ and $\gamma > 2.$ The second setting has $\Otil(\lambda)=\Otil(n^2\min(\Delta_1, \Delta_2)^{2-2\gamma})$ time complexity and $o(n^2/\lambda^2)$ space complexity when $k>n^{\frac{3}{4\gamma-4}}\Delta^{\frac{3-\gamma}{4\gamma-4}}.$

\begin{theorem}[\GSI~on single-sided power-law bipartite\\ graphs]
\label{thm:single_power_law}
    Let $G$ be a single-sided power-law bipartite graph.
    
    We can set $\alpha$ in \Cref{alg:GSI-Construction}, such that the expected space usage of \GSI~is $O\left(\Delta_2+\left(\frac{\Delta_1^{6-\gamma_1}}{\Delta_2^5}\right)\right)$ and that the expected query time is $\Otil\left(1\right).$
\end{theorem}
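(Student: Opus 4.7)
The plan is to combine Theorem \ref{thm:complexity} with explicit moment computations on the single-sided power-law model of \Cref{def:single-power-law}, then choose the parameter $\alpha$ (as a function of $\Delta_1$, $\Delta_2$, $\gamma_1$) so that the expected number of wedge groups sent to \texttt{CBI} is $\beta=O(1)$, while the remaining butterflies are handled by \texttt{EBI}. Plugging this into Theorem \ref{thm:complexity} then gives expected query time $O(\beta\log n+\log m)=\Otil(1)$ and expected space $O(m\delta+\alpha B)$.

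First, I would derive the basic moments from the model. Since $\gamma_1\in(2,3)$, one has $\expec{}{d_x}=\Theta(1)$ and $\expec{}{d_x^2}=\Theta(\Delta_1^{3-\gamma_1})$ for $x\in U$. Since $\gamma_2=0$, $d_y$ is uniform on $[1,\Delta_2]$, so $\expec{}{d_y}=\Theta(\Delta_2)$ and $\expec{}{d_y^2}=\Theta(\Delta_2^2)$. The balance condition $\expec{}{\sum_{x\in U}d_x}=\expec{}{\sum_{y\in L}d_y}$ then forces $n_1=\Theta(n_2\Delta_2)$ and $m=\Theta(n_1)$, tying every subsequent quantity to $\Delta_1$ and $\Delta_2$.

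Second, I would analyze expected wedge and butterfly counts using $\pr{}{(x,y)\text{ exists}}=d_xd_y/m$. Since $\Delta_1>\Delta_2$, the vertex priority rule (\Cref{def:priority}) makes the relevant wedges the $U\to L\to U$ ones, naturally grouped by endpoint pairs $(x,z)\in U^2$, with $\expec{}{W_{x,z}}=d_xd_z\sum_y d_y^2/m^2$ and $\expec{}{B}=\sum_{x<z}\expec{}{\binom{W_{x,z}}{2}}$. The degeneracy bound $m\delta\le \sum_y \deg_y^2$ in expectation reduces to $O(m\Delta_2)$, matching the $\Delta_2$ term in the claimed space. The heavy-tail of the $U$-side degrees means the product $d_xd_z$ is dominated by pairs involving the top-degree $U$-vertices, so setting the \GSI\ cutoff just above those top groups puts $\beta=O(1)$ groups into \texttt{CBI} and leaves a controllable number of butterflies, totalling $\alpha B$, for \texttt{EBI}; substituting the moment expressions into $O(m\delta+\alpha B)$ should then produce the claimed $O(\Delta_2+\Delta_1^{6-\gamma_1}/\Delta_2^5)$.

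The main obstacle will be pinning down exactly which groups lie above the cutoff and bounding $\expec{}{\alpha B}$ sharply. This needs more than just $\expec{}{W_{x,z}}$: I would need the fourth moment of $d_x$ (to control $\expec{}{(\sum_x d_x^2)^2}$ when expanding $\expec{}{B}$) and tail estimates for the top order statistics of the truncated power-law (to argue that the dominant $\beta=O(1)$ vertices on the $U$-side concentrate with the right magnitude). A secondary technicality is verifying that wedges with middle vertex in $U$ contribute only a lower-order correction compared to those with middle vertex in $L$; this should follow from $\Delta_1>\Delta_2$ together with the same moment bounds, but needs to be made explicit so it can be safely absorbed into the stated expected-space formula.
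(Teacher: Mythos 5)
Your proposal heads in a workable general direction (moment computations on the model, then a choice of $\alpha$), but it diverges from the paper's proof in a way that leaves the key step unresolved and contains one concrete error.

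First, the error: you claim that the wedge-storage term $m\delta \le \sum_{y}\deg_y^2 = O(m\Delta_2)$ ``matches the $\Delta_2$ term in the claimed space.'' It does not --- $O(m\Delta_2)$ exceeds $O(\Delta_2)$ by a factor of $m$, and the stated bound $O\bigl(\Delta_2+\Delta_1^{6-\gamma_1}/\Delta_2^5\bigr)$ has no dependence on $m$ or $n$ at all. The paper avoids this charge entirely: in its accounting, wedge storage is only paid for groups kept in $T_C$, while groups sent to $T_E$ contribute only their butterfly points. Your plan of keeping $\beta=O(1)$ heavy groups in $T_C$ would therefore also require you to bound the number of wedges in those groups, which you do not do.

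Second, and more importantly, you miss the observation that makes the single-sided case easy. The paper shows that the \emph{total} expected butterfly count is already within the claimed bound: by independence of the sampled $d$-values,
\begin{align*}
\expec{}{\#\Join_{\ge 1}} \;=\; \binom{n_1}{2}\binom{n_2}{2}\frac{1}{m^4}\,\mathbb{E}^2\!\left[d_{v_1}^2\right]\mathbb{E}^2\!\left[d_{v_2}^2\right],
\end{align*}
and substituting $\mathbb{E}[d_{v_1}^2]=\Theta(\Delta_1^{3-\gamma_1})$, $\mathbb{E}[d_{v_2}^2]=\Theta(\Delta_2^{2})$, $n_1=\Theta(m)$, $n_2=\Theta(m/\Delta_2)$ yields $O\bigl(\Delta_2+\Delta_1^{6-\gamma_1}/\Delta_2^{5}\bigr)$. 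Hence one simply sets $\alpha>1$: \emph{every} butterfly goes into $T_E$, $T_C$ is empty, so $\beta=0$, the query time is $\Otil(1)$, and the index stores only the butterfly points. No cutoff analysis, no identification of heavy groups, and none of the machinery you flag as the ``main obstacle'' is needed: the fourth moment of $d_x$ never enters the leading term, because $\expec{}{\sum_{y<w}d_y^2d_w^2}=\binom{n_2}{2}\expec{}{d_y^2}^2$ by independence of distinct vertices, and no tail estimates for top order statistics are required since no vertices are singled out. As written, your proposal would need substantial additional work (bounding the CBI wedge storage and the sub-cutoff butterfly count) to close, and the $m\delta$ miscalculation would still break the stated bound.
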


Similar to the double-sided case, \Cref{thm:single_power_law} solves the historical butterfly counting problem with a better trade-off than \Cref{thm:hardness}.

\modify{The complete proofs of \Cref{thm:double_power_law} and \Cref{thm:single_power_law} are provided in the technical report.}

\section{EXPERIMENTS}~\label{sec:exp}
\begin{table}
\centering
\caption{The summary of datasets, including the size of vertices (\textit{i.e.,} $|U|$ and $|L|$) and edges (\textit{i.e.,} $|E|$), butterfly counts (\textit{i.e.,} $B$), and the $\gamma$ value of power-law exponent assuming the graph's degree follows a power-law distribution (\textit{i.e.,} $\hat{\gamma}$). For synthetic datasets, we provide the $\gamma$ value for each side (\textit{i.e.,} $\gamma_{1}$ and $\gamma_{2}$).}

\label{tab:dataset}
\resizebox{1.0\linewidth}{!}{
\begin{tabular}{|c|rrrrc|} 
\hline
Real-world dataset & \multicolumn{1}{c}{$|U|$} & \multicolumn{1}{c}{$|L|$} & \multicolumn{1}{c}{$|E|$} & \multicolumn{1}{c}{$B$}          & $\hat{\gamma}$   \\ 
\hline
\hline
Wikiquote (WQ)     & 961                       & 640,482                   & 776,458                   & 5e9                              & 4.281            \\
Wikinews (WN)      & 2,200                     & 35,979                    & 907,499                   & 3e11                             & 2.654            \\
StackOverflow (SO) & 545,196                   & 96,680                    & 1,301,942                 & 1e7                              & 2.797            \\
CiteULike (CU)     & 153,277                   & 731,769                   & 2,411,819                 & 6e8                              & 2.325            \\
Bibsonomy (BS)     & 204,673                   & 767,447                   & 2,555,080                 & 8e8                              & 2.209            \\
Twitter (TW)       & 175,214                   & 530,418                   & 4,664,605                 & 1e12                             & 2.460            \\
Amazon (AM)        & 2,146,057                 & 1,230,915                 & 5,838,041                 & 3e7                              & 2.957            \\
Edit-ru (ER)       & 7,816                     & 1,266,349                 & 8,349,235                 & 1e13                             & 5.301  \\
Edit-vi (EV)       & 72,931                    & 3,512,721                 & 25,286,492                & 3e14                             & 2.017            \\
Wiktionary (WT)    & 66,140                    & 5,826,113                 & 44,788,448                & 1e16                             & 1.826            \\ 
\hline
\hline
Synthetic dataset & \multicolumn{1}{c}{$|U|$} & \multicolumn{1}{c}{$|L|$}   & \multicolumn{1}{c}{$|E|$} & \multicolumn{1}{c}{$\gamma_{1}$} & $\gamma_{2}$     \\ 
\hline
\hline
DoublePower-law (DPW)               & 25,599,878                     & 25,599,878                     & 100,000,000                       & \multicolumn{1}{c}{2.1}          & 2.1              \\

SinglePower-law (SPW)               & 21,020,985                       & 199,802                       & 100,000,000                       & \multicolumn{1}{c}{2.1}          & 0                \\
\hline
\end{tabular}
}
\end{table}

$\bullet$ \textbf{Datasets} We use 10 large-scale real-world datasets to evaluate our algorithm.
All real-world dataset sources and more detailed statistics are available at KONECT\footnote{from http://konect.cc/, where the $\hat{\gamma}$ of \texttt{ER} only considers the tail degrees (i.e., $\gamma_t$) because the global value missed in the website.}. 
In addition, we randomly generate 2 power-law bipartite graphs with billion edges according to the models in \Cref{sec:power-law}, where their timestamps are also uniformly sampled from $[1, |E|]$.
The dataset statistics are presented in \Cref{tab:dataset}.

$\bullet$ \textbf{Algorithms} Our algorithms are implemented in C++ with STL used.
The implementation of \texttt{BFC-VP++} is obtained from their authors.
As \texttt{WPS} is an approximation method, the reported running time for \texttt{WPS} represents the time required first to reach a relative error of 10\%.

$\bullet$ \textbf{Hyperparameter Settings}
The only hyperparameter for our methods is $\alpha$. In our study of space-query trade-offs (\Cref{sec:sq-trade-offs}), we determine the appropriate $\alpha$ using the algorithm described in Section \ref{sec:alpha} to constrain memory usage. We employ the same algorithm for all other experiments to optimize $\alpha$ for maximal memory utilization.

All experiments are conducted on a Ubuntu 22.04 LTS workstation, equipped with an Intel(R) Xeon(R) Gold 6338R 2.0GHz CPU and 512GB of memory.

\subsection{Efficiency}
\label{sec:eval-eff}
\subsubsection{Query Processing}

Shown in \Cref{fig:main_time}, the efficiency of \texttt{GSI}, as well as the two baseline algorithms \texttt{BFC-VP++}, \texttt{WPS}, is compared on various datasets, with $5000$ random queries. 

On real-world datasets, \texttt{GSI} demonstrates a significant speedup ranging from $100 \times$ to $10000 \times$ comparing to \texttt{BFC-VP++}, $100 \times$ to $1000 \times$ comparing to \texttt{WPS}. In most datasets, \texttt{GSI} needs less than $1$ seconds to answer all $5000$ queries online, even when the dataset has around $4 \times 10^6$ vertices (\texttt{AM}). In general, the query time is proportional to the number of edges. For the largest dataset \texttt{WT}, which has around $6 \times 10^6$ vertices and $4.5 \times 10^8$ edges, all queries can be answered within $10^2$ seconds, which means less than $0.1$ seconds for each query in average. This shows the consistency and efficiency of our algorithms under very large real-world datasets. 

On the synthetic dataset, we generated, our algorithm shows an extremely large speedup greater than $100000 \times$. Especially on \texttt{SPW}, while both \texttt{BFC-VP++} and \texttt{WPS} cannot process all queries within $10^5$ seconds, our algorithms produce answers within $10^{-1}$ seconds.

\input{99plot.tex}

\begin{figure}[t!]
    \centering
       	\begin{tikzpicture}[scale=0.45]
        		\begin{axis}[
                    grid = major,
        			ybar=0.11pt,
        			bar width=0.3cm,
        			width=0.85\textwidth,
    				height=0.3\textwidth,
        			xlabel={\huge \bf Dataset}, 
        			xtick=data,	xticklabels={WQ, WN, SO, CU, BS, TW, AM, ER, EV, WT, DPW, SPW},
                     legend style={at={(0.5,1.30)}, anchor=north,legend columns=-1,draw=none},
                           legend image code/.code={
                    \draw [#1] (0cm,-0.263cm) rectangle (0.4cm,0.15cm); },
        			xmin=0.8,xmax=25.2,
    					ymin=0.01, ymax = 100000,
                         ytick = {0.01, 0.1, 1, 10, 100, 1000, 10000, 100000},
    	        yticklabels = {$10^{-2}$, $10^{-1}$, $10^0$,$10^1$, $10^2$, $10^3$, $10^4$, $\geq 10^{5}$},
                    ymode = log,    
                        log basis y={2},
                        log origin=infty,
        			tick align=inside,
        			ticklabel style={font=\huge},
        			every axis plot/.append style={line width = 1.6pt},
        			every axis/.append style={line width = 1.6pt},
                        ylabel={\textbf{\huge time (s)}},
        			]
        			\addplot[fill=p1] table[x=datasets,y=VP]{\HBFC};
        			\addplot[fill=p2] table[x=datasets,y=APPROX]{\HBFC};
           			\addplot[fill=p4] table[x=datasets,y=GIS]{\HBFC};
                \legend{\huge {\tt BFC-VP++ $\ $},\huge {\tt WPS $\ $}, \huge {\tt GSI}}
            		\end{axis}
        	\end{tikzpicture}
    	\caption{Time cost of 5,000 random queries for historical butterfly counting on datasets, where all algorithms use sequential versions.}
    \label{fig:main_time}
\end{figure}

\subsubsection{Index Construction}
The reported time for index construction is reported in \Cref{tab:index}.
For most datasets, even considering the time of index construction, \texttt{GSI}'s performance outperforms the two baselines, indicating that the index construction performance of \texttt{GSI} is both reasonable and practical.

\begin{table}
\centering
\caption{The index construction time for \texttt{GSI} in \Cref{fig:main_time}.}
\label{tab:index}
\resizebox{0.65\linewidth}{!}{
\label{tab:dataset}
\begin{tabular}{|c|cccccc|} 
\hline
Dataset  & WQ                     & WN    & SO    & CU    & BS    & TW   \\
time (s) & 74                     & 6     & 9     & 129   & 207   & 217  \\ 
\hline
\hline
Dataset  & AM & ER    & EV    & WT    & DPW   & SPW  \\ 
time (s) & 39                     & 3,654 & 3,842 & 3,964 & 2,566 & 341  \\
\hline
\end{tabular}
}
\end{table}

\pgfplotstableread[row sep=\\,col sep=&]{
	k &GIS & BFC \\
    0 & 0.047 & 2027.211 \\
	  2 & 0.053 & 1036.095 \\ 
        4 & 0.058 & 501.685  \\  
	6 & 0.063 & 269.639 \\ 
	8 & 0.105 & 177.810 \\ 
        10 & 0.207 & 138.831 \\ 
}\paraTW

\pgfplotstableread[row sep=\\,col sep=&]{
	k &GIS & BFC \\
    0 & 1.541 & 1566.944 \\
	  2 & 0.795 & 871.201 \\ 
        4 & 0.436 & 551.605  \\  
	6 & 0.263 & 322.960 \\ 
	8 & 0.256 & 271.642 \\ 
        10 & 0.610 & 251.374 \\ 
}\paraER

\pgfplotstableread[row sep=\\,col sep=&]{
	k &GIS & BFC \\
    0 & 11.265 & 4157.672 \\
	  2 & 5.123 & 2488.491 \\ 
        4 & 2.669 & 1546.447  \\  
	6 & 1.547 & 1076.379 \\ 
	8 & 0.872 & 885.177 \\ 
        10 & 0.633 & 756.897 \\ 
}\paraEV

\pgfplotstableread[row sep=\\,col sep=&]{
	k &GIS & BFC \\
    0 & 89.235 & 12570.343 \\
	  2 & 36.652 & 6792.164 \\ 
        4 & 20.057 & 3593.446  \\  
	6 & 11.812 & 2449.515 \\ 
	8 & 6.030 & 1843.104 \\ 
        10 & 4.973 & 1573.131 \\ 
}\paraWT

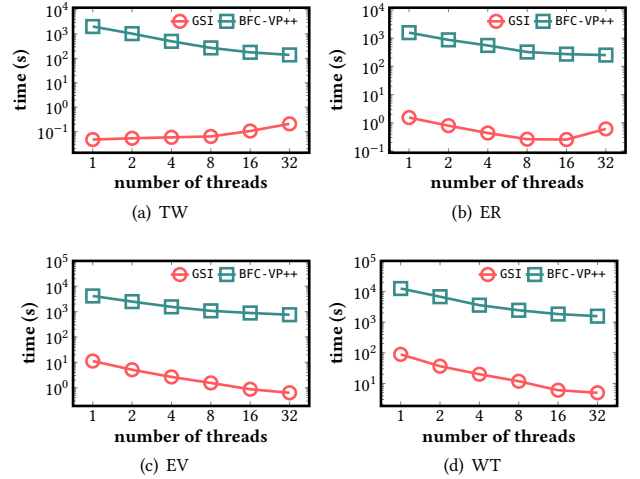
\begin{figure}[t!]
    \small
	\centering
    \subfigure[TW]{
	    \begin{tikzpicture}[scale=0.38]
			\begin{axis}[
              legend style = {
                legend columns=-1,
                draw=none,
            },
            width=0.53\textwidth,
            height=.36\textwidth,
            xtick = {0, 2, 4, 6, 8, 10},
            xticklabels={1, 2, 4, 8, 16, 32},
            ymode = log,
            mark size=6.0pt, 
            ymax = 12000,
            ylabel={\Huge \bf time (s)},
            xlabel={\Huge \bf number of threads}, 
            ticklabel style={font=\Huge},
            every axis plot/.append style={line width = 2.5pt},
            every axis/.append style={line width = 2.5pt},
            ]
            \addplot [mark=o,color=c2] table[x=k,y=GIS]{\paraTW};
            \addplot [mark=square,color=c8] table[x=k,y=BFC]{\paraTW};
            \legend{ \huge \texttt{GSI},  \huge \texttt{BFC-VP++}};
			\end{axis}
		\end{tikzpicture}
    }
    \subfigure[ER]{
	    \begin{tikzpicture}[scale=0.38]
			\begin{axis}[
              legend style = {
                legend columns=-1,
                draw=none,
            },
            width=0.53\textwidth,
            height=.36\textwidth,
            xtick = {0, 2, 4, 6, 8, 10},
            xticklabels={1, 2, 4, 8, 16, 32},
            ymode = log,
            mark size=6.0pt, 
            ymax = 12000,
            ylabel={\Huge \bf time (s)},
            xlabel={\Huge \bf number of threads}, 
            ticklabel style={font=\Huge},
            every axis plot/.append style={line width = 2.5pt},
            every axis/.append style={line width = 2.5pt},
            ]
            \addplot [mark=o,color=c2] table[x=k,y=GIS]{\paraER};
            \addplot [mark=square,color=c8] table[x=k,y=BFC]{\paraER};
            \legend{ \huge \texttt{GSI},  \huge \texttt{BFC-VP++}};
			\end{axis}
		\end{tikzpicture}
    }
    \subfigure[EV]{
	    \begin{tikzpicture}[scale=0.38]
			\begin{axis}[
              legend style = {
                legend columns=-1,
                draw=none,
            },
            width=0.53\textwidth,
            height=.36\textwidth,
            xtick = {0, 2, 4, 6, 8, 10},
            xticklabels={1, 2, 4, 8, 16, 32},
            ymode = log,
            mark size=6.0pt, 
            ymax = 100000,
            ylabel={\Huge \bf time (s)},
            xlabel={\Huge \bf number of threads}, 
            ticklabel style={font=\Huge},
            every axis plot/.append style={line width = 2.5pt},
            every axis/.append style={line width = 2.5pt},
            ]
            \addplot [mark=o,color=c2] table[x=k,y=GIS]{\paraEV};
            \addplot [mark=square,color=c8] table[x=k,y=BFC]{\paraEV};
            \legend{ \huge \texttt{GSI},  \huge \texttt{BFC-VP++}};
			\end{axis}
		\end{tikzpicture}
    }
    \subfigure[WT]{
	    \begin{tikzpicture}[scale=0.38]
			\begin{axis}[
              legend style = {
                legend columns=-1,
                draw=none,
            },
            width=0.53\textwidth,
            height=.36\textwidth,
            xtick = {0, 2, 4, 6, 8, 10},
            xticklabels={1, 2, 4, 8, 16, 32},
            ymode = log,
            mark size=6.0pt, 
            ymax = 100000,
            ylabel={\Huge \bf time (s)},
            xlabel={\Huge \bf number of threads}, 
            ticklabel style={font=\Huge},
            every axis plot/.append style={line width = 2.5pt},
            every axis/.append style={line width = 2.5pt},
            ]
            \addplot [mark=o,color=c2] table[x=k,y=GIS]{\paraWT};
            \addplot [mark=square,color=c8] table[x=k,y=BFC]{\paraWT};
            \legend{ \huge \texttt{GSI},  \huge \texttt{BFC-VP++}};
			\end{axis}
		\end{tikzpicture}
    }
	\caption{Evaluating the parallelization: time cost for 5,000 random queries against the number of threads.}
	\label{fig:parallel}
\end{figure}

\subsubsection{Empirical Study on Power-law Graphs}
In this section, we further analyze the performance of \texttt{GSI} and baselines on the power-law graphs.
In addition to two synthetic graphs in \Cref{tab:dataset}, we generate power-law graphs with various $\gamma_1$ and $\gamma_2$ between $(2, 3)$ and uniformly sampled timestamps.
Specifically, to ensure that baselines can finish in practical time (\textit{i.e.,} less than $\leq 10^{5}$ seconds), we reduce the number of edges in these graphs to $10^7$ instead of $10^8$.
in \texttt{DPW} and \texttt{SPW}.
Shown in \Cref{fig:power-law-time}, both \texttt{BFC-VP++} and \texttt{WPS} require at least $10^3$ seconds in all synthetic data we generate while our algorithm can process all 5,000 queries within $10^{-1}$ seconds. This gap is even more significant compared to results on real-world data. Combined with the theoretical analysis in \cref{sec:power-law}, \texttt{GSI} has outstanding performance both theoretically and practically on Power-law graphs. This further reveals the advantage that \texttt{GSI} admits and makes full use of additional properties on graphs.


\begin{figure}[t!]
    \centering
       	\begin{tikzpicture}[scale=0.4]
        		\begin{axis}[
                    grid = major,
        			ybar=0.11pt,
        			bar width=0.45cm,
        			width=0.8\textwidth,
    				height=0.4\textwidth,
        			xlabel={\huge \bf Dataset}, 
        			xtick=data,	xticklabels={\Huge $\gamma_1=2.1\atop\gamma_2=2.1$, \Huge $\gamma_1=2.5\atop\gamma_2=2.5$, \Huge $\gamma_1=2.9\atop\gamma_2=2.9$, \Huge $\gamma_1=2.1\atop\gamma_2=0$, \Huge $\gamma_1=2.5\atop\gamma_2=0$, \Huge $\gamma_1=2.9\atop\gamma_2=0$},
                     legend style={at={(0.5,1.30)}, anchor=north,legend columns=-1,draw=none},
                           legend image code/.code={
                    \draw [#1] (0cm,-0.263cm) rectangle (0.4cm,0.15cm); },
        			xmin=0.8,xmax=13.2,
    					ymin=0.001, ymax = 100000,
                        ytick = {0.001, 0.01, 0.1, 1, 10, 100, 1000, 10000, 100000},
    	            yticklabels = {$10^{-3}$, $10^{-2}$, $10^{-1}$, $10^0$,$10^1$, $10^2$, $10^3$, $10^4$, $\geq 10^{5}$},
                        ymode = log,    
                        log basis y={2},
                        log origin=infty,
        			tick align=inside,
        			ticklabel style={font=\huge},
        			every axis plot/.append style={line width = 1.6pt},
        			every axis/.append style={line width = 1.6pt},
                        ylabel={\textbf{\huge time (s)}},
        			]
        			\addplot[fill=p1] table[x=datasets,y=VP]{\HBFCG};
        			\addplot[fill=p2] table[x=datasets,y=APPROX]{\HBFCG};
           			\addplot[fill=p4] table[x=datasets,y=GIS]{\HBFCG};
                \legend{\huge {\tt BFC-VP++ $\ $},\huge {\tt WPS $\ $}, \huge {\tt GSI}}
            		\end{axis}
        	\end{tikzpicture}
    	\caption{Time cost of 5,000 random queries for historical butterfly counting on different power-law bipartite graphs, where all algorithms use sequential versions.}
    \label{fig:power-law-time}
\end{figure}

\subsubsection{Parallelization}
\label{sec:eval-parallel}
In addition, we compare \texttt{GSI}'s parallelized querying with \texttt{BFC-VP++}'s parallelized version.
In \Cref{fig:parallel}, we test the running time for 5,000 queries with various numbers of threads from $1$, $2$, $4$, $8$, $16$, and $32$.
In two large-scale datasets \texttt{EV} and \texttt{WT}, our algorithm obtains higher parallelism compared to the baseline, which has spent efforts in optimizing for higher parallelism.
On the other hand, in two other two large-scale datasets \texttt{TW} and \texttt{ER}, despite its fast runtime, our algorithm's performance does not always improve as the number of threads increases, which is not surprising and is under our expectations. This is because every historical query is essentially querying on multiple $\mathcal{CS}$s and summing up the answer. In the parallel version, we allocate threads to these data structures to increase efficiency. However, given a fixed memory limitation, \texttt{GSI} may create less $\mathcal{CS}$s if the number of motifs is relatively small. In such a circumstance, if the number of threads approaches or exceeds the number of data structures we built in construction, the runtime may increase. In general, our parallelized \texttt{GSI} still obtains a faster runtime and, most likely, better parallelism on very large-scale datasets.

\subsection{Comparison with TBC++}

\begin{figure}[t!]
    \centering
       	\begin{tikzpicture}[scale=0.4]
        		\begin{axis}[
                    grid = major,
        			ybar=0.11pt,
        			bar width=0.4cm,
        			width=0.85\textwidth,
    				height=0.3\textwidth,
        			xlabel={\huge \bf Dataset}, 
        			xtick=data,	xticklabels={WQ, WN, SO, CU, BS, TW, AM, ER, EV, WT},
                     legend style={at={(0.5,1.30)}, anchor=north,legend columns=-1,draw=none},
                           legend image code/.code={
                    \draw [#1] (0cm,-0.263cm) rectangle (0.4cm,0.15cm); },
        			xmin=0.8,xmax=21.2,
    					ymin=0.01, ymax = 100000,
                         ytick = {0.01, 0.1, 1, 10, 100, 1000, 10000, 100000},
    	        yticklabels = {$10^{-2}$, $10^{-1}$, $10^0$,$10^1$, $10^2$, $10^3$, $10^4$, $\geq 10^{5}$},
                    ymode = log,    
                        log basis y={2},
                        log origin=infty,
        			tick align=inside,
        			ticklabel style={font=\huge},
        			every axis plot/.append style={line width = 1.6pt},
        			every axis/.append style={line width = 1.6pt},
                        ylabel={\textbf{\huge time (s)}},
        			]
        			\addplot[fill=p3] table[x=datasets,y=TBC]{\TBC};
           			\addplot[fill=p4] table[x=datasets,y=GIS]{\TBC};
                        \legend{\huge {\tt TBC++ $\ $}, \huge {\tt GSI}}
            		\end{axis}
        	\end{tikzpicture}
    	\caption{\modify{Comparison with ~\cite{cai2024efficient} on the TBC task with random $\delta$, where the time of \texttt{GSI} includes both index time and query time.}}
    \label{fig:tbc_time}
\end{figure}

\modify{
Note that \texttt{TBC++}/\texttt{TBE+} ~\cite{cai2024efficient} only focus on temporal butterfly counting/listing such that the duration time is no larger than a given threshold, which is a related but orthogonal work that does not consider querying on arbitrary time windows.
We do not adapt \texttt{TBC++} and compare it in our problem because \textbf{(1): } running \texttt{TBC++} each time for the project graph of each query will lead to the algorithm degrading to \texttt{BFC-VP++}, which we have already included as a baseline strategy. \emph{e.g.,} \texttt{TBC++} needs 149 seconds for 5,000 random queries on \texttt{WQ}, where \texttt{BFC-VP++} only needs 86 seconds;  
\textbf{(2): } building index based on \texttt{TBE+} requires unaffordable memory storage since the number of listed temporal butterflies will be too huge when the threshold gets larger, which can be the maximum length of the queried time window in our problem. \emph{e.g.,}  \texttt{TBE+} can not run on \texttt{WT} due to the number of $\geq 10^{10}$ butterflies in the graph~\cite{cai2024efficient}.  
Therefore, to comprehensively compare with~\cite{cai2024efficient}, we instead adapt our algorithms to the temporal butterfly counting problem, which is the key problem~\cite{cai2024efficient} aiming to solve and can also be naturally generalized by \texttt{GSI}.
}

\modify{
The definition of the temporal butterfly is defined as follows:
}
\begin{mydef}[\modify{Temporal Butterfly~\cite{cai2024efficient}}]
\modify{
Given a duration threshold $\delta$, a temporal butterfly is a sequence of 4 temporal edges in chronological order $\left \langle e_1, e_2, e_3, e_4\right \rangle$, s.t., (1) $e_1.t < e_2.t < e_3.t < e_4.t$, (2) $e_4.t - e_1.t \leq \delta$,
and (3) the induced graph is a butterfly. 
}
\end{mydef}

\modify{
For each dataset, we randomly select a threshold value ($\delta$) and compare the total time taken to count the temporal butterflies for these thresholds between \texttt{GSI} and \texttt{TBC++}.
Specifically, the adaptation of \texttt{GSI} employs the same mechanism for combining the \#wedges and \#butterfly indices but utilizes a new underlying data structure, rather than Chazelle’s~\cite{chazelle1988functional}, tailored specifically to this problem.
}

\modify{
As shown in \Cref{fig:tbc_time}, \texttt{GSI} outperform \texttt{TBC++} up to $\frac{1}{100}\times$ in the five largest datasets (\texttt{TW}, \texttt{AM}, \texttt{ER}, \texttt{EV}, \texttt{WT}). In the other relatively small datasets, our runtime is close to \texttt{TBC++}. This is because our runtime includes the time to build the index, and we only query on one fixed $\delta$.
If we consider calculating TBC for varying $\delta$s, our index-based algorithms will significantly outperform \texttt{TBC++} even in those datasets.
For example, for 50 randomly selected $\delta$s on \texttt{WN}, \texttt{GSI} only needs 132 seconds in total, where \texttt{TBC++} needs more than 1,000 seconds.
}

\subsection{Space-Time Trade-Offs}
\label{sec:sq-trade-offs}
Since \texttt{GSI} is a flexible algorithm that strikes to balance the trade-off between space and time, we further study its performance under different memory limitations. Shown in \Cref{fig:space-query}, the query time can be efficiently reduced if provided more memory space: When the applicable memory is enlarged to $10 \times$, the query time is decreased to $\frac{1}{10} \times$ to $\frac{1}{100} \times$. Furthermore, it is shown that even when the memory space is very limited, \texttt{GSI} can still obtain a good query efficiency. For example, on \texttt{WT}, only given $40$ Gigabytes memory, both the index-construction time and query time are less than $600$ seconds, which still holds a significant advantage compared to our baseline in \cref{fig:main_time}. 
In all, this study further reveals \texttt{GSI}'s flexibility on the trade-off between space and time. In practice, it can fit different circumstances while providing high-efficiency performance.

\pgfplotstableread[row sep=\\,col sep=&]{
	k &index & query \\
    0 & 47.474 & 4.610 \\
	  2 & 81.226 & 0.678 \\ 
        4 & 119.068 & 0.181  \\  
	6 & 145.891 & 0.124 \\ 
	8 & 179.357 & 0.079 \\ 
        10 & 206.966 & 0.061 \\ 
        12 & 231.280 & 0.074 \\
        14 & 251.230 & 0.051 \\
        16 & 234.054 & 0.077 \\
        18 & 244.007 & 0.064 \\
}\STW

\pgfplotstableread[row sep=\\,col sep=&]{
	k &index & query \\
    0 & 400.897 & 3.449 \\
	  2 & 814.758 & 2.648 \\ 
        4 & 1181.392 & 2.300  \\  
	6 & 1585.274 & 1.912 \\ 
	8 & 1989.827 & 1.833 \\ 
        10 & 2378.886 & 2.751 \\ 
        12 & 2850.443 & 3.142 \\
        14 & 3208.945 & 2.595 \\
        16 & 3703.218 & 2.051 \\
        18 & 4211.026 & 2.785 \\
}\SER

\pgfplotstableread[row sep=\\,col sep=&]{
	k &index & query \\
    0 & 408.355 & 63.205 \\
	  2 & 778.353 & 44.558 \\ 
        4 & 1148.471 & 25.541  \\  
	6 & 1540.818 & 19.204 \\ 
	8 & 1952.256 & 14.456 \\ 
        10 & 2281.093 & 12.444 \\ 
        12 & 2658.085 & 13.290 \\
        14 & 3069.892 & 10.294 \\
        16 & 3649.160 & 13.866 \\
        18 & 4311.679 & 9.368 \\
}\SEV

\pgfplotstableread[row sep=\\,col sep=&]{
	k &index & query \\
    0 & 557.329 & 338.214 \\
	  2 & 912.957 & 203.449 \\ 
        4 & 1320.389 & 163.618  \\  
	6 & 1641.523 & 112.919 \\ 
	8 & 2055.427 & 97.067 \\ 
        10 & 2514.685 & 105.157 \\ 
        12 & 2873.004 & 84.452 \\
        14 & 3240.532 & 78.072 \\
        16 & 3873.681 & 68.346 \\
        18 & 4170.433 & 65.193 \\
}\SWT

\begin{figure}[t!]
    \small
	\centering
    \subfigure[TW]{
	    \begin{tikzpicture}[scale=0.38]
			\begin{axis}[
              legend style = {
                legend columns=-1,
                draw=none,
            },
            width=0.53\textwidth,
            height=.36\textwidth,
            xtick = {0, 2, 4, 6, 8, 10, 12, 14, 16, 18},
            xticklabels={3, 6, 9, 12, 15, 18, 21, 24, 27, 30},
            ymode = log,
            mark size=6.0pt, 
            ymax = 10000,
            ylabel={\Huge \bf time (s)},
            xlabel={\Huge \bf memory usage (GB)}, 
            ticklabel style={font=\Huge},
            every axis plot/.append style={line width = 2.5pt},
            every axis/.append style={line width = 2.5pt},
            ]
            \addplot [mark=x,color=c2] table[x=k,y=index]{\STW};
            \addplot [mark=diamond,color=c8] table[x=k,y=query]{\STW};
            \legend{ \huge index,  \huge query};
			\end{axis}
		\end{tikzpicture}
    }
    \subfigure[ER]{
	    \begin{tikzpicture}[scale=0.38]
			\begin{axis}[
              legend style = {
                legend columns=-1,
                draw=none,
            },
            width=0.53\textwidth,
            height=.36\textwidth,
            xtick = {0, 2, 4, 6, 8, 10, 12, 14, 16, 18},
            xticklabels={4, 8, 12, 16, 20, 24, 28, 32, 36, 40},
            ymode = log,
            mark size=6.0pt, 
            ymax = 100000,
            ylabel={\Huge \bf time (s)},
            xlabel={\Huge \bf memory usage (10 GB)}, 
            ticklabel style={font=\Huge},
            every axis plot/.append style={line width = 2.5pt},
            every axis/.append style={line width = 2.5pt},
            ]
            \addplot [mark=x,color=c2] table[x=k,y=index]{\SER};
            \addplot [mark=diamond,color=c8] table[x=k,y=query]{\SER};
            \legend{ \huge index,  \huge query};
			\end{axis}
		\end{tikzpicture}
    }
    \subfigure[EV]{
	    \begin{tikzpicture}[scale=0.38]
			\begin{axis}[
              legend style = {
                legend columns=-1,
                draw=none,
            },
            width=0.53\textwidth,
            height=.36\textwidth,
            xtick = {0, 2, 4, 6, 8, 10, 12, 14, 16, 18},
            xticklabels={4, 8, 12, 16, 20, 24, 28, 32, 36, 40},
            ymode = log,
            mark size=6.0pt, 
            ymax = 100000,
            ylabel={\Huge \bf time (s)},
            xlabel={\Huge \bf memory usage (10 GB)}, 
            ticklabel style={font=\Huge},
            every axis plot/.append style={line width = 2.5pt},
            every axis/.append style={line width = 2.5pt},
            ]
            \addplot [mark=x,color=c2] table[x=k,y=index]{\SEV};
            \addplot [mark=diamond,color=c8] table[x=k,y=query]{\SEV};
            \legend{ \huge index,  \huge query};
			\end{axis}
		\end{tikzpicture}
    }
    \subfigure[WT]{
	    \begin{tikzpicture}[scale=0.38]
			\begin{axis}[
              legend style = {
                legend columns=-1,
                draw=none,
            },
            width=0.53\textwidth,
            height=.36\textwidth,
            xtick = {0, 2, 4, 6, 8, 10, 12, 14, 16, 18},
            xticklabels={4, 8, 12, 16, 20, 24, 28, 32, 36, 40},
            ymode = log,
            mark size=6.0pt, 
            ymax = 100000,
            ylabel={\Huge \bf time (s)},
            xlabel={\Huge \bf memory usage (10 GB)}, 
            ticklabel style={font=\Huge},
            every axis plot/.append style={line width = 2.5pt},
            every axis/.append style={line width = 2.5pt},
            ]
            \addplot [mark=x,color=c2] table[x=k,y=index]{\SWT};
            \addplot [mark=diamond,color=c8] table[x=k,y=query]{\SWT};
            \legend{ \huge index,  \huge query};
			\end{axis}
		\end{tikzpicture}
    }
	\caption{Space-query trade-offs study}
	\label{fig:space-query}
\end{figure}
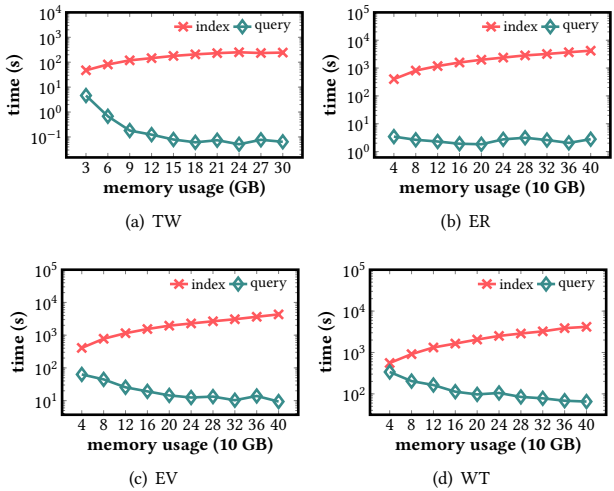

\pgfplotstableread[row sep=\\,col sep=&]{
	k & space & accuracy & index & query\\
    0 & 512000 & 0 & 3842 & 14.742\\
            2 & 61430 & 0.00259002 & 859.143 & 5.371  \\  
	  4 & 10078 & 0.00547853 & 386.849 & 7.170 \\ 
	6 & 4984 & 0.0189161 & 313.082 & 6.148 \\ 
	8 & 532 & 0.0735832 & 280.107 & 3.411 \\ 
}\CompressEV

\pgfplotstableread[row sep=\\,col sep=&]{
	k & space & accuracy & index & query\\
    0 & 512000 & 0 & 3842 & 14.742\\
            2 & 62358 & 0.000000770473 & 859.143 & 5.371  \\  
	  4 & 11084 & 0.00000512308 & 386.849 & 7.170 \\ 
	6 & 5487 & 0.0146369 & 313.082 & 6.148 \\ 
	8 & 449 & 0.0963626 & 280.107 & 3.411 \\ 
}\CompressWT

\begin{figure}[t!]
    \centering
       \subfigure[EV]{
	    \begin{tikzpicture}[scale=0.3]
                \begin{axis}[
                    grid = major,
        			ybar=0.11pt,
        			bar width=0.45cm,
        			width=0.63\textwidth,
    				height=0.4\textwidth,
        			xlabel={\huge \bf Compression Ratios}, 
        			xtick=data,	xticklabels={\Huge $\lambda_{1}=100\% \atop \lambda_{2}=100\%$, \Huge $\lambda_{1}=8\% \atop \lambda_{2}=100\%$, \Huge $\lambda_{1}=1.8\% \atop \lambda_{2}=100\%$, \Huge $\lambda_{1}=1\% \atop \lambda_{2}=50\%$, \Huge $\lambda_{1}=0.1\% \atop \lambda_{2}=5\%$},,
                     legend style={at={(0.4,1.30)}, anchor=north,legend columns=-1,draw=none},
                           legend image code/.code={
                    \draw [#1] (0cm,-0.263cm) rectangle (0.6cm,0.35cm); },
        			xmin=-1,xmax=9,
    					ymin=10, ymax = 1000000,
                         ytick = {10, 100, 1000, 10000, 100000, 1000000},
    	        yticklabels = {$10^1$, $10^2$, $10^3$, $10^4$, $10^{5}$, $10^{6}$},
                    ymode = log,    
                        log basis y={2},
                        log origin=infty,
        			tick align=inside,
        			ticklabel style={font=\huge},
        			every axis plot/.append style={line width = 1.6pt},
        			every axis/.append style={line width = 1.6pt},
                        ylabel={\textbf{\huge memory usage (MB)}},
        			]
        			\addplot[fill=p1] table[x=k,y=space]{\CompressEV};
                \legend{\huge {\tt Index Size $\ $}}
            		\end{axis}
                 \begin{axis}[
                        axis y line*=right, 
                        hide x axis,
        			width=0.63\textwidth,
    				height=0.4\textwidth,
        			xlabel={\huge \bf Dataset}, 
        			xtick=data,	xticklabels={\Huge $\lambda_{1}=100\% \atop \lambda_{2}=100\%$, \Huge $\lambda_{1}=12\% \atop \lambda_{2}=100\%$, \Huge $\lambda_{1}=1.8\% \atop \lambda_{2}=100\%$, \Huge $\lambda_{1}=1\% \atop \lambda_{2}=50\%$, \Huge $\lambda_{1}=0.1\% \atop \lambda_{2}=5\%$},
                       mark size=6.0pt, ,
                     legend style={at={(0.7,1.30)}, anchor=north,legend columns=-1,draw=none},
        			xmin=-1,xmax=9,
    					ymin=0, ymax = 10,
                         ytick = {0, 0.001, 0.01, 0.1, 1, 10},
    	        yticklabels = {$0\%$, $10^{-3}\%$, $10^{-2}\%$, $10^{-1}\%$, $1\%$, $10^1$},
                    ymode = log,    
                        log basis y={2},
                        log origin=infty,
        			tick align=inside,
        			ticklabel style={font=\huge},
                        every axis plot/.append style={line width = 2.5pt},
                        every axis/.append style={line width = 2.5pt},
                        ylabel={\textbf{\huge relative error}},
        			]
        			\addplot[mark=o,color=c2] table[x=k,y=accuracy]{\CompressEV};
                \legend{\huge {\tt Error $\ $}}
            		\end{axis}
		\end{tikzpicture}
    }
           \subfigure[WT]{
	    \begin{tikzpicture}[scale=0.3]
                \begin{axis}[
                    grid = major,
        			ybar=0.11pt,
        			bar width=0.45cm,
        			width=0.63\textwidth,
    				height=0.4\textwidth,
        			xlabel={\huge \bf Compression Ratios}, 
        			xtick=data,	xticklabels={\Huge $\lambda_{1}=100\% \atop \lambda_{2}=100\%$, \Huge $\lambda_{1}=8\% \atop \lambda_{2}=100\%$, \Huge $\lambda_{1}=1.8\% \atop \lambda_{2}=100\%$, \Huge $\lambda_{1}=1\% \atop \lambda_{2}=50\%$, \Huge $\lambda_{1}=0.1\% \atop \lambda_{2}=5\%$},,
                     legend style={at={(0.4,1.30)}, anchor=north,legend columns=-1,draw=none},
                           legend image code/.code={
                    \draw [#1] (0cm,-0.263cm) rectangle (0.6cm,0.35cm); },
        			xmin=-1,xmax=9,
    					ymin=10, ymax = 1000000,
                         ytick = {10, 100, 1000, 10000, 100000, 1000000},
    	        yticklabels = {$10^1$, $10^2$, $10^3$, $10^4$, $10^{5}$, $10^{6}$},
                    ymode = log,    
                        log basis y={2},
                        log origin=infty,
        			tick align=inside,
        			ticklabel style={font=\huge},
        			every axis plot/.append style={line width = 1.6pt},
        			every axis/.append style={line width = 1.6pt},
                        ylabel={\textbf{\huge memory usage (MB)}},
        			]
        			\addplot[fill=p1] table[x=k,y=space]{\CompressWT};
                \legend{\huge {\tt Index Size $\ $}}
            		\end{axis}
                 \begin{axis}[
                        axis y line*=right, 
                        hide x axis,
        			width=0.63\textwidth,
    				height=0.4\textwidth,
        			xlabel={\huge \bf Dataset}, 
        			xtick=data,	xticklabels={\Huge $\lambda_{1}=100\% \atop \lambda_{2}=100\%$, \Huge $\lambda_{1}=12\% \atop \lambda_{2}=100\%$, \Huge $\lambda_{1}=1.8\% \atop \lambda_{2}=100\%$, \Huge $\lambda_{1}=1\% \atop \lambda_{2}=50\%$, \Huge $\lambda_{1}=0.1\% \atop \lambda_{2}=5\%$},
                       mark size=6.0pt, ,
                     legend style={at={(0.7,1.30)}, anchor=north,legend columns=-1,draw=none},
        			xmin=-1,xmax=9,
    					ymin=0, ymax = 10,
                         ytick = {0.000001, 0.00001, 0.0001, 0.001, 0.01, 0.1},
    	        yticklabels = {$10^{-6}\%$, $10^{-5}\%$, $10^{-4}\%$, $10^{-3}\%$, $10^{-2}\%$, $10^{-1}\%$},
                    ymode = log,    
                        log basis y={2},
                        log origin=infty,
        			tick align=inside,
        			ticklabel style={font=\huge},
                        every axis plot/.append style={line width = 2.5pt},
                        every axis/.append style={line width = 2.5pt},
                        ylabel={\textbf{\huge relative error}},
        			]
        			\addplot[mark=o,color=c2] table[x=k,y=accuracy]{\CompressWT};
                \legend{\huge {\tt Error $\ $}}
            		\end{axis}
		\end{tikzpicture}
    }
 
    	\caption{Evaluating memory usage and relative error in various compression ratio settings. Note that both our compression algorithms are not affect the query effciency.}
    \label{fig:index-compression}
\end{figure}

\subsection{Study on Compressed Indexes}
In addition, shown in \Cref{fig:index-compression}, we evaluate the index compression on two large-scale datasets \texttt{EV} and \texttt{WT}. With the single-sided compression algorithm \texttt{SGSI}, we are able to guarantee a relative error $\leq 10^{-3}$ while compressing memory usage to $\frac{1}{50}$ (i.e., ~500G $\rightarrow$ ~1G). With the double-sided compression algorithm \texttt{DGSI}, we achieve $\leq 10^{-1}$ relative error while compressing memory for around 1000 times (~500G $\rightarrow$ ~500MB).
This experiment validates the effectiveness of our \texttt{GSI} algorithm with index compression based on sampling. Moreover, it empirically provides a trade-off between memory and accuracy, i.e., if we need a higher accuracy guarantee, we use \texttt{SGSI}. To compress the memory further, we may use \texttt{DGSI} instead.
\subsection{Case Study}
\label{sec:casestudy}
To begin with, we acquired the author-publication datasets~\footnote{https://github.com/THUDM/citation-prediction}, which contains 10,419,221 author-publication records before 2016.
Based on that, we build an author-publication bipartite temporal graph containing 14,223,972 vertices and 10,419,216 edges.
Consider the projected graph  $G_{[t_s, t_e]}$, its corresponding BCC should be {\Large $\frac{4 \times \bowtie_{[t_s, t_e]}}{\ltimes_{[t_s, t_e]}}$}, where {\Large $\bowtie_{[t_s, t_e]}$} is the butterfly count in the projected graph $G_{[t_s, t_e]}$ and {\Large $\ltimes_{[t_s, t_e]}$} is the number of 3-paths in $G_{[t_s, t_e]}$.

In this study, we investigate the trend of the BCC over various two-year time windows.
To achieve that, we need to efficiently query several historical butterfly counts by \texttt{GSI}.
The result is shown in \Cref{fig:case-study-1} (a), where we set the length of the time windows as two years (e.g., 1995 - 1996).
Surprisingly, the value of BCCs doesn't consistently rise alongside the research community's overall trend toward collaboration. This implies that the research community might not always achieve greater cohesiveness through collaboration.
In contrast, in \Cref{fig:case-study-1} (a), BCCs exhibit an initial general increase (1985 - 2000) followed by a decrease (2000 - 2015).
We delve deeper into the underlying reasons for these trends and analyze the changing pattern of \textbf{(1)} \textit{the average number of publications per author within each time-window} and \textbf{(2)} \textit{the average number of unique collaborators per author within each time-window}, shown in \Cref{fig:case-study-1} (b).
As the average number of publications generally rises from 1985 to 2015, it suggests a higher probability of collaboration, consequently increasing the butterfly counts in the author-publication bipartite temporal graph.
Nevertheless, a greater likelihood of collaboration does not necessarily equate to more cohesive collaboration.
As depicted in \Cref{fig:case-study-1} (b), there is a notable increase in the average number of unique collaborators after 2000, indicating a rise in 3-paths counts within the graph, which may be an important factor causing BCCs decreases again from 2000 to 2010. Entering 2010, BCCs come across another rising trend. There are many possible reasons, one of which could be the significant increase in total publications.

\pgfplotstableread[row sep=\\,col sep=&]{
k & BCC & auth & cobo \\
0 & 0.0945 & 1.7965 & 2.7495 \\
2 & 0.0932 & 1.8421 & 2.8485 \\
4 & 0.0860 & 1.8809 & 2.9277 \\
6 & 0.0859 & 1.9159 & 2.9743 \\
8 & 0.0902 & 1.9634 & 3.1056 \\
10 & 0.0949 & 2.0044 & 3.2204 \\
12 & 0.0947 & 2.0639 & 3.3460 \\
14 & 0.0923 & 2.1239 & 3.5013 \\
16 & 0.0918 & 2.1659 & 3.6037 \\
18 & 0.0908 & 2.1879 & 3.6835 \\
20 & 0.0954 & 2.2320 & 3.7966 \\
22 & 0.0982 & 2.3020 & 3.9591 \\
24 & 0.1016 & 2.3515 & 4.1192 \\
26 & 0.1014 & 2.3872 & 4.2136 \\
28 & 0.1015 & 2.4356 & 4.2832 \\
30 & 0.0998 & 2.5092 & 4.5052 \\
32 & 0.0992 & 2.5909 & 4.7845 \\
34 & 0.0990 & 2.6641 & 5.0068 \\
36 & 0.0945 & 2.7185 & 5.1817 \\
38 & 0.0925 & 2.7782 & 5.3074 \\
40 & 0.0925 & 2.8296 & 5.3937 \\
42 & 0.0902 & 2.8727 & 5.5352 \\
44 & 0.0893 & 2.9164 & 5.6207 \\
46 & 0.0886 & 2.9600 & 5.6424 \\
48 & 0.0874 & 3.0143 & 5.7685 \\
50 & 0.0870 & 3.0761 & 5.8815 \\
52 & 0.0858 & 3.1384 & 6.0000 \\
54 & 0.0867 & 3.1804 & 5.9268 \\
56 & 0.0894 & 3.0407 & 6.0616 \\
58 & 0.0929 & 2.7777 & 6.4410 \\
}\CASE

\begin{figure}[t!]
    \centering
    \subfigure[BCC]{
    \begin{tikzpicture}[scale=0.38]
        \begin{axis}[
          legend style = {
            legend columns=-1,
            draw=none,
        },
        width=0.55\textwidth,
        height=0.36\textwidth,
        xtick = {0, 10, 20, 30, 40, 50, 60},
        xticklabels = {1985, 1990, 1995, 2000, 2005, 2010, 2015},
        mark size=2.0pt, 
        ymax = 0.105,
        ymin = 0.08,
        ytick = {0.07, 0.075, 0.08, 0.085, 0.09, 0.095, 0.10},
        yticklabels = {7, 7.5, 8, 8.5, 9, 9.5, 10},
        ylabel={\Huge \bf value ($10^{-2}$)},
        xlabel={\Huge \bf years}, 
        ticklabel style={font=\Huge},
        every axis plot/.append style={line width = 2.5pt},
        every axis/.append style={line width = 2.5pt},
        ]
        \addplot [mark=o,color=c10] table[x=k,y=BCC]{\CASE};
        \legend{ \huge BCC, \huge No. authors, \huge No. collaborators};
        \end{axis}
    \end{tikzpicture}
    }
    \subfigure[authors - collaborators]{
    \begin{tikzpicture}[scale=0.38]
        \begin{axis}[
          legend style = {
            legend columns=-1,
            draw=none,
        },
        width=0.55\textwidth,
        height=0.36\textwidth,
        xtick = {0, 10, 20, 30, 40, 50, 60},
        xticklabels = {1985, 1990, 1995, 2000, 2005, 2010, 2015},
        mark size=2.0pt, 
        ymax = 9,
        ymin = 1,
        ytick = {1, 3, 5, 7},
        yticklabels = {1.0, 3.0, 5.0, 7.0},
        ylabel={\Huge \bf value},
        xlabel={\Huge \bf years}, 
        ticklabel style={font=\Huge},
        every axis plot/.append style={line width = 2.5pt},
        every axis/.append style={line width = 2.5pt},
        ]
        \addplot [mark=o,color=c2] table[x=k,y=auth]{\CASE};
        \addplot [mark=o,color=c8] table[x=k,y=cobo]{\CASE};
        \legend{ \huge No. authors, \huge No. collaborators};
        \end{axis}
    \end{tikzpicture}
    }
    \caption{Case study (1): the trend of closeness in global research collaboration}
    \label{fig:case-study-1}
\end{figure}
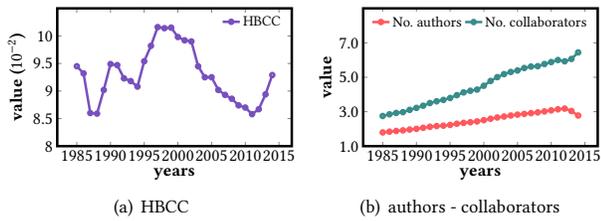

\section{CONCLUSIONS}
\label{sec:conclusion}
This paper introduces the historical butterfly counting problem on temporal bipartite graphs. We propose a graph structure-aware indexing approach to facilitate efficient query processing with a tunable balance between query time and memory cost. Besides, we theoretically prove that our approach is especially advantageous on power-law graphs, by breaking the conventional complexity barrier associated with general graphs. To further mitigate the index overhead, we devised a high-quality approximation algorithm that leverages a compressed index structure, thereby enhancing overall efficiency. Extensive empirical tests show that our algorithm is up to five orders of magnitude faster than existing algorithms with controllable memory costs. 
\clearpage
\bibliographystyle{ACM-Reference-Format}
\balance
\bibliography{References}
\newpage
\clearpage
\newpage
\appendix
\newpage
\section{Omitted Proofs for Index Compressions}
\label{append:compression}
\begin{proof}[Proof of \cref{thm:SGSI}]
    The approximation $B'$ is unbiased because each butterfly from $T_E$ that contributes to $B$ is preserved with probability $\lambda_1$, and each of these preserved butterfly contributes $1/\lambda_1$ to $B'.$
    
    $B'$ is the sum of independent random variables taking values in $\{0, 1/\lambda_1\}.$ By Chernoff bound, we have 
    \begin{align*}
    \pr{}{|B'-\expec{}{B'}|\ge \delta\expec{}{B'}} = O(\exp(-\delta^2\lambda_1\expec{}{B'}/3)).
    \end{align*} Since $B'$ is unbiased, $\expec{}{B'}=B.$ Substituting $\delta=c\log^{0.5}(n)\lambda_1^{-0.5}{B}^{-0.5}$ for some large enough $c$ into the equation above gives
    \begin{align*}
    &\pr{}{|B'-B|\ge \delta B} \\
    =&O(\exp(-\delta^2 \lambda_1 B/3))\\
    =&O(\exp(-c^2\log(n)))=O(n^{-c^2}).
    \end{align*}
\end{proof}

\begin{proof}[Proof of \cref{thm:DGSI}]
    We consider any two different wedges $w_1, w_2$ from the same $\mathcal{CS}\in T_C$ such that both appears in the projected graph $G_{[t_s, t_e]}$ for the time-window $[t_s, t_e].$ Let $\widetilde{\mathcal{CS}}$ be the corresponding data structure of $\mathcal{CS}$ in $\widetilde{T_C}.$ Let $C'$ be the number of wedges in $G_{[t_s, t_e]}$ that are inserted into $\widetilde{\mathcal{CS}}$. Let $C$ be the total number of wedges in $G_{[t_s, t_e]}$ that are inserted into $\mathcal{CS}$. The pair of wedges, $(w_1, w_2)$, contributes $1/\lambda_2^2$ to the answer $\binom{C'}{2}/\lambda_2^2$ if both of them are preserved $\widetilde{\mathcal{CS}}$. The expected contribution is exactly $1$ because $w_1$ and $w_2$ are preserved independently with probability $\lambda_2$ each. Thus, by the linearity of expectation, $\expec{}{\binom{C'}{2}/\lambda_2^2}$ is exactly $\binom{C}{2},$ i.e., the number of pairs $(w_1, w_2).$ Summing over all $\mathcal{CS}\in T_C$ gives the unbiasedness.

    Let $\delta$ be an undetermined coefficient. By Chernoff bound, 
    \begin{align*}
    &\pr{}{|C'-\lambda_2C| \ge \delta C}\\
    \le & O(\exp(-\Omega(-\delta^2\lambda_2C))).
    \end{align*}
    Let $\delta=\log^{0.5}(n)\lambda_2^{-0.5}C^{-0.5}.$ We have that with high probability, $|C'-\lambda_2C|=O(C^{0.5}\log^{0.5}(n)\lambda_2^{-0.5}).$
    This means that $\binom{C'}{2}/\lambda_2^2$ is within multiplicative error $O(C^{-0.5}\lambda_2^{-1.5}\log^{0.5}(n))$ of the correct answer $\binom{C}{2}.$
    
    Conditioning on the error bound above hold for every $\mathcal{CS}$ in $T_C,$ we may use Hoeffding's inequality. For each $\mathcal{CS}_i\in T_C$ such that $|\mathcal{CS}_i|\ge 2$, $\binom{C_i'}{2}/\lambda_2^2$ is an independent random variable taking values from 
    \begin{align*}
    \binom{C_i}{2}\pm O(C^{1.5}\lambda_2^{-3.5}\log^{0.5}(n)),
    \end{align*} where $C_i$ and $C'_i$ are the number of wedges in $\mathcal{CS}_i\cap G_{[t_s, t_e]}$ and the number of preserved wedges in $\mathcal{CS}_i\cap G_{[t_s, t_e]}$. Let $S'=\sum_{i}\binom{C_i'}{2}/\lambda_2^2,$ and let $S=\sum_{i} \binom{C_i}{2}.$ We have, for any parameter $t$, \begin{align*}
        &\pr{}{|S'-S|\ge t}\\
        \le & O\left(\exp\left(-\Omega\left(t^2/\left(\lambda_2^{-3.5}\log^{0.5}(n)\sum_{i}C_i^{3}\right)\right)\right)\right)
    \end{align*} by Hoeffding's inequality. We may set $t=c\lambda_2^{-1.75} \log^{0.75}(n) S^{0.75}$ for large enough constant $c$ to ensure that $|S'-S|< t$ with high probability.
\end{proof}

\section{Handling Duplicate Edges}
\label{append:duplicate}

The key challenge when the graph has duplicate edges is that the life cycle of a wedge or a butterfly is no longer an active timestamp as defined in \Cref{def:original-timestamp}. As we will see in our technical report Part \ref{append:duplicate}, the life cycle can be decomposed into several redefined active timestamps (\Cref{def:active-timestamp}) for graphs with duplicate edges. We will prove that the decomposition does not increase the time complexity or memory usage of \GSI~(\Cref{lemma:duplicate} and \Cref{theo:time-dup}).

\begin{mydef}[Active Intervals for Graphs with Duplicate Edges]
\label{def:active-timestamp}
Given a bipartite temporal graph $G$ with duplicate edges, a subgraph $P$, we define the \textit{active intervals} $\tilde{\mathcal{T}}(P)$ as a tuple of timestamps of the form $[l, r_1, r_2] (l \leq r_1 \leq r_2)$ such that
$P$ is active in the query time-window $[t_s, t_e]$ if and only if for exactly one of the timestamps $[l, r_1, r_2]$, $t_s \leq l$ and $r_1 \leq t_e < r_2$.
\end{mydef}

For applying \texttt{GSI} to a graph with duplicate edges, we first modify $\mathcal{CS}$ such that it can answer 2D-range queries on timestamps of the form $[l, r_1, r_2],$ i.e., counting the number of $[l,r_1,r_2]$ such that $t_s \leq l$ and $r_1 \leq t_e \leq r_2$ given the query time-window $[t_s, t_e].$ This can be done by applying the inclusive-exclusive principle on 2D ranges. Then we generate the active intervals (tuple of timestamps $[l, r_1, r_2]$) for each wedge. We feed each $[l, r_1, r_2]$ to \GSI~(with the modified $\mathcal{CS}$). 

Now, we provide the intuition and detailed analysis of \Cref{alg:active-timestamp}.

Given a butterfly or wedge $P$, we consider a subgraph $S$ constructed by choosing exactly one edge $(u,v,t)$ for each $(u,v) \in E(P)$. We denote the set of all possible $S$s by $\mathcal{S}.$ Due to duplicate edges, $|\mathcal{S}|>1.$ For each $S\in \mathcal{S}$, its active timestamp is $[l_S, r_S]$ where $l_S=\min_{(u, v, t) \in S}t$ and $r_S=\max_{(u, v, t) \in S}t$. For a time-window $[t_s,t_e]$, $P \in G_{[t_s, t_e]}$ if and only if there exists an $S\in \mathcal{S}$ satisfying $t_s \leq l_S \leq r_S \leq t_e$. Therefore, for $S_1, S_2\in \mathcal{S}$, if $l_{S_1} \leq l_{S_2} \leq r_{S_2} \leq r_{S_1}$, then $S_1$ is not necessary to be considered for any query time-window. In this way, we manage to reduce the number of $S$s to be stored to answer historical queries concerning $P$.

We still need to resolve the issue of overcounting. For a time-window $[t_s,t_e]$, if there are multiple $S$s such that their timestamps are all included in $[t_s,t_e]$, we should not count $P$ as multiple wedges or butterflies (\Cref{def:projected-graph}). This is the reason we consider the redefined active timestamp in \Cref{def:active-timestamp}. Let the reduced set of $S$s be $\{[l_1,r_1],[l_2,r_2],\cdots,[l_k,r_k]\}$ satisfying $l_i<l_{i+1}, r_i\le r_{i+1}$ for each $1\le i<k$. We create a redefined timestamp $[l_i, r_i, r_{i+1}]$ for each $1\le i < k$ and a redefined timestamp $[l_k, r_k, \infty]$ for the last timestamp $[l_k, r_k]$. We can see that exactly one of the redefined timestamps becomes active when $P\in G_{[t_s, t_e]}:$ If there exists $1\le i<k$ such that $r_i\le t_e< r_{i+1}$, only $[l_i, r_i, r_{i+1}]$ is active. Otherwise, we have $t_e \ge r_k$. In such case, $[l_k, r_k, \infty]$ is active.
In addition, if $P\not \in G_{[t_s, t_e]},$ no redefined timestamps becomes active. For any timestamp $[l_i, r_i],$ $[t_s, t_e]$ does not include $[l_i, r_i].$ This implies that the redefined timestamp $[l_i, r_i, r_{i+1}]$ (or $[l_k, r_k, \infty]$ when $i=k$) is not active.

\begin{algorithm}[t]
\caption{\small \textsc{Computing $\left \langle x \leadsto y \leadsto z \right \rangle$'s Active Timestamps}}\label{alg:active-timestamp}
\KwIn{A list of unique timestamps on the duplicate edges between $x$ and $y$: $L_{x, y}$;
A list of unique timestamps on the duplicate edges between $y$ and $z$ : $L_{y, z}$;
}
\KwOut{The active intervals $A = \{ [l_i, r_{1,i}, r_{2,i}]\}$ for $\left \langle x \leadsto y \leadsto z \right \rangle$}
$L_{x,y} \leftarrow L_{x,y} \cup \{\infty\}, L_{y,z} \leftarrow L_{y,z} \cup \{\infty\}$; \\ \label{AT-inf}
Sort $L_{x, y}$ and $L_{y, z}$ in ascending order; \\ \label{AT-sort}
$S \leftarrow \emptyset, i \leftarrow 1, j \leftarrow 1$; \\ \label{AT-newS}
\If{$L_{x, y}[1] \leq L_{y,z}[1]$}{
    $i \leftarrow $ the maximum $k$ such that $L_{x, y}[k] \leq L_{y, z}[1]$; \label{AT-movei1}
}
\Else{
    $j \leftarrow $ the maximum $k$ such that $L_{y, z}[k] \leq L_{x, y}[1]$; \label{AT-movej1}
}
$l \leftarrow \infty, r \leftarrow \infty$; \\
\While{$i < |L_{x, y}|$ and $j < |L_{y, z}|$}{ \label{AT-while}
    $l' \leftarrow \min(L_{x,y}[i], L_{y,z}[j]), r' \leftarrow \max(L_{x,y}[i], L_{y,z}[j])$; \\ \label{AT-minmax}
    \lIf{$r \neq \infty$}{
        $A = A \cup \{[l, r, r']\}$ \label{AT-insert-S} 
    }
    \If{$L_{x,y}[i] \leq L_{y, z}[j]$}{ \label{AT-ileqj}
        $i \leftarrow i + 1$; \\ \label{AT-movei}
        $j \leftarrow $ the maximum $k$ such that $L_{y, z}[k] \leq L_{x, y}[i]$; \label{AT-adjustj}
    }
    \Else{
        $j \leftarrow j + 1$; \\ \label{AT-movej}
        $i \leftarrow $ the maximum $k$ such that $L_{x, y}[k] \leq L_{y, z}[j]$; \label{AT-adjusti}
    }
    $l \leftarrow l', r \leftarrow r'$; \label{AT-new-lr}
}
\If{$r \neq \infty$}{
    $A = A \cup \{[l, r, \infty]\}$
}

\Return $S$; \label{AT-return}
\end{algorithm}

\Cref{alg:active-timestamp} computes the active intervals for a wedge $\left \langle x \leadsto y \leadsto z \right \rangle$. Given the two timestamp sets $L_{x,y}$ and $L_{y,z}$, we will produce all necessary $S\in \mathcal{S}$ and store the redefined timestamps in $A$. 
We can safely assume that there is no duplicate timestamp in either $L_{x,y}$ or $L_{y,z}$. To begin with, we initialize $S$ to be $\emptyset$ (\Cref{AT-newS}). We sort $L_{x,y}$ and $L_{y,z}$ from small to large (\Cref{AT-sort}). 
We also insert an $\infty$ timestamp to both timestamp sets (\Cref{AT-inf}) to avoid boundary cases. 
We will enumerate every element in these sets, and we denote them as $L_{x,y}[i] (1 \leq i \leq |L_{x,y}|$ and $L_{y,z}[j] (1 \leq j \leq |L_{y,z}|$. Consider a pair $(i,j)$. 
When $L_{x,y}[i] \leq L_{y,z}[j]$, we will adjust $i$ to $i'$ such that $L_{x,y}[i+1] > L_{y,z}[j]$. That is to say, we will adjust the smaller side as much as possible without breaking the inequality $L_{x,y}[i] \leq L_{y,z}[j]$ (\Cref{AT-movei1}, \Cref{AT-adjustj}, \Cref{AT-movej1}, and \Cref{AT-adjusti}). 
We do the same for the case which $L_{x,y}[i] > L_{y,z}[j]$ (adjusting $j$ instead). 
For the current enumerated pair $(i,j)$, the corresponding redefined timestamp of the wedge formed by $(x,y,L_{x,y}[i])$ and $(y,z,L_{y,z}[j])$ is $[l, r, \max(L_{x,y}[i], L_{y,z}[j])]$, where $l$ and $r$ are the minima and maxima of the two timestamps $L_{x,y}[i]$s and $L_{y,z}[j]$s in the previous iteration (\Cref{AT-minmax} and \Cref{AT-insert-S}). 
We insert it into $A$ (\Cref{AT-insert-S}). 
After inserting, we move to the next pair by increasing one of the indices $i$ or $j$. 
When $L_{x,y}[i] \leq L_{y,z}[j]$, by setting $i$ to $i+1$, we will have $L_{x,y}[i] > L_{y,z}[j]$ (\Cref{AT-ileqj} to \Cref{AT-adjustj}). 
Then we adjust $j$ again to the largest possible value satisfying $L_{y,z}[j] \le L_{x,y}[i]$ (\Cref{AT-adjustj}). 
If $L_{x,y}[i] > L_{y,z}[j]$, we increase $j$ and adjust $i$ instead (\Cref{AT-movei} and \Cref{AT-adjusti}). 
The whole process will terminate when no more feasible pair needs to be enumerated (\Cref{AT-while}). Then, all the redefined timestamps are stored in $A$. We return $A$ as the output (\Cref{AT-return}).

Lastly, we prove that counting the number of activated redefined timestamps can be converted to computing the difference between two 2D-range counting queries.

Previously, we regarded the active timestamp as a single point in the 2-D plane and inserted it into the 2D-counting data structure. For any time window, the answer is computed by a single query on the data structure. Under the current definition of the active timestamp, we can still efficiently answer the query using two 2D-range counting data structures $T$ and $\overline{T}$ instead. That is to say, after construction, we will be able to answer the number of redefined timestamps $[l, r_1, r_2]$ satisfying $t_s \leq l$ and $r_1 \leq t_e < r_2$. Specifically, for $[l,r_1,r_2]$, we insert $(l,r_1)$ and $(l, r_2)$ into $T$ and $\overline{T}$ respectively. To count the number of active timestamps of a time-window $[t_s,t_e]$, we first query $[t_s, \infty] \times [-\infty, t_e]$ on both $T$ and $\overline{T}$, denoted as $num$ and $\overline{num}$ respectively. Then, we return $num - \overline{num}$ as the answer. The formal proof is as follows.
\begin{align*}
&\sum_{P \in G}\  \sum_{[l_i, r_{1,i}, r_{2,i}] \in \tilde{\mathcal{T}}(P)} \mathbbm{1}\{ l_i \geq t_s \land t_e \in [r_{1,i}, r_{2,i})\} = \\
&\sum_{P \in G}\  \sum_{[l_i, r_{1,i}, r_{2,i}] \in \tilde{\mathcal{T}}(P)} \mathbbm{1}\{ l_i \geq t_s \land t_e \geq r_{1,i}\} - \mathbbm{1}\{ l_i \geq t_s \land t_e \geq r_{2,i} \} =\\
&\left ( \sum_{P \in G}\  \sum_{[l_i, r_{1,i}] \in \tilde{\mathcal{T}}(P)} \mathbbm{1}\{ l_i \geq t_s \land t_e \geq r_{1,i}\} \right )- \\ 
&\left ( \sum_{P \in G}\  \sum_{[l_i, r_{2,i}] \in \tilde{\mathcal{T}}(P)} \mathbbm{1}\{ l_i \geq t_s \land t_e \geq r_{2,i}\} \right )= \\
&T.query([t_s, \infty] \times [-\infty, t_e]) - \overline{T}.query([t_s, \infty] \times [-\infty, t_e])
\end{align*}

Lastly, we prove that if the size (number of $[l, r_1, r_2]$ in the tuple) of the active intervals of each wedge is bounded (\Cref{lemma:duplicate}), both \texttt{EBI} and \texttt{CBI}'s time complexity will not be compromised.

\begin{lemma} \label{lemma:duplicate}
For any two vertices $u,v \in V(G)$, we denote $cnt_{u,v}$ as the number of edges $(u,v,t) \in E(G)$.
There exists an algorithm (\Cref{alg:active-timestamp} in the technical report Part \ref{append:duplicate}) that returns its active intervals (\Cref{def:active-timestamp}) of size $O(\min(cnt_{x,y}, cnt_{y,z}))$ for any wedge $\left \langle x \leadsto y \leadsto z \right \rangle$.
\end{lemma}

\begin{proof}[Proof of \Cref{lemma:duplicate}]
WLOG, we assume $cnt_{x, y} \leq cnt_{y, z}$. 
It is easy to see the time complexity of \Cref{alg:active-timestamp} is $O(|S|\log m)$, and we will then show $|S| \leq 3cnt_{x, y}$.

For each wedge $\left \langle x \leadsto y \leadsto z \right \rangle [l_i, r_{1,i}, r_{2,i}] \in S$, we have at least one of  $l_i$ or $r_{1,i}$ comes from a timestamp of $L_{x, y}$.
In other words, for each pair $(l, r)$ in lines~\ref{AT-while} to \ref{AT-new-lr} of \Cref{alg:active-timestamp}, there exists an $i \in [1, cnt_{x,y}]$ such that at least one of $l$ or $r$ equals to $L_{x, y}[i]$. We consider these cases as follows:
\begin{itemize}[leftmargin=*]
    \item $l = L_{x, y}[i], r \neq L_{x, y}[i]$: We move $i \leftarrow i + 1$ in the next iteration, which implies this case will only occur for at most $cnt_{x, y}$ times.
    \item $l \neq L_{x, y}[i], r = L_{x, y}[i]$: let $\hat{l}$ and $\hat{r}$ be the values of $l$ and $r$ in the next iteration, respectively. We have $\hat{r} \geq \hat{l} \geq r$, which implies this case will only occur for at most $cnt_{x, y}$ times.
    \item $l = L_{x, y}[i], r = L_{x, y}[i]$: let $\hat{l}$ and $\hat{r}$ be the value of $l$ and $r$ in the next iteration, respectively. We have either $r' \geq \hat{l} > r$ or $\hat{r} > \hat{l} \geq r$, which implies this case will only occur for at most $cnt_{x, y}$ times.
\end{itemize}
Since each wedge belongs to one of the three cases above, we prove that $|S| \leq 3cnt_{x,y}$.

\end{proof}

With this lemma, we are ready to prove that the time complexity for our algorithms will not be compromised by duplicate edges:

\begin{theorem}[Time complexity with Duplicate Edges]
\label{theo:time-dup}
\textbf{(i)} There exists a modification for \texttt{EBI} that can run in $O(\log m)$ time and $O(m^2)$ memory usage for bipartite temporal graphs with duplicate edges; \textbf{(ii)} There exists a modification for \texttt{CBI} that can run in $O(\tilde{w}\log m)$ time and $O(m\delta)$ memory usage for bipartite temporal graphs with duplicate edges.
\end{theorem}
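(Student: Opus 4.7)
The plan is to reduce the duplicate-edge setting to the duplicate-free analyses of \Cref{theo:EBI} and \Cref{theo:CBI} by combining two ingredients. First, I replace each wedge with its active-interval tuples $[l,r_1,r_2]$ produced by the algorithm promised by \Cref{lemma:duplicate}. Second, I modify $\mathcal{CS}$ to handle such 3-tuples by using the identity
\[
\mathbbm{1}[t_s \le l \land r_1 \le t_e < r_2] \;=\; \mathbbm{1}[t_s \le l \land r_1 \le t_e] - \mathbbm{1}[t_s \le l \land r_2 \le t_e],
\]
so each active interval $[l,r_1,r_2]$ is represented by two signed 2D points, $(l,r_1)$ with weight $+1$ and $(l,r_2)$ with weight $-1$, stored in a weighted Chazelle structure. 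A rectangle query $[t_s,\infty]\times[-\infty,t_e]$ on this signed structure then returns exactly the number of active intervals covering $[t_s,t_e]$, preserving the $O(\log m)$ query time from \Cref{theo:cs} and at most doubling the space relative to the unweighted version.

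For part (ii), CBI, I would run \Cref{alg:CBI-Construction} unchanged except that for each wedge $\langle x \leadsto y \leadsto z\rangle$ I first invoke the generator of \Cref{lemma:duplicate} to obtain its active-interval tuples and insert each tuple into $T_C[(x,z)]$ via the signed representation above. The query algorithm is identical, since two disjoint active intervals of wedges with the same endpoints still contribute one butterfly per shared query window, so the $\binom{\mathrm{count}}{2}$ accumulator remains correct. The memory and time accounting then reduces to bounding $\sum_{\text{wedges}}\min(cnt_{x,y},cnt_{y,z})$ by $O(m\delta)$. I plan to argue this via the vertex-priority wedge enumeration: each unique wedge is enumerated with $x$ as the highest-priority endpoint, so the $\min$ is at most $cnt_{y,z}$, and summing $cnt_{y,z}$ over wedges reduces to the same telescoping that gives the $m\delta$ wedge bound in the duplicate-free case.

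For part (i), EBI, I modify \Cref{alg:EBI-Construction} in the analogous way: after enumerating wedge active-interval tuples, for every pair of tuples on different wedges sharing endpoints $(x,z)$ I combine them into butterfly active-interval tuples and insert into a single $T_E$ under the signed representation, so the query remains a single $O(\log m)$ call. The main obstacle I foresee is justifying the $O(m^2)$ memory bound, because combining two wedge tuples is \emph{not} simply ``take the outermost timestamps'' as in the duplicate-free case: the disjoint covers of the two wedges' active windows must be refined into the butterfly's own disjoint cover. I would argue that for a fixed unique vertex quadruple, the number of resulting butterfly tuples is bounded by the minimum of the four incident edge multiplicities, by extending the proof of \Cref{lemma:duplicate} from two shared vertices to four. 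Summing this $\min$ bound over the $O(m^2)$ unique butterfly vertex quadruples then yields the claimed $O(m^2)$ total and completes the analysis.
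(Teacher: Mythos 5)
Your handling of the modified $\mathcal{CS}$ and of part (ii) matches the paper's proof in substance: encoding each tuple $[l,r_1,r_2]$ as two signed points is the same inclusion--exclusion the paper realizes with two unweighted structures $T$ and $\overline{T}$, and your bound $\sum_{\text{wedges}}\min(cnt_{x,y},cnt_{y,z})=O(m\delta)$ is exactly the paper's use of \Cref{lemma:duplicate} followed by essentially the same charging to the vertex-priority wedge bound (you charge $\sum_z cnt_{y,z}\le deg_y$ per unique edge $(x,y)$, the paper charges $cnt_{x,y}$ times the number of admissible $z$; both telescope to $m\delta$).

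Part (i), however, contains a genuine gap. Your proposed per-butterfly bound --- that the number of active-interval tuples of a butterfly is at most the \emph{minimum} of the four incident edge multiplicities --- is false. The butterfly's activation region in the $(t_s,t_e)$ plane is the intersection of the two wedges' staircase regions, and such an intersection can retain on the order of $|S_1|+|S_2|$ corners rather than $\min$. Concretely, take wedge $w_1$ with $L_{x,y_1}=\{10,30,50\}$ and $L_{y_1,z}=\{20,40,60\}$, whose minimal cover is $[10,20],[20,30],[30,40],[40,50],[50,60]$, and wedge $w_2$ whose two edges are simple, giving the single interval $[25,26]$. The resulting butterfly already has the three mutually non-dominated minimal windows $[10,26]$, $[20,30]$, $[25,40]$ (each is the unique activator of the query window equal to itself), while the minimum of the four multiplicities is $1$. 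So no extension of \Cref{lemma:duplicate} to four shared vertices can deliver the bound you want, and the step ``sum the $\min$ over the distinct quadruples'' does not control the number of points inserted into $T_E$. The paper instead bounds the butterfly's tuple count by the \emph{product} $\min(cnt_{x,y_1},cnt_{y_1,z})\cdot\min(cnt_{x,y_2},cnt_{y_2,z})$ of the two wedges' tuple counts (a valid upper bound, since every butterfly tuple arises from joining one tuple of each wedge, and pruning dominated joins only decreases the count), and then closes with $\sum_{\text{butterflies}}\min(\cdot)\cdot\min(\cdot)\le\sum cnt_{x,y_1}\,cnt_{y_2,z}\le m^2$. A sum-type bound $\min(cnt_{x,y_1},cnt_{y_1,z})+\min(cnt_{x,y_2},cnt_{y_2,z})$ would also suffice and also aggregates to $O(m^2)$, but some such correct per-butterfly bound together with the corresponding global summation is precisely the piece missing from your argument.
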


\begin{proof}[Proof of \Cref{theo:time-dup}]
\textbf{(i)}: Considering a butterfly consisting of two wedges $\left \langle x \leadsto y_1 \leadsto z \right \rangle$ and $\left \langle x \leadsto y_2 \leadsto z \right \rangle$, the size of its active intervals is $O(\min(cnt_{x,y_1}, cnt_{y_1, z}) \times \min(cnt_{x, y_2}, cnt_{y_2, z}))$ by \Cref{lemma:duplicate}. Since
\begin{align*}
&\sum_{\text{each butterfly } \left \langle x, y_1, z, y_2 \right \rangle \in G} \min(cnt_{x,y_1}, cnt_{y_1, z}) \times \min(cnt_{x, y_2}, cnt_{y_2, z}) \\ &\leq \sum_{\text{each butterfly } \left \langle x, y_1, z, y_2 \right \rangle \in G} cnt_{x,y_1}cnt_{y_2, z} \leq m^2.    
\end{align*}
, the bound of the number of points in $\mathcal{CS}s$ $O(m^2)$ for \texttt{EBI} still holds.

\textbf{(ii)}: Considering a wedge $\left \langle x \leadsto y \leadsto z \right \rangle$, the the size of its active intervals in its active timestamp is $O(\min(cnt_{x,y}, cnt_{y, z}))$ by \Cref{lemma:duplicate}.
The total number of tuples for all wedges' active timestamps is:
\begin{align*}
    &\sum_{y \in V(G)}\sum_{x \in N(y), \atop pr(x) \prec pr(y)}\sum_{z \in N(y), \atop pr(x) \prec pr(z)} \min(cnt_{x, y}, cnt_{y, z}) \\
    &\leq \sum_{y \in V(G)}\sum_{x \in N(y), \atop pr(x) \prec pr(y)} cnt_{x, y} deg_x \\
    &\leq \sum_{(x, y) \in E(G)} \min(deg_x, deg_y) = m\delta.
\end{align*}
Therefore, the bound of the number of points in $\mathcal{CS}s$ for \texttt{CBI}, $O(m\delta)$, also holds.
\end{proof}

\section{Omitted Proof for Power-law Graphs}
\label{sec:append-power-law}
\begin{proof} [Proof for \Cref{thm:double_power_law} and \Cref{thm:single_power_law}]
By \Cref{thm:complexity}, we know that the query time for \GSI~is nearly linear in the number of keys $(x, z)$ in $T_C.$
The space usage for \GSI is bounded by the number of butterflies not maintained by $T_C[(x, z)]$, plus the total number of wedges in each $W[(x, z)]$ for $(x, z)\in T_C.keys().$

To bound the query time and space usage, let $k$ be a parameter between $1$ and $\Delta=\max(\Delta_1, \Delta_2)$. Let $P_{\ge k}$ be the set of unordered pairs $(x, z)$ such that $x, z$ are on the same side of the bipartite graph $G$, and that $d_x, d_z \ge k$. Let $\#\Join_{\ge k}$ be the number of butterflies $B$ such that $\exists \{x, z\} \in P_{\ge k}$, $\{x, z\} \subseteq B$. Here we abuse notation and use $B$ to mean the vertices of a butterfly $B$. 
 
 In \Cref{gsi-c:enumeratewkey}~of \Cref{alg:GSI-Construction}, we construct $T_C[(x, z)]$s for pairs $(x, z)$ with the largest $W[(x, z)]$s until the total number of butterflies in these largest $W[(x, z)]$s exceeds $(1-\alpha)numB$. The rest of the butterflies will be maintained by $T_E.$

Intuitively, for proving the efficiency of \GSI, we would like to show that a small number of $\{x,z\}$ (those in $P_{\ge k}$) covers a large fraction ($\#\Join_{\ge k}/numB$) of the total number of butterflies. Formally, we can prove that for some $k$, \GSI~has query complexity $\Otil(|P_{\ge k}|)$ and expected space complexity.

$O\left(\#\Join_{\ge 1}-\#\Join_{\ge k}+\sum_{(x, z)\in P_{\ge k}}|T_C[(x, z)]|\right)$. To show this, let's consider an algorithm similar to \Cref{alg:GSI-Construction}. The modified algorithm changes the condition on \Cref{gsi-c:ifbig} from ``$num \geq \alpha \cdot numB$'' to ``$\{x, z\} \in P_{\ge k}$''. Then $T_C$ contains $|P_{\ge k}|$ elements (one for each $(x, z)\in P_{\ge k}$), and $T_E$ contains $\#\Join_{\ge 1}-\#\Join_{\ge k}$ butterflies. The time complexity of the modified algorithm is $\Otil(|P_{\ge k}|)$ and that the expected space usage of it is

$O\left(\#\Join_{\ge 1}-\#\Join_{\ge k}+\sum_{(x, z)\in P_{\ge k}}|T_C[(x, z)]|\right)$.

In the original \GSI~(\Cref{alg:GSI-Construction}), for a fixed choice of $k$, we can choose $\alpha$ properly such that the condition on \Cref{gsi-c:ifbig} evaluates to true for the first $|P_{\ge k}|$ iterations, i.e., after the first $|P_{\ge k}|$ iterations of the loop, $num$ is no more than $\alpha\cdot numB$. The query time of \GSI~is bounded by $\Otil(|P_{\ge k}|).$ The space usage of \GSI~is bounded above by that of the modified algorithm, because the sets in $W$ are sorted with decreasing order of sizes. Each set $W[(x, z)]$ costs $|W[(x, z)]|$ space if it is maintained in $T_C$ and $\binom{|W[(x, z)]|}{2}$ space if it is maintained in $T_E$. \GSI~costs less space because it maintains larger sets in $T_C$, compared to the modified algorithm. These will automatically translates to the same bounds for \GSI.

For simplicity, we use $V_1, V_2$ to denote $U$, $L$.
We first calculate the expected number of edges, $m$, of $G$. $m$ can be calculated by $\Delta_i,\gamma_i, n_i$ for either $i=1$ or $i=2$. For the model to be consistent, we require that the $m$'s calculated by $i=1$ and $i=2$ are equal.

\begin{lemma}
     For any $i\in \{1,2\}$, if $\gamma_i\in (2, 3)$, 
     $$m=\left(1+O\left(\Delta_i^{2-\gamma_i}\right)\right)\frac{n_i}{s_i(\gamma_i-2)}.$$
\end{lemma}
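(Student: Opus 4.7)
The plan is to unfold the definition of $m$ and reduce the statement to a clean asymptotic estimate of a single partial sum. By \Cref{def:power-law}, for any vertex $x$ on side $V_i$ we have $\mathbb{P}[d_x = k] = k^{-\gamma_i}/s_i$ with $s_i = \sum_{k=1}^{\Delta_i} k^{-\gamma_i}$ the normalising constant, so by linearity of expectation
\[
m \;=\; n_i\,\mathbb{E}[d_x] \;=\; \frac{n_i}{s_i}\sum_{k=1}^{\Delta_i} k^{1-\gamma_i}.
\]
Matching this against the claimed formula reduces the lemma to the scalar asymptotic
\[
\sum_{k=1}^{\Delta_i} k^{1-\gamma_i} \;=\; \bigl(1+O(\Delta_i^{2-\gamma_i})\bigr)\cdot \frac{1}{\gamma_i-2}, \qquad \gamma_i\in(2,3).
\]

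For this estimate the natural tool is the integral test. Since $x^{1-\gamma_i}$ is positive and monotonically decreasing on $[1,\infty)$ for $\gamma_i>2$, I would sandwich the sum by
\[
\int_1^{\Delta_i+1} x^{1-\gamma_i}\,dx \;\le\; \sum_{k=1}^{\Delta_i} k^{1-\gamma_i} \;\le\; 1+\int_1^{\Delta_i} x^{1-\gamma_i}\,dx,
\]
and evaluate the right-hand integral as $\tfrac{1-\Delta_i^{2-\gamma_i}}{\gamma_i-2}$ (with the left-hand one analogous). Both bounds share the leading value $\tfrac{1}{\gamma_i-2}$ and deviate from it by a term of order $\Delta_i^{2-\gamma_i}$ once the constant $\gamma_i-2$ is absorbed into the paper's $O(\cdot)$ convention from \Cref{sec:prelim}. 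Dividing through by $\tfrac{1}{\gamma_i-2}$ then converts the additive error into the multiplicative form $1+O(\Delta_i^{2-\gamma_i})$ required by the statement; if a tighter handle on the boundary $+1$ term is needed, one can invoke the first-order Euler--Maclaurin formula instead of the crude sandwich.

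The main obstacle will be the error bookkeeping, specifically checking that the $O(1)$ discrepancy from the discrete-to-continuous conversion is truly subsumed by the desired $O(\Delta_i^{2-\gamma_i})$ remainder once the $\gamma_i$-dependent prefactor is pushed into the implicit constant of the $O(\cdot)$. The hypothesis $\gamma_i \in (2,3)$ is used twice here: it makes the tail $\int_{\Delta_i}^{\infty} x^{1-\gamma_i}\,dx = \tfrac{\Delta_i^{2-\gamma_i}}{\gamma_i-2}$ finite (so $\sum_k k^{1-\gamma_i}$ has a well-defined limit), and it guarantees $\Delta_i^{2-\gamma_i}\to 0$ so the stated error is nontrivially small for large $\Delta_i$. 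Plugging the scalar estimate back into $m=\tfrac{n_i}{s_i}\sum_{k=1}^{\Delta_i} k^{1-\gamma_i}$ yields the lemma; the derivation is identical for either choice of $i\in\{1,2\}$, so proving one side suffices and automatically enforces the compatibility $\sum_{x \in U}\mathbb{E}[d_x]=\sum_{y\in L}\mathbb{E}[d_y]$ needed for the model to be well-defined.
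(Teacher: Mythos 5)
Your reduction is exactly the paper's: write $m=n_i\expec{}{d_x}=\frac{n_i}{s_i}\sum_{k=1}^{\Delta_i}k^{1-\gamma_i}$ and then estimate the partial sum. The paper's own proof stops at that point and simply asserts the final asymptotic, so the whole content of the lemma is the scalar estimate you isolate --- and that estimate is where your argument breaks down. For $\gamma_i\in(2,3)$ the series $\sum_{k\ge 1}k^{1-\gamma_i}$ converges, so
\[
\sum_{k=1}^{\Delta_i}k^{1-\gamma_i}
=\left(1+O\left(\Delta_i^{2-\gamma_i}\right)\right)\sum_{k=1}^{\infty}k^{1-\gamma_i}
=\left(1+O\left(\Delta_i^{2-\gamma_i}\right)\right)\zeta(\gamma_i-1),
\]
and $\zeta(\gamma_i-1)$ exceeds $\tfrac{1}{\gamma_i-2}$ by an additive constant (for $\gamma_i=2.5$, $\zeta(1.5)\approx 2.612$ versus $1/(\gamma_i-2)=2$). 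Since $\Delta_i^{2-\gamma_i}\to 0$, a factor $\bigl(1+O(\Delta_i^{2-\gamma_i})\bigr)$ cannot absorb a constant multiplicative discrepancy, so the sum is \emph{not} $\bigl(1+O(\Delta_i^{2-\gamma_i})\bigr)\tfrac{1}{\gamma_i-2}$. Concretely, in your sandwich the upper bound $1+\int_1^{\Delta_i}x^{1-\gamma_i}\,dx=\tfrac{1}{\gamma_i-2}\bigl(1+(\gamma_i-2)-\Delta_i^{2-\gamma_i}\bigr)$ deviates from $\tfrac{1}{\gamma_i-2}$ by the relative amount $\gamma_i-2=\Theta(1)$, not $O(\Delta_i^{2-\gamma_i})$. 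The ``$O(1)$ discrepancy'' you correctly flag as the main obstacle is therefore genuinely not subsumed, and first-order Euler--Maclaurin only trades the $+1$ for $+\tfrac12$, which is still a constant.

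What your attempted justification actually exposes is that the lemma's leading constant is wrong as printed; the paper's one-line proof makes the same unjustified leap in its last equality. The repair is harmless: state the conclusion as $m=\bigl(1+O(\Delta_i^{2-\gamma_i})\bigr)\frac{n_i}{s_i}\sum_{k=1}^{\infty}k^{1-\gamma_i}$, or simply $m=\Theta(n_i)$, which is all the downstream power-law analysis uses. One further small inaccuracy: proving the formula for each $i$ does not ``automatically enforce'' $\expec{}{\sum_{x\in U}d_x}=\expec{}{\sum_{y\in L}d_y}$; that compatibility is a constraint imposed on the parameters in \Cref{def:power-law}, not a consequence of the computation.
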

\begin{proof}
    Let $x$ be a fixed vertex in $V_i$.
    \begin{align*}
        m=&n_i\expec{}{\deg_x} \tag{by linearity of expectation}\\
        =&n_i\expec{}{d_x}\\
        =&n_i\frac{1}{s_i}\sum_{i=1}^{\Delta_i} i^{1-\gamma_i}\\
        =&\left(1+O\left(\Delta_i^{2-\gamma_i}\right)\right) \frac{n_i}{s_i(\gamma_i-2)}.
    \end{align*}
\end{proof}

Next, we estimate the expectation of $|P_{\ge k}|$ for bounding the query time.
\begin{lemma}
\label{lem:expec_p_ge_k}
    $\expec{}{|P_{\ge k}|}=\left(1+O\left(\frac{1}{n_1}+\frac{1}{n_2}\right)\right)\left(\frac{n_1^2s_{1, k}^2}{2s_1^2}+\frac{n_2^2s_{2, k}^2}{2s_2^2}\right).$
\end{lemma}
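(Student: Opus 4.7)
}
The approach is a direct computation via linearity of expectation, exploiting that the intermediate degree variables $d_x$ are mutually independent across all vertices in the generative model of \Cref{def:power-law}. I split $P_{\ge k}$ into the contribution from the two sides, writing $P_{\ge k}=P^{(1)}_{\ge k}\cup P^{(2)}_{\ge k}$, where $P^{(i)}_{\ge k}$ consists of the unordered pairs $\{x,z\}\subseteq V_i$ with $d_x,d_z\ge k$. These two sets are disjoint, so $\expec{}{|P_{\ge k}|}=\expec{}{|P^{(1)}_{\ge k}|}+\expec{}{|P^{(2)}_{\ge k}|}$.

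For each side $i\in\{1,2\}$, I would compute the single-vertex tail probability under the power-law law first. Recalling that $\pr{}{d_x=j}=j^{-\gamma_i}/s_i$ for $j\in[1,\Delta_i]$ where $s_i=\sum_{j=1}^{\Delta_i} j^{-\gamma_i}$ is the normalizing constant, and writing $s_{i,k}\defeq\sum_{j=k}^{\Delta_i} j^{-\gamma_i}$, we get $\pr{}{d_x\ge k}=s_{i,k}/s_i$ for any $x\in V_i$. Then, by linearity of expectation indexed over the $\binom{n_i}{2}$ unordered pairs of distinct vertices in $V_i$, and using the independence of $d_x$ and $d_z$ for $x\ne z$, the probability that a fixed pair lies in $P^{(i)}_{\ge k}$ factors as $(s_{i,k}/s_i)^2$. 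Hence
\begin{equation*}
\expec{}{|P^{(i)}_{\ge k}|}=\binom{n_i}{2}\cdot\frac{s_{i,k}^2}{s_i^2}.
\end{equation*}

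To finish, I would convert the binomial coefficient into the claimed multiplicative error form via $\binom{n_i}{2}=\tfrac{n_i^2}{2}\bigl(1-\tfrac{1}{n_i}\bigr)=\tfrac{n_i^2}{2}\bigl(1+O(1/n_i)\bigr)$, then pull the common error factor $(1+O(1/n_1+1/n_2))$ out of the sum of the two sides; this is exactly the expression stated in the lemma. The entire argument is essentially a one-line expectation calculation, so there is no serious obstacle; the only subtlety worth double-checking is that, unlike the event ``edge $(x,y)$ exists'' (which couples $d_x$ and $d_y$ through the $d_xd_y/m$ sampling probability), the event ``$d_x\ge k$'' depends only on the independent marginal draws of the $d_x$'s, so factoring the joint probability for a pair is legitimate and no correlation correction is needed.
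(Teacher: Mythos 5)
Your proposal is correct and follows essentially the same route as the paper's proof: both split $P_{\ge k}$ by side, factor the pair probability as $(s_{i,k}/s_i)^2$ using independence of the $d_x$ draws, apply linearity of expectation over the $\binom{n_i}{2}$ pairs, and absorb the $\binom{n_i}{2}$-versus-$n_i^2/2$ discrepancy into the $\bigl(1+O(1/n_1+1/n_2)\bigr)$ factor. Your added remark on why no correlation correction is needed (the events depend only on the independent marginals, not on the coupled edge-sampling step) is a sensible clarification but not a departure from the paper's argument.
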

\begin{proof}
    Recall that for any pair of vertices $x, z$, $(x, z)\in P_{\ge k}$ if $x, z\in V_i$ and $d_x, d_z\ge k$. The expected number of such pairs for a fixed $i\in \{1, 2\}$ is 
    \begin{align*}
        &\binom{n_i}{2}\pr{}{d_x\ge k}\pr{}{d_z\ge k}\\
        =&\binom{n_i}{2}\left(\frac{\sum_{d=k}^{\Delta_d}d^{-\gamma_i}}{s_i}\right)^2\\
        =&\left(1+O\left(\frac{1}{n_i}\right)\right)\frac{n_i^2s_{i, k}^2}{2s_i^2}.
    \end{align*}
    Summing over $i=\{1, 2\}$ gives
    $$\expec{}{|P_{\ge k}|}=\left(1+O\left(\frac{1}{n_1}+\frac{1}{n_2}\right)\right)\left(\frac{n_1^2s_{1, k}^2}{2s_1^2}+\frac{n_2^2s_{2, k}^2}{2s_2^2}\right).$$
\end{proof}

Lastly, we calculate $\expec{}{\#\Join_{\ge k}},$ and $\expec{}{\#\Join_{\ge 1}}.$ The difference of these two numbers will bound the space usage.
\begin{lemma}
\label{lem:butterfly_ge_k}
    \begin{align*}
        &\expec{}{\#\Join_{\ge k}}\\
    \ge &\left(1 + O\left(\Delta_1^{\gamma_1 - 3}+\Delta_2^{\gamma_2 - 3}\right)\right)\frac{n_1^2n_2^2}{4m^4}\\& \frac{\left(\Delta_1^{3-\gamma_1}-k^{3-\gamma_1}\right)\left(\Delta_2^{3-\gamma_2}-k^{3-\gamma_2}\right)}{s_1s_2(3-\gamma_1)(3-\gamma_2)}.
     \end{align*}
\end{lemma}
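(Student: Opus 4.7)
The plan is to lower bound $\expec{}{\#\Join_{\ge k}}$ by restricting attention to butterflies $\{x_1,x_2,y_1,y_2\}$ whose four sampled degrees all exceed $k$: such a butterfly automatically has $\{x_1,x_2\},\{y_1,y_2\}\in P_{\ge k}$ and is therefore counted in $\#\Join_{\ge k}$. I will evaluate the expectation of this restricted count in closed form using the independence built into \Cref{def:power-law}, and then approximate each resulting one-dimensional moment by an integral using the power-law distribution.

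First, by linearity of expectation over all $\binom{n_1}{2}\binom{n_2}{2}$ unordered quadruples $(\{x_1,x_2\},\{y_1,y_2\})$, and using the conditional independence of the four biclique edges given the sampled degrees, the conditional probability that a fixed quadruple forms a butterfly equals $d_{x_1}^{2}d_{x_2}^{2}d_{y_1}^{2}d_{y_2}^{2}/m^{4}$ (each vertex touches exactly two of the four edges). Because the four sampled degrees are mutually independent and identically distributed within each side, the expectation factorizes cleanly into products of one-variable moments of the form $\expec{}{d^{2}\mathbbm{1}[d\ge k]}$ on each side; any remaining second-moment factors $\expec{}{d^{2}}$ are bounded below by $1$ since $d\ge 1$, so dropping them preserves the lower bound and leaves exactly the per-side quantities appearing in the lemma.

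Next, I would evaluate $\expec{}{d^{2}\mathbbm{1}[d\ge k]}=\tfrac{1}{s_i}\sum_{d=k}^{\Delta_i}d^{2-\gamma_i}$ by a Riemann-sum-to-integral comparison. Since $x\mapsto x^{2-\gamma_i}$ is monotone on $[1,\Delta_i]$ and $\gamma_i\in(2,3)$, Euler--Maclaurin gives
\[
\frac{1}{s_i}\sum_{d=k}^{\Delta_i}d^{2-\gamma_i}=\bigl(1+O(\Delta_i^{\gamma_i-3})\bigr)\cdot \frac{\Delta_i^{3-\gamma_i}-k^{3-\gamma_i}}{s_i(3-\gamma_i)},
\]
the relative error coming from the boundary correction of order $\Delta_i^{2-\gamma_i}$ divided by the leading $\Delta_i^{3-\gamma_i}$ term. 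Combining the two per-side factors with $\binom{n_i}{2}=(1+O(1/n_i))\,n_i^{2}/2$ and absorbing all lower-order multiplicative slack produces the lower bound stated in the lemma, with the $(1+O(\Delta_1^{\gamma_1-3}+\Delta_2^{\gamma_2-3}))$ factor tracking the sum of the two boundary-error contributions.

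The main obstacle is uniform control of the integral approximation over the full range $k\in[1,\Delta_i]$, including the regime $k\approx\Delta_i$ where the dominant quantity $\Delta_i^{3-\gamma_i}-k^{3-\gamma_i}$ degenerates to zero: a first-order Taylor expansion of $x\mapsto x^{3-\gamma_i}$ near $\Delta_i$ is needed to verify that the discrete-continuous mismatch is still dominated by this difference within the stated $(1+O(\Delta_i^{\gamma_i-3}))$ slack. Once that uniform bound is in place, the rest of the proof is a routine multiplication of the per-side factors and bookkeeping of the multiplicative error.
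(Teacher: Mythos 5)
Your proposal follows essentially the same route as the paper: restrict to butterflies whose four sampled degrees all exceed $k$ (the paper's $\#\Join_{both\ge k}$, which lower-bounds $\#\Join_{\ge k}$), factor the expectation into squared per-side moments $\expec{}{d^{2}\mathbbm{1}[d\ge k]}$ via independence, and approximate the resulting power sums by $\bigl(1+O(\Delta_i^{\gamma_i-3})\bigr)\frac{\Delta_i^{3-\gamma_i}-k^{3-\gamma_i}}{s_i(3-\gamma_i)}$. The only stray remark is about dropping ``remaining second-moment factors'' --- the factorization is exact and leaves no such factors --- but this does not affect the argument, and your caution about the $k\approx\Delta_i$ regime is moot for the lower-bound direction since the monotone sum always dominates the corresponding integral.
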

\begin{proof}
    To get a lower bound for $\expec{}{\#\Join_{\ge k}}$, we use the quantity $\#\Join_{both \ge k}$ be the number of butterflies $B=(x, y, z, w)$ such that both $(x, z)$ and $(y, w)$ are from $P_{\ge k}.$
    
    $\#\Join_{both \ge k}$ is no more than $\#\Join_{\ge k}$ which is the number of butterflies such that at least one of $(x, z)$ and $(y, w)$ are from $P_{\ge k}.$
    
    We sum up the probability that $x, y, z, w$ form a butterfly for $x, z\in V_1, y, w\in V_2$.
    \begin{align*}
        &\expec{}{\#\Join_{both \ge k}}\\
        =&\binom{n_1}{2}\binom{n_2}{2}\sum_{d_1, d_3 \in [k, \Delta_1], d_2, d_4\in [k, \Delta_2]} \pr{}{d_x=d_1}\pr{}{d_y=d_2}\\
        &\pr{}{d_z=d_3}\pr{}{d_w=d_4}\frac{d_xd_y}{m}\frac{d_xd_w}{m}\frac{d_zd_y}{m}\frac{d_zd_w}{m}\\
        =&\binom{n_1}{2}\binom{n_2}{2}\sum_{d_1, d_3 \in [k, \Delta_1], d_2, d_4\in [k, \Delta_2]} \pr{}{d_x=d_1}\pr{}{d_y=d_2}\\
        &\pr{}{d_z=d_3}\pr{}{d_w=d_4}\frac{d_x^2d_y^2d_z^2d_w^2}{m^4}\\
        =&\binom{n_1}{2}\binom{n_2}{2}\frac{1}{m^4} \left(\sum_{d_1, d_3 \in [k, \Delta_1]} \pr{}{d_x=d_1}\pr{}{d_z=d_3} d_x^2d_z^2\right)\\
        &\left(\sum_{d_2, d_4 \in [k, \Delta_2]} \pr{}{d_y=d_2}\pr{}{d_w=d_4} d_y^2d_w^2\right)\\
        =&\binom{n_1}{2}\binom{n_2}{2}\frac{1}{m^4} \left(\sum_{d_1\in [k, \Delta_1]} \pr{}{d_x=d_1}d_x^2\right)^2\\
        &\left(\sum_{d_2\in [k, \Delta_2]} \pr{}{d_y=d_2}d_y^2\right)^2\\
        =&\binom{n_1}{2}\binom{n_2}{2}\frac{1}{m^4} \mathbb{E}^2\left[\mathbf{1}_{d_{v_1}\ge k}d_{v_1}^2\right] \mathbb{E}^2\left[\mathbf{1}_{d_{v_2}\ge k}d_{v_2}^2\right]
    \end{align*} where $v_1$ ($v_2$) is an arbitrary vertex from $V_1$ ($V_2$).
    We next calculate the expectations in the equation above. For any $i\in \{1, 2\}$, 
    \begin{align*}
        &\mathbb{E}\left[\mathbf{1}_{d_{v_i}\ge k}d_{v_i}^2\right]\\
        =&\sum_{d=k}^{\Delta_i} \pr{}{d_{v_i}=d} d^2\\
        =&\frac{1}{s_{i}} \sum_{d=k}^\Delta d^{2-\gamma_i}\\
        =&\left(1 + O\left(\Delta_i^{\gamma_i - 3}\right)\right)\frac{\Delta_i^{3-\gamma_i}-k^{3-\gamma_i}}{s_i(3-\gamma_i)}.
    \end{align*}
    Thus, we have
    \begin{align*}
        &\expec{}{\#\Join_{both \ge k}}\\
        =&\binom{n_1}{2}\binom{n_2}{2}\frac{1}{m^4} \mathbb{E}^2\left[\mathbf{1}_{d_{v_1}\ge k}d_{v_1}^2\right] \mathbb{E}^2\left[\mathbf{1}_{d_{v_2}\ge k}d_{v_2}^2\right]\\
        =&\binom{n_1}{2}\binom{n_2}{2}\frac{1}{m^4} \Pi_{i=1}^2\left(1 + O\left(\Delta_i^{\gamma_i - 3}\right)\right)\frac{\Delta_i^{3-\gamma_i}-k^{3-\gamma_i}}{s_i(3-\gamma_i)}\\
        =&\left(1 + O\left(\Delta_1^{\gamma_1 - 3}+\Delta_2^{\gamma_2 - 3}\right)\right)\frac{n_1^2n_2^2}{4m^4}\\& \frac{\left(\Delta_1^{3-\gamma_1}-k^{3-\gamma_1}\right)\left(\Delta_2^{3-\gamma_2}-k^{3-\gamma_2}\right)}{s_1s_2(3-\gamma_1)(3-\gamma_2)}.
    \end{align*}
\end{proof}

Similar to \Cref{lem:butterfly_ge_k}, we can estimate the expectation of $numB=\#\Join_{\ge k}.$

\begin{lemma}
\label{lem:butterfly_ge_1}
    \begin{align*}
        &\expec{}{\#\Join_{\ge 1}}\\
    =&\left(1 + O\left(\Delta_1^{\gamma_1 - 3}+\Delta_2^{\gamma_2 - 3}\right)\right)\frac{n_1^2n_2^2}{4m^4}\\& \frac{\Delta_1^{3-\gamma_1}\Delta_2^{3-\gamma_2}}{s_1s_2(3-\gamma_1)(3-\gamma_2)}.
     \end{align*}
\end{lemma}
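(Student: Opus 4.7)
The plan is to specialize the argument of \Cref{lem:butterfly_ge_k} to the case $k=1$ and observe that the lower bound sharpens to an equality (up to the stated error). Under \Cref{def:power-law} every sampled degree $d_v$ lies in $[1, \Delta_i]$, so the constraint ``$\{x,z\} \in P_{\ge 1}$'' is vacuous; consequently $\#\Join_{\ge 1}$ coincides exactly with the quantity $\#\Join_{both\ge 1}$ from the earlier proof, which is simply the total number of butterflies in $G$. Hence the ``$\ge$'' used in \Cref{lem:butterfly_ge_k} becomes ``$=$'' here.

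Next I would reuse the linearization step verbatim: summing over unordered pairs $(x,z)\in\binom{V_1}{2}$ and $(y,w)\in\binom{V_2}{2}$ and noting that, conditional on the sampled $d_x, d_y, d_z, d_w$, the four candidate edges exist independently with probabilities $d_u d_v/m$, the probability that all four exist factors as $d_x^2 d_y^2 d_z^2 d_w^2/m^4$. Splitting the expectation across the two sides gives
\[
\expec{}{\#\Join_{\ge 1}} \;=\; \binom{n_1}{2}\binom{n_2}{2}\frac{1}{m^4}\, \bigl(\mathbb{E}[d_{v_1}^2]\bigr)^2 \bigl(\mathbb{E}[d_{v_2}^2]\bigr)^2,
\]
where $v_i$ denotes an arbitrary vertex of $V_i$.

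The remaining step evaluates the second moment $\mathbb{E}[d_{v_i}^2] = \frac{1}{s_i}\sum_{d=1}^{\Delta_i} d^{2-\gamma_i}$. Since $\gamma_i\in(2,3)$, this sum is dominated by its upper end and equals $(1+O(\Delta_i^{\gamma_i-3}))\frac{\Delta_i^{3-\gamma_i}}{3-\gamma_i}$; the $d=1$ boundary contributes only $O(1)$, which is an $O(\Delta_i^{\gamma_i-3})$ multiplicative correction to the leading $\Delta_i^{3-\gamma_i}$ term. Substituting back, using $\binom{n_i}{2}=(1+O(1/n_i))n_i^2/2$, and collecting the multiplicative error factors from the two sides yields the claimed expression.

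There is essentially no new technical obstacle beyond \Cref{lem:butterfly_ge_k}: in that proof the corresponding factor $\Delta_i^{3-\gamma_i}-k^{3-\gamma_i}$ appeared, and the only difference here is that $k^{3-\gamma_i}=1$ is absorbed into the multiplicative error $1+O(\Delta_i^{\gamma_i-3})$ via the identity $\Delta_i^{3-\gamma_i}-1 = \Delta_i^{3-\gamma_i}(1-\Delta_i^{\gamma_i-3})$. The main thing to be careful about is consistent book-keeping of the $O(\Delta_i^{\gamma_i-3})$ and $O(1/n_i)$ error terms so that they combine into the single error factor $1+O(\Delta_1^{\gamma_1-3}+\Delta_2^{\gamma_2-3})$ stated in the lemma.
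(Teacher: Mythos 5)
Your proposal matches the paper's proof, which likewise observes that for $k=1$ the condition $d_x,d_z\ge k$ is vacuous, so $\#\Join_{both\ge 1}=\#\Join_{\ge 1}$ and the inequality of \Cref{lem:butterfly_ge_k} becomes an equality, after which the same second-moment computation applies with the $k^{3-\gamma_i}=1$ term absorbed into the $1+O(\Delta_i^{\gamma_i-3})$ error factor. No substantive difference.
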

\begin{proof}
The proof is identical to that of \Cref{lem:butterfly_ge_k} with $k=1.$ We may replace the $\ge$ to $=$ because when $k=1$, $\#\Join_{both \ge k}$ is equal to $\#\Join_{\ge k}.$
\end{proof}

\paragraph{Double-sided power-law bipartite graphs}

In the double-sided power-law model, both $\gamma_1$ and $\gamma_2$ are in the range $(2, 3)$. In this case, we have that $s_i$ is a constant for $i=1,2$.
We also have that $m=O(n).$

We define $n=max(n_1, n_2)$, $\Delta=max(\Delta_1, \Delta_2)$, and $\gamma=min(\gamma_1, \gamma_2).$

We may calculate 

$\expec{}{\#\Join_{\ge 1}-\#\Join_{\ge k}}$ by \Cref{lem:butterfly_ge_1} and \Cref{lem:butterfly_ge_k}.
    \begin{align*}
        &\expec{}{\#\Join_{\ge 1}-\#\Join_{\ge k}}\\
        =&O\left(\left(\Delta_1^{\gamma_1-3}+\Delta_2^{\gamma_2-3}\right)\frac{n_1^2n_2^2}{m^4}\left(k^{3-\gamma_1}\Delta_2^{3-\gamma_2}+k^{3-\gamma_2}\Delta_1^{3-\gamma_1}\right)\right)\\
        =&O\left(\frac{n_1^2n_2^2}{m^4}\left(\left(\frac{k}{\Delta_1}\right)^{3-\gamma_1}\Delta_2^{3-\gamma_2} + \left(\frac{k}{\Delta_2}\right)^{3-\gamma_2}\Delta_1^{3-\gamma_1}+k^{3-\gamma_1}+k^{3-\gamma_2}\right)\right)\\
        =&O\left(\Delta^{6-2\gamma}\right)
        \end{align*} for $k\le \Delta.$
        When $k$ is no more than the smaller of $\Delta_1$ and $\Delta_2$, we have a sharper bound.

    \begin{align*}
        &\expec{}{\#\Join_{\ge 1}-\#\Join_{\ge k}}\\
        =&O\left(\frac{n_1^2n_2^2}{m^4}\left(\Delta_2^{3-\gamma_2} + \Delta_1^{3-\gamma_1}+k^{3-\gamma_1}+k^{3-\gamma_2}\right)\right) \tag{$k\le \Delta_1, k\le \Delta_2$}\\
        =&O\left(\frac{n^4}{m^4} \Delta^{3-\gamma}\right)\\
        =&O\left(\Delta^{3-\gamma}\right)
    \end{align*} for $k \le \min(\Delta_1, \Delta_2)$.
By \Cref{lem:expec_p_ge_k}, 
\begin{align*}
        &\expec{}{|P_{\ge k}|}\\
    =&O\left(n^2k^{2-2\gamma}\right).
\end{align*}
Note that the two bounds above on space usage do not depend on $k$. Thus, we may choose the largest possible $k$ to reduce the query time in each case. 
\begin{itemize}[leftmargin=*]
    \item We may choose $k=\Delta$ so that all butterflies are maintained by $T_E$. The query time is $\Otil(1)$ and the expected space usage is $\expec{}{\#\Join_{\ge 1}}=O(\Delta^{6-2\gamma}).$ \item We may also choose $k=\min(\Delta_1, \Delta_2)$ so that the expected query time is $\Otil(\expec{}{|P_{\ge k}|})=\Otil(n^2\min(\Delta_1, \Delta_2)^{2-2\gamma})$ and the expected space usage for $T_E$ is $O(\Delta^{3-\gamma}).$ Note that we need to consider the space usage of $T_C.$ This can be bounded by  
\begin{align*}
    &\sum_{i=1}^2 n_i \expec{}{\deg_{x_i}^2}\\
    =&\sum_{i=1}^2 n_i \frac{\sum_{d=1}^{\Delta_i}d^{2-\gamma_i}}{s_i}\\
    =&O\left(\sum_{i=1}^2 n_i \Delta_i^{3-\gamma_i}\right)\\
    =&O\left(n\Delta^{3-\gamma}\right)
\end{align*} where $x_i\in V_i$ for $i=1,2.$ The total expected space usage is $O\left(n\Delta^{3-\gamma}\right).$

\end{itemize}
\paragraph{Single-sided power-law bipartite graphs}

In the single-sided power-law model, we have $\gamma_1\in (2, 3)$, $\gamma_2=0$, and $\Delta_1>\Delta_2.$
In this case, $s_1$ is a constant and $s_2=\Delta_2$.
We also have that $m=\Theta(n_1)=\Theta(n_2\Delta_2).$

We define $n=max(n_1, n_2)$, $\Delta=max(\Delta_1, \Delta_2)$.
We choose $ \Delta_2 < k\le \Delta_1$.
We may calculate $\expec{}{\#\Join_{\ge 1}-\#\Join_{\ge k}}$ by \Cref{lem:butterfly_ge_1} and \Cref{lem:butterfly_ge_k}.
    \begin{align}
        &\expec{}{\#\Join_{\ge 1}-\#\Join_{\ge k}}\\
        =&O\left(\left(\Delta_1^{\gamma_1-3}+\Delta_2^{\gamma_2-3}\right)\frac{n_1^2n_2^2}{m^4}\left(k^{3-\gamma_1}\Delta_2^{3-\gamma_2}+k^{3-\gamma_2}\Delta_1^{3-\gamma_1}\right)\right)\\
        =&O\left(\frac{n_1^2n_2^2}{m^4}\left(\left(\frac{k}{\Delta_1}\right)^{3-\gamma_1}\Delta_2^{3-\gamma_2} + \left(\frac{k}{\Delta_2}\right)^{3-\gamma_2}\Delta_1^{3-\gamma_1}+k^{3-\gamma_1}+k^{3-\gamma_2}\right)\right)\\
        =&O\left(\frac{n_2^2}{m^2} \left(\frac{\Delta_1^{6-\gamma_1}}{\Delta_2^3}+\Delta_2^3\right)\right) \tag{$k\le \Delta_1, \Delta_1 > \Delta_2$}\\
        =&O\left(\Delta_2+\left(\frac{\Delta_1^{6-\gamma_1}}{\Delta_2^5}\right)\right). \tag{$\frac{n_2}{m}=\Theta\left(\frac{1}{\Delta_2}\right)$}
    \end{align}
Note that the bound above does not depend on $k$. Thus, it is an upper bound of $\expec{}{\#\Join_{\ge 1}}$. We may set $\alpha>1$ so that \GSI~maintain all butterflies in $T_E$. This results in an expected query time of $\Otil(1)$.
\end{proof}

\balance

\end{document}